\documentclass[11pt,oneside]{article}

\usepackage[top=1in, bottom=1in, left=1in, right=1in]{geometry}
\usepackage{pdflscape}

\usepackage[T1]{fontenc}
\usepackage[utf8]{inputenc}
\usepackage{newtxtext}
\usepackage{booktabs}

\usepackage{amsmath,amsthm,amssymb,mathtools}
\usepackage{bbm}
\usepackage{dsfont}
\usepackage{dutchcal}
\usepackage[mathscr]{euscript}
\usepackage{subcaption}
\usepackage{natbib}
\usepackage{verbatim}
\usepackage{titlesec}
\usepackage{newtxtext}
\usepackage{ushort}

\usepackage{graphicx}
\usepackage{rotating}
\usepackage{float}
\usepackage[percent]{overpic}

\usepackage{tikz}
\usetikzlibrary{chains,shapes.multipart,shapes,calc,fit}
\usetikzlibrary{automata,positioning}

\usepackage{array}
\usepackage{booktabs}
\usepackage{longtable}
\usepackage{tabularx}
\usepackage{multirow}
\usepackage[para,online,flushleft]{threeparttable}

\usepackage{enumerate}
\usepackage[inline]{enumitem}

\usepackage[font=small]{caption}

\usepackage{color}

\usepackage{natbib}
\setcitestyle{round}
\usepackage[hidelinks]{hyperref}
\hypersetup{
    colorlinks=true,
    linkcolor=blue,        % internal links (sections, equations, figures)
    citecolor=blue,        % \cite, \citet, \citep
    urlcolor=blue,         % \url and \href
    filecolor=blue
}
\usepackage{cleveref}
\usepackage{refcount}

\usepackage{algorithm}
\usepackage{algpseudocode}

\usepackage{appendix}
\usepackage{apptools}
\AtAppendix{\renewcommand{\thesection}{Appendix \Alph{section}}}

\usepackage[pagewise]{lineno}

\usepackage{setspace}
\usepackage{changepage}
\usepackage[bottom]{footmisc}
\usepackage[all]{xy}
\usepackage{romannum}
\usepackage{authblk}

\newtheoremstyle{ModifiedStyle}
	{\topsep}{3pt}{}{}{\bfseries}{.}{.5em}{}
\theoremstyle{ModifiedStyle}

\newtheorem{proposition}{Proposition}

\newtheorem{remark}{Remark}

\newtheorem{to-do}{To-Do}

\newcommand{\BSmall}[1]{\mkern-1.7mu\raisebox{-1.2pt}{\scalebox{0.9}{$\scriptscriptstyle B$}}}
\newcommand{\NSmall}[1]{\mkern-1.7mu\raisebox{-1.2pt}{\scalebox{0.9}{$\scriptscriptstyle N$}}}

\makeatletter
\newcommand{\Biggg}{\bBigg@{2.5}}
\newcommand{\vast}{\bBigg@{3}}
\newcommand{\Vast}{\bBigg@{3.5}}
\newcommand{\massive}{\bBigg@{4.5}}
\newcommand{\Massive}{\bBigg@{6}}
\makeatother

\newcommand{\nquad}{\kern-1em}
\newcommand{\thickhline}{%
	\noalign{\ifnum 0=`}\fi \hrule height 1pt
	\futurelet \reserved@a \@xhline}
\newcolumntype{"}{@{\hskip\tabcolsep\vrule width 1pt\hskip\tabcolsep}}
\newcommand{\PreserveBackslash}[1]{\let\temp=\\#1\let\\=\temp}
\newcolumntype{C}[1]{>{\PreserveBackslash\centering}p{#1}}
\newcolumntype{R}[1]{>{\PreserveBackslash\raggedleft}p{#1}}
\newcolumntype{L}[1]{>{\PreserveBackslash\raggedright}p{#1}}

\DeclareMathAlphabet{\mathpzc}{OT1}{pzc}{m}{it}

\makeatletter

\newcommand{\Rom}[1]{\expandafter\@slowromancap\romannumeral #1@}
\makeatother
\newcommand{\customappendixref}[1]{\hyperref[#1]{\Alph{section}}}

\makeatletter
\let\LN@align\align
\let\LN@endalign\endalign
\renewcommand{\align}{\linenomath\LN@align}
\renewcommand{\endalign}{\LN@endalign\endlinenomath}
\let\LN@gather\gather
\let\LN@endgather\endgather
\renewcommand{\gather}{\linenomath\LN@gather}
\renewcommand{\endgather}{\LN@endgather\endlinenomath}
\makeatother

\title{\textbf{Dynamic Scheduling of a Parallel-Server Queueing System: A Computational Method for High-Dimensional Problems}}
\author{Barış Ata\thanks{\texttt{baris.ata@chicagobooth.edu}} \ and Ebru Kaşıkaralar\thanks{\texttt{ebrukasikaralar@gmail.com}}}
\date{{\small \vspace{-10mm} \today}}
\begin{document}

\vspace*{-25mm}
{\let\newpage\relax\maketitle}

%\linenumbers %Uncomment to add line numbers

\begin{abstract}
\textbf{Problem definition:} A key operational challenge for call centers is to decide, in real time, which waiting customer should be served by which available agent. This is known as skill-based routing, and the decision becomes especially difficult in large systems with many customer classes, where standard dynamic programming methods can be computationally intractable.

\textbf{Methodology/results:} Focusing on the Halfin--Whitt heavy-traffic regime and an infinite-horizon discounted cost criterion, we develop a computational method that scales to high-dimensional settings with many customer classes. Our approach begins by deriving an approximating diffusion control problem in the heavy traffic limiting regime. Building on earlier work by \citet{han2018solving}, we develop a simulation-based method to solve this problem, relying heavily on deep neural network techniques. Using this framework, we construct a policy for the original (prelimit) call center scheduling problem. To evaluate performance, we adopt a data-driven approach. Using call center data from a large U.S. bank, we calibrate the model and construct realistic test instances. We then compare the resulting policy with benchmark policies drawn from the literature. Across all test problems considered so far, our policy performs at least as well as or better than the best benchmark identified. Moreover, the method remains computationally feasible in dimensions up to 100, corresponding to call centers with 100 or more distinct customer classes.

\textbf{Managerial implications:} The proposed approach gives managers a scalable way to improve real-time routing decisions in complex service systems. It can help call centers improve service quality and evaluate routing policies before implementing them. More broadly, the paper provides decision makers with a computational framework for studying and managing high-dimensional service systems that are too large for standard exact methods.
\end{abstract}

\pagenumbering{arabic}

\setlength{\abovedisplayskip}{8pt}
\setlength{\belowdisplayskip}{8pt}
\setlength{\abovedisplayshortskip}{8pt}
\setlength{\belowdisplayshortskip}{8pt}
\vspace{-7mm}
% ============================================================
%  SECTION 1: INTRODUCTION
% ============================================================
\section{Introduction}\label{sect:intro}
 Motivated by the skill-based routing problem commonly encountered in call centers, this paper considers the dynamic control of a parallel-server queueing system. Focusing attention on the Halfin--Whitt heavy traffic asymptotic regime and on the infinite-horizon discounted cost criterion, we develop an effective computational method that scales to high-dimensional settings involving many customer classes. To evaluate its performance, we adopt a data-driven approach: using call center data from a large U.S. bank, we calibrate our model and construct realistic test problems. (Our data set is provided by the Service Enterprise Engineering (SEE) Lab
at the Technion, and it is publicly available at the
\href{https://see-center.iem.technion.ac.il/databases/USBank/}{SEE Center U.S. Bank database}.
Accessed on January 17, 2024.) We then compare the policy generated by our method against benchmark policies drawn from the literature.

Modeling and analysis of call center operations have been a vibrant area of research in both operations management and applied probability; see, for example, the surveys by \citet{gans2003telephone}, \citet{aksin2007modern} and \citet{koole2023practice}. A prominent stream within this literature approximates discrete-flow queueing control problems by Brownian control problems under heavy-traffic assumptions, following the seminal work of \citet{halfin1981heavy}; see \citet{harrison2004dynamic} and \citet{atar2004scheduling} for early applications of this approach. Proceeding in the same spirit, we formulate an approximating Brownian control problem in the Halfin--Whitt asymptotic regime for our parallel-server system.

The approximating Brownian control problem we derive is a drift-rate control problem. To compute its solution, we analyze the associated Hamilton-Jacobi-Bellman (HJB) equation, which, in our setting, takes the form of a semilinear partial differential equation (PDE); see \citet{fleming2006controlled}. To numerically solve this PDE, we adapt the deep learning-based approach developed by \citet{han2018solving} for semilinear PDEs. Crucially, we tailor their method to our control problem in three ways: we derive a stochastic identity whose structure follows directly from the HJB equation of our problem, we use domain knowledge to generate 
sample paths that are representative of the system's behavior, and we 
impose shape constraints on the value function and its gradient to improve solution quality.

To be specific, we first derive a stochastic equation that is equivalent to the HJB equation in the sense that a function solves the HJB equation if and only if it satisfies the stochastic equation. We then propose neural network approximations for the value function and its gradient and formulate a loss function based on the resulting discrepancy in the stochastic equation. The training procedure involves minimizing this loss over the neural network parameters, yielding an approximate solution to the HJB equation. This solution is used to construct a policy for the original discrete-flow control problem; see Section \ref{sect:computational_method}.

The primary contribution of this paper is to develop an effective computational method for solving dynamic scheduling problems in high-dimensional parallel-server queueing systems. We evaluate our proposed policy against benchmarks on five test problems. The U.S. Bank call center data naturally leads to a 13-dimensional test problem, but this problem is computationally intractable because the associated Markov decision process (MDP) suffers from the curse of dimensionality. We also consider a variant of this problem, as well as a 100-dimensional test problem. For these problems, the optimal policy is not available. Therefore, we also consider two low-dimensional test problems, whose optimal policy can be computed via standard MDP techniques. Our computational method yields policies whose performance is on par with the optimal policy for these test problems. Similarly, the policies derived via our method outperform the best available benchmark for the high-dimensional test problems. Our computational results also yield useful insights regarding the so-called basic versus nonbasic activities as well as the joint work conservation property; see Section \ref{sect:concluding_remarks}.

The rest of the paper is structured as follows: Section \ref{sect:literature} reviews the related literature. Section \ref{sect:model} introduces the underlying queueing model. Section \ref{sect:brownian_control} derives the diffusion approximation in the Halfin--Whitt regime. Section \ref{sect:equivalent_characterization} derives the key stochastic identity that motivates the loss function for our computational method. Section \ref{sect:computational_method} describes our computational method. Section \ref{sect:data} reviews the U.S. Bank call center data, introduces our test problems and the benchmark policies we use to assess the effectiveness of our method. Section \ref{sect:results} presents the computational results, comparing the performance of the proposed policy against that of the benchmark policies. Section \ref{sect:concluding_remarks}  discusses the role of basic and nonbasic activities and the joint work conservation property. The appendix contains supplementary data tables, details of the offline approximation used to evaluate the auxiliary function $F(x,v)$ and the derivation of the proposed prelimit policy. Finally, an online supplement provides a worked example of the approach we used to construct the 100-dimensional test problem, the MDP formulation and solution approach for the low-dimensional problems, and the implementation details of our method.
\vspace{-5mm}

% ============================================================
%  SECTION 2: LITERATURE REVIEW
% ============================================================
\section{Literature Review}\label{sect:literature}
As noted above, call centers have been studied extensively over the past three decades, driven by the operational complexity of managing large-scale service systems. One key feature of many modern call centers is the presence of caller and agent heterogeneity. Callers are often categorized into distinct classes based on their service needs, while agents are grouped according to their skill sets to serve different types of customers. This many-to-many mapping between customers and agents gives rise to the problem of skill-based routing, where the objective is to dynamically match incoming calls to the qualified agents in real time. 

Skill-based routing has been studied using both applied and theoretical methodologies. On the applied side, simulation is a common approach for evaluating routing policies.  For example, \citet{koole2003routing} analyze the performance of various routing heuristics in both blocking and delay systems using simulation. Similarly, \citet{mehrotra2012routing} conduct a simulation study using real-world customer service data to assess the effectiveness of different routing schemes.

On the theoretical side, much of the literature is grounded in asymptotic analysis. Two asymptotic regimes are commonly considered: the conventional heavy-traffic regime and the many-server Halfin–Whitt regime. In the conventional heavy-traffic regime, \citet{harrison1999heavy} is one of the first skill-based routing models that capture the parallel-server structure inherent in skill-based routing. They propose a discrete-review scheduling policy and conjecture its asymptotic optimality, which is later proved by \citet{ata2005heavy}. In a related stream of research, \citet{bell2001dynamic} study a two-class, two-server $N$-network model and show that a threshold-based continuous review model is asymptotically optimal; also see \citet{pesic2016dynamic}. \citet{ghamami2013dynamic} later extend this $N$-network model to incorporate customer abandonments and show that a two-threshold control policy is asymptotically optimal; also see \citet{rubino2009dynamic}, who study a more general parallel-server network model with abandonments in the heavy-traffic limit.

In the Halfin--Whitt regime, \citet{armony2005dynamic} is one of the first to study skill-based routing in call centers. Armony considers an inverted-V model with a single customer class served by multiple server pools and proposes a Fastest-Server-First (FSF) routing policy, showing that it asymptotically minimizes the steady-state queue length and virtual waiting time. \textcolor{blue}{Atar} (\citeyear{atar2005diffusion},\,\citeyear{atar2005scheduling}) studies parallel-server systems with multiple customer classes and multiple server pools. Focusing on tree-like networks, Atar establishes a joint work conservation property that holds asymptotically in the Halfin--Whitt regime. He also derives asymptotically optimal policies that assume knowledge of the value function and its gradient for the associated control problem. Our work complements Atar's by providing a method to compute the value function and its gradient. \citet{gurvich2009queue} introduce the Queue-and-Idleness-Ratio (QIR) control policy, which dynamically routes arrivals to the server pool with the greatest idleness imbalance and assigns idle servers to the class with the largest queue imbalance. In a companion paper, \citet{gurvich2009scheduling} prove that the QIR policy is asymptotically optimal under convex holding costs and pool-dependent service rates. A recent work by \citet{zhao2024hierarchical} studies the global stability of multiclass queueing networks under the family of QIR policies. In related work, \citet{tezcan2010dynamic} study the $N$-system under the Halfin–Whitt regime with linear holding costs and pool-dependent service rates. They propose a static priority rule in which each idle server selects the queue with the higher holding cost, and each arriving customer is routed to the faster available server. The authors establish that this policy is asymptotically optimal.

The heavy-traffic approach to solving dynamic control problems such as skill-based routing typically involves deriving a limiting Brownian control problem (BCP) and using the solution to the associated HJB equation to propose policies for the original (prelimit) system; see, for example, \textcolor{blue}{Atar} (\citeyear{atar2005diffusion},\,\citeyear{atar2005scheduling}). In this work, we formally derive the BCP for a general parallel-server model and solve the resulting HJB equation using neural network approximations. An important antecedent of our paper is  \citet{atar2005diffusion} who considers a similar parallel-server model under additional restrictions. Crucially, Atar assumes all activities are basic, and under certain conditions, the author restricts attention to policies that are jointly work conserving in the heavy traffic limit. This simplifies his analysis. \textcolor{blue}{Atar} (\citeyear{atar2005diffusion},\,\citeyear{atar2005scheduling}) jointly establish that under suitable conditions the HJB equation has a smooth solution and, using that solution, the author proposes a control policy for a parallel-server queueing system that is asymptotically optimal.  However, implementing this policy in practice requires solving the HJB equation, which is computationally intractable in high dimensions using traditional methods such as finite-element methods. Consequently, prior literature has been limited to low-dimensional examples; for instance, \citet{harrison2004dynamic} study a two-dimensional system, while \citet{kumar2004numerical} and \citet{ata2020dynamic} explore similarly tractable settings. This is where our contribution lies. Namely, we develop a computational method for general parallel-server systems that is effective in high dimensions. We then use the solution to the high-dimensional HJB equation to propose a scheduling policy for the original parallel-server queueing model. While we focus on call center applications, we expect our method to have broader applicability in improving the performance of other service systems, including healthcare delivery operations.

As mentioned above, our computational method builds on the seminal work of \citet{han2018solving}, who develop a neural network-based method to solve semilinear parabolic PDEs. Their work is part of a growing literature that uses deep learning to solve high-dimensional PDEs. Two major methodological frameworks that have emerged in this area are Physics-Informed Neural Networks (PINNs) and methods based on Backward Stochastic Differential Equations (BSDEs).

PINNs, introduced by \citet{raissi2019physics}, are mesh-free methods that are designed to learn both from the training data and the underlying physical laws governing the data, which are often represented by PDEs. These methods approximate the solution, its gradients, and Hessians using automatic differentiation. A single neural network is trained to satisfy the differential operator as well as initial and boundary conditions by minimizing a loss function evaluated at randomly sampled points within a domain. One key challenge of this methodology lies in the high computational cost of computing Hessians through automatic differentiation, particularly in high-dimensional settings. Nevertheless, several techniques have been proposed to accelerate these computations; see, for example, \citet{he2023learning} and \citet{hu2024hutchinson}.

The other class of methods is based on reformulating PDEs as BSDEs, a framework first developed by \citet{pardoux1990adapted}. The Deep BSDE method, pioneered by \citet{han2018solving} for solving semilinear PDEs, approximates the value function $V$ at the initial condition and its gradient $\nabla_{x}V$ at each time step by minimizing a global loss function that strives to match the simulated value function $V$ at the terminal time $T$ with a terminal condition. The solution is represented by a feedforward neural network at each time step, with parameters optimized as part of this global minimization problem. For reviews of this literature, see \citet{weinan2021algorithms}, \cite{beck2023overview}, \citet{chessari2023numerical} and \citet{han2025brief}.

Building on this literature, recent work applies neural network-based PDE solvers to dynamic control problems in operations management. Within this literature, \citet{ata2025dynamic} consider a scheduling problem for a multiclass queueing system with a single-server pool under a finite-horizon total cost criterion. The parallel-server queueing model we consider in this paper represents a significant generalization, both methodologically and practically. Methodologically, the parallel-server model necessitates incorporating nonbasic activities in general; see Section \ref{sect:brownian_control} below for a definition of a nonbasic activity. Also for the single-server pool system studied in \citet{ata2025dynamic}, one can ensure all servers are busy whenever there are more jobs than servers in the system. In addition to efficiently utilizing the servers, this also simplifies the analysis. The analogue of this property for parallel-server systems is called the joint work conservation, whereby one ensures no server in the entire system is idle if the total number of jobs in the system exceeds the total number of servers; see \citet{atar2005diffusion}. Unfortunately, this property cannot be ensured for general parallel-server systems. Thus, we do not restrict attention to policies that satisfy joint work conservation. Indeed, we allow for non-work-conserving policies. The parallel-server system also presents additional challenges, because the loss function used for neural network training is defined through an optimization problem. This leads to a significantly more involved training process than that of \citet{ata2025dynamic}; see Section \ref{sect:computational_method} for details. For other applications of similar neural network-based methods in operations management, see \citet{ata2024drift}, \citet{ata2024singular}, \citet{ata2025maketoorder}, \citet{ata2025analysis}, \citet{ata2025matching} and \citet{ata2025computational}.

Finally, we show the effectiveness of our proposed neural network-based policy through a simulation study, comparing its performance to the best available benchmarks. For the low-dimensional problems, we use the optimal policy derived from the MDP solution as the benchmark. For higher-dimensional problems where solving the MDP is computationally infeasible, we compare our policy against benchmark policies drawn from the existing literature. These include the classical $c\mu$ rule proposed by \citet{cox1961queues}, which prioritizes classes based on the product of holding cost and service rate; the generalized $c\mu$ rule for the parallel-server systems of \citet{mandelbaum2004scheduling}, which assigns servers based on marginal cost reductions under convex delay costs (also see \citet{van1995dynamic}, \citet{ata2013scheduling}); the $c\mu/\theta$ rule introduced by \citet{atar2010cmu}, which incorporates abandonment rates and the FSF policy studied by \citet{armony2005dynamic}. While these benchmarks provide useful points of comparison, it is important to note that they are not known to be optimal for general parallel-server systems, which is the setting we focus on in this work.
\vspace{-3mm}
% ============================================================
%  SECTION 3: MODEL
% ============================================================
\section{Model}\label{sect:model}
We consider a parallel-server queueing model of a telephone call center with $K$ customer classes and $J$ service stations, where each station consists of many servers with identical capabilities. Class $k$ callers arrive to the system according to a Poisson process with rate $\lambda_{k} > 0$. Each caller requires only one service before they exit the system, and only a certain subset of service stations can serve each class. The classes are indexed by $1, \ldots, K$ and the service stations are indexed by $1, \ldots, J$. We let $\mathcal{K}$ denote the set of classes and let $\mathcal{J}$ denote the set of service stations, i.e., $\mathcal{K} = \{1,\ldots,K\}$ and $\mathcal{J} = \{1,\ldots,J\}$.

Additionally, we define the bipartite graph $\mathcal{G} = (\mathcal{V}, \mathcal{E})$ with set of vertices $\mathcal{V} = \mathcal{K} \cup \mathcal{J}$ and set of edges $\mathcal{E}$ that are defined as follows: An edge exists between vertices $k \in \mathcal{K}$ and $j \in \mathcal{J}$ if service station $j$ can serve class $k$ callers. We denote that edge by $(k,j)$, and also refer to it as \textit{activity} $(k,j)$. To facilitate the analysis, we let $\mathcal{K}(j)$ denote the set of customer classes that can be served by station $j \in \mathcal{J}$, i.e., $\mathcal{K}(j) = \{k \in \mathcal{K}: (k,j) \in \mathcal{E}\}$. Similarly, we let $\mathcal{J}(k) = \{j \in \mathcal{J}: (k,j) \in \mathcal{E}\}$, i.e., the set of service stations that can serve callers from class $k \in \mathcal{K}$.

Service times for activity $(k,j) \in \mathcal{E}$ form an i.i.d. sequence of exponential random variables with mean $1/\mu_{kj} > 0$; $\mu_{kj}$ is the corresponding service rate. Also, callers can abandon while waiting for service. The abandonment times of class $k$ callers are modeled as i.i.d. exponential random variables with mean $1/\theta_{k}$, where $\theta_{k}$ is the corresponding abandonment rate. We assume the service times, abandonment times, and the arrival processes are mutually independent. Figure \ref{figure_model} provides a schematic description of the model for $K = 4$ and $J = 3$.
\begin{figure}[!htb]
    \centering
    \includegraphics[scale=0.18]{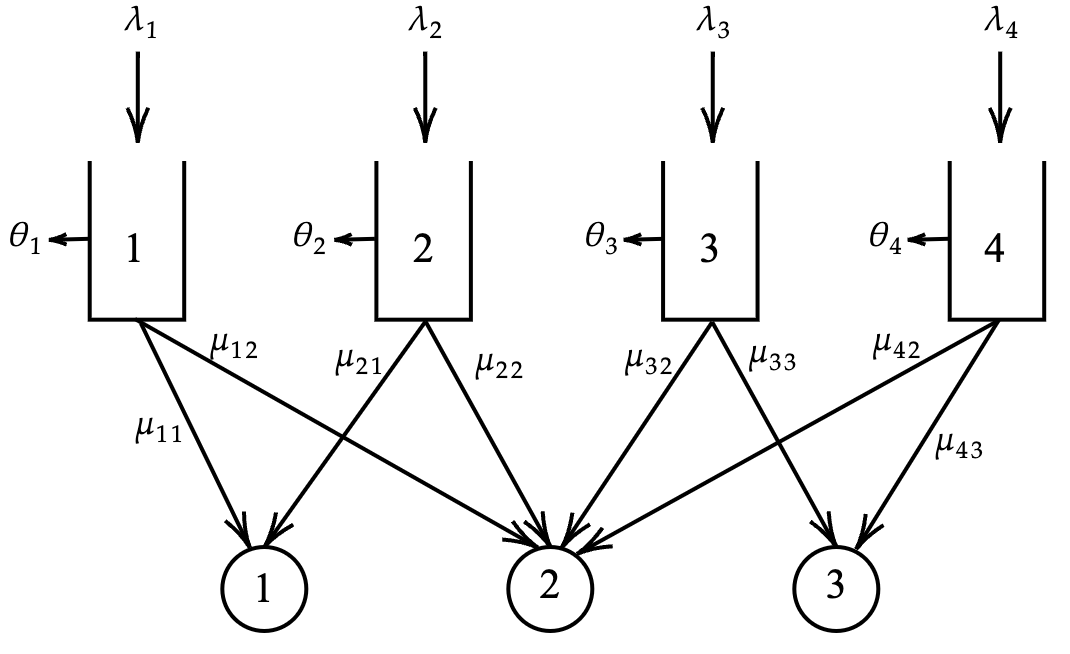}
    \caption{A schematic model of a parallel-server system.}
    \label{figure_model}
\end{figure}

\vspace{-5mm}
The system state is denoted by $X(t) = \left(X_{1}(t), \ldots, X_{K}(t)\right)$, where $X_{k}(t)$ denotes the number of class $k$ callers in the system at time $t$. For $j \in \mathcal{J}$, we let $N_{j}$ denote the number of agents working at service station $j$. The system manager assigns servers to callers in various classes dynamically over time by choosing how many callers to serve using each activity $(k,j) \in \mathcal{E}$. For simplicity, we assume services can be interrupted at any time and resumed later without efficiency loss. Thus, the system manager's control is an $|\mathcal{E}|$-dimensional process, denoted by $\psi = \{\psi_{kj}(t): (k,j) \in \mathcal{E},\,\, t \geq 0\}$, where $\psi_{kj}(t)$ denotes the number of class $k$ jobs assigned to service station $j$ at time $t \geq 0$. Given system state $X(t) = x$, the action $\psi(t)$ at time $t$ must belong to the set $A(x)$ of feasible actions, where 
$$A(x) = \{a \in \mathbb{R}_{+}^{|\mathcal{E}|}: \sum_{j \in \mathcal{J}(k)} a_{kj} \leq x_{k} \,\, \text{for }k \in \mathcal{K}, \text{and}  \sum_{k \in \mathcal{K}(j)} a_{kj} \leq N_{j} \,\, \text{ for } j \in \mathcal{J}\}.$$
The first constraint, $\sum_{j \in \mathcal{J}(k)} a_{kj} \leq x_{k}$, ensures that the number of class $k$ callers in service does not exceed that in the system ($k \in \mathcal{K}$). Similarly, the second constraint, $\sum_{k \in \mathcal{K}(j)} a_{kj} \leq N_{j}$, states that the number of busy servers at station $j$ cannot exceed the total number of servers $N_{j}$ at station $j$ for $j \in \mathcal{J}$. 

In addition, we use $Y_{k}(t)$ to denote the number of class $k$ callers waiting in the queue and $Z_{j}(t)$ to denote the number of idle servers at station $j$, both at time $t$. These processes jointly satisfy the following: For $k \in \mathcal{K},\, j \in \mathcal{J}$ and $t \geq 0$: 
\begin{align}
    &Y_{k}(t) + \sum_{j \in \mathcal{J}(k)} \psi_{kj}(t) = X_{k}(t), \label{eq:prelimit_relation1}\\
    &Z_{j}(t) + \sum_{k \in \mathcal{K}(j)} \psi_{kj}(t) = N_{j}, \label{eq:prelimit_relation2}\\
    & Y_{k}(t) \geq 0, \quad \, Z_{j}(t) \geq 0, \quad X_{k}(t) \geq 0, \quad \psi_{kj}(t) \geq 0. \label{eq:prelimit_relation3}
\end{align}
Next, we define the arrival, service, and abandonment processes formally. Denoting the cumulative number of class $k$ arrivals until time $t$ by $A_{k}(t)$, we set
\begin{align}
    A_{k}(t) = N_{k}^{a}\left(\lambda_{k}\,t\right), \quad k \in \mathcal{K}, \quad t \geq 0,
\end{align}
where $N_{k}^{a}(\cdot)$ is a rate-one Poisson process. Similarly, given a control $\psi$, the cumulative number of class $k$ callers served at station $j$ up to time $t$, denoted by $S_{kj}(t)$, is given as follows: 
\begin{align}
    S_{kj}(t) = N_{kj}^{s}\left(\int_{0}^{t} \mu_{kj} \psi_{kj}(s)ds\right), \quad (k,j) \in \mathcal{E}, \quad t \geq 0,
\end{align}
where $N_{kj}^{s}(\cdot)$ is a rate-one Poisson process. Lastly, we let $R_{k}(t)$ denote the cumulative number of class $k$ callers who abandon until $t$ and model it as follows: 
\begin{align}
    R_{k}(t) = N_{k}^{b}\left(\int_{0}^{t} \theta_k \,Y_{k}(s)ds\right), \quad k \in \mathcal{K}, \quad t \geq 0,
\end{align}
where $N_{k}^{b}(\cdot)$ is a rate-one Poisson process; and $N_{k}^{a}, N_{kj}^{s}$ and $N_{k}^{b}$ for $k \in \mathcal{K}$ and $j \in \mathcal{J}$ are mutually independent. Then one can describe the system dynamics as follows:
\begin{align}
    X_{k}(t) = X_{k}(0) + A_{k}(t) - \sum_{j \in \mathcal{J}(k)} S_{kj}(t) - R_{k}(t), \quad k \in \mathcal{K},\,\, t \geq 0.  \label{eq:state_process}
\end{align}
The economic primitives of our model are the holding and abandonment cost parameters. For class $k$, the holding cost rate is $h_{k}$ per caller per unit of time $(k = 1,\ldots,K)$. Similarly, the abandonment cost is $p_{k}$ per class $k$ caller who abandons. We define the effective cost rate for class $k$, denoted by $c_{k}$, as $c_{k} = h_{k} + \theta_{k}\, p_{k} > 0,$ for $k \in \mathcal{K}$.
%\begin{equation}
%c_{k} = h_{k} + \theta_{k}\, p_{k} > 0, \quad k \in \mathcal{K}.   \label{eqn:defn:cost:param}  
%\end{equation}

Given a control $\psi = \{\psi(t): t\geq 0\}$, and the resulting state, queue-length, and idleness processes, $X(t)$, $Y(t)$ and $Z(t)$, respectively, the instantaneous cost rate is $c\cdot Y(t)$. Thus, conditional on $X(0) = x$, the expected present value of the total costs under control $\psi$, denoted by $J(x;\psi)$, is given as follows:
\vspace{-1mm}
\begin{equation}
    J(x;\psi) = \mathbb{E}_{x}^{\psi}\left\{\int_{0}^{\infty} e^{-\alpha s}\, c \cdot Y(s)ds \right\}, \label{eqn_cost_of_policy}
\end{equation}
where $\alpha > 0$ is the interest rate for discounting and $\mathbb{E}_{x}^{\psi}$ denotes the conditional expectation starting in state $x$ under policy $\psi$. The problem described in this section is an MDP. Although standard dynamic programming techniques can be used to solve it, this approach becomes computationally infeasible for problems with high-dimensional state vectors due to the curse of dimensionality. To overcome this challenge, we adopt a novel computational approach based on deep neural networks. Specifically, we first derive a diffusion approximation of the original control problem and then study that problem formally using deep neural network approximations.
\vspace{-4mm}
% ============================================================
%  SECTION 4: BROWNIAN CONTROL PROBLEM
% ============================================================
\section{The approximating Brownian control problem}\label{sect:brownian_control}
To derive a tractable approximation to the original control problem described in Section \ref{sect:model}, we consider a sequence of systems indexed by $r = 1,2,\ldots,$ each having the structure described in Section \ref{sect:model}. A superscript of $r$ is attached to various stochastic processes to emphasize their dependence on it. We assume that the arrival, service, and abandonment rates vary with $r$ as follows: For $k \in \mathcal{K}$ and $(k,j) \in \mathcal{E}$: 
\vspace{-5mm}
\begin{align}
    &\lambda_{k}^{r} = r \lambda_{k} + \sqrt{r} \zeta_{k} + o(\sqrt{r}), \quad \mu_{kj}^{r} = \mu_{kj} \,\, \text{ and } \,\, \theta_{k}^{r} = \theta_{k}, \label{eq:def:scaled_lambda}
\end{align}
where $\zeta_{k}$ is a given constant (that can be estimated from the data). Similarly, the number of agents varies with $r$ as  $N_{j}^{r} = r \nu_{j}$ for $j \in \mathcal{J}$.
%\begin{equation}
%    N_{j}^{r} = r \nu_{j}, \quad j \in \mathcal{J}. \label{eqn:def:scaled_agent_no} 
%\end{equation}
Our approximation assumes a critically loaded sequence of systems. To describe such a sequence of systems, we introduce a static planning problem {(cf. \citet{harrison1999heavy}, \citet{harrison2000brownian}, \citet{atar2005scheduling}): Choose $\rho$ and $\xi = \{\xi_{kj}: (k,j) \in \mathcal{E}\}$ so as to
\vspace{-5mm}
\begin{align}
&\text{Minimize } \quad \rho  \label{lp_obj}\\
&\text{subject to} \nonumber\\
&\sum_{j \in \mathcal{J}(k)} \nu_{j}\mu_{kj}\xi_{kj} = \lambda_{k}, \quad  k \in \mathcal{K}, \label{lp_constraint1}\\
&\sum_{k \in \mathcal{K}(j)} \xi_{kj} \leq \rho, \quad \quad \quad \,\,\,\, j \in \mathcal{J}, \label{lp_constraint2}\\
&\xi_{kj} \geq 0, \quad \quad \quad \quad \quad \quad \,\, (k,j) \in \mathcal{E}. \label{lp_constraint3}
\end{align}
Here, $\xi_{kj}$ denotes the nominal fraction of server station $j$'s service capacity that is allocated to class $k$ in the long run, $(k,j) \in \mathcal{E}$. Crucially, we assume the system is a balanced, high-volume system, and that the model primitives satisfy the following assumption.\\
\\
\textbf{Heavy Traffic Assumption.} There is a unique optimal solution $(\xi^{*}, \rho^{*})$ to the static planning problem that satisfies $\rho^{*} = 1$ and 
\vspace{-5mm}
\begin{equation}
    \sum_{k \in \mathcal{K}(j)} \xi_{kj}^{*} = 1, \quad j \in \mathcal{J}. \label{assumption:heavy_traffic}
\end{equation}
Under the foregoing assumptions, on the fluid scale, we define the nominal number of class $k$ callers in the system as follows:
\vspace{-5mm}
\begin{equation}
    x_{k}^{*} = \sum_{j \in \mathcal{J}(k)} \xi_{kj}^{*} \nu_{j}, \quad k \in \mathcal{K}. \label{eq:nominal_no_system}
\end{equation}
Similarly, the nominal number of class $k$ callers at service station $j$ is given as 
\begin{equation}
    \psi_{kj}^{*} = \xi_{kj}^{*} \nu_{j}, \quad k \in \mathcal{K}, \, j \in \mathcal{J}. \label{eq:nominal_no_service} 
\end{equation}

Following \citet{harrison1999heavy}, we call an activity $(k,j) \in \mathcal{E}$ \textit{basic} if $\xi_{kj}^{*} > 0$, and \textit{nonbasic} if $\xi_{kj}^{*} = 0$. To facilitate the analysis, we partition the set $\mathcal{E}$ of activities into two sets $\mathscr{B}$ and $\mathscr{N}$ (mnemonic for basic and nonbasic, respectively), where $\mathscr{B} = \{(k,j) \in \mathcal{E}: \xi_{kj}^{*} > 0 \}$ and $\mathscr{N} = \{(k,j) \in \mathcal{E}: \xi_{kj}^{*} = 0 \}$.

Next, we introduce the scaled state, control, queue length, and idleness processes $\hat{X}^{r}, \hat{\psi}^{r}, \hat{Y}^{r}, \hat{Z}^{r}$, respectively, as follows: For $t \geq 0$ and $r \geq 1$, let
\begin{align}
    \hat{X}_{k}^{r}(t) 
    &= \frac{X_{k}^{r}(t) - r x_{k}^{*}}{\sqrt{r}},
    \quad \quad k \in \mathcal{K}, \label{eq:scaled_state}\\
    \hat{\psi}_{kj}^{r}(t) 
    &= \frac{\psi_{kj}^{r}(t) - r \psi_{kj}^{*}}{\sqrt{r}},
    \quad \, (k,j) \in \mathcal{E}, \label{eq:scaled_control}\\
    \hat{Y}_{k}^{r}(t) 
    &= \frac{Y_{k}^{r}(t)}{\sqrt{r}},
    \quad k \in \mathcal{K},
    \qquad
    \hat{Z}_{j}^{r}(t) 
    = \frac{Z_{j}^{r}(t)}{\sqrt{r}},
    \quad j \in \mathcal{J}. \label{eq:scaled_queue_idleness}
\end{align}
Using the scaled processes, one can derive \eqref{eq:const:scaled1}--\eqref{eq:const:scaled3} below from Equations \eqref{eq:prelimit_relation1}--\eqref{eq:prelimit_relation3}. For $t \geq 0$, $k \in \mathcal{K}$ and $j \in \mathcal{J}$, we have that
\begin{align}
    &\hat{Y}_{k}^{r}(t) + \sum_{j \in \mathcal{J}(k)} \hat{\psi}_{kj}^{r}(t) = \hat{X}_{k}^{r}(t), \label{eq:const:scaled1}\\
    &\hat{Z}_{j}^{r}(t) + \sum_{k \in \mathcal{K}(j)} \hat{\psi}_{kj}^{r}(t) = 0, \label{eq:const:scaled2} \\
    &\hat{Y}_{k}^{r}(t) \geq 0,\quad \hat{Z}_{j}^{r}(t) \geq 0. \label{eq:const:scaled3}
\end{align}
Equations \eqref{eq:const:scaled1}--\eqref{eq:const:scaled3} imply that the scaled control $\hat{\psi}^{r}$ satisfies the following for $k \in \mathcal{K}$, $j \in \mathcal{J}$ and $t \geq 0$:
\begin{align}
    &\sum_{j \in \mathcal{J}(k)} \hat{\psi}^{r}_{kj}(t) \leq \hat{X}^{r}_{k}(t), \label{feasible_constraint1}\\
    &\sum_{k \in \mathcal{K}(j)} \hat{\psi}^{r}_{kj}(t) \leq 0. \label{feasible_constraint2}
\end{align}
In what follows, we restrict attention to control policies that satisfy the following for $t \geq 0$ (see Equation (\ref{eq:scaled_control}) and note from the sets $\mathscr{N}$ and $\mathscr{B}$ that $\psi_{kj}^{*}$ = 0 for $(k,j) \in \mathscr{N}$ and $\psi_{kj}^{*} > 0$ for $(k,j) \in \mathscr{B}$): 
\begin{align}
    &\psi_{kj}^{r}(t) = r\psi^{*}_{kj} + \sqrt{r}\hat{\psi}^{r}_{kj}(t) + o(\sqrt{r}), \quad  (k,j) \in \mathscr{B}, \label{eq:service_basic1} \\
    &\psi_{kj}^{r}(t) = \sqrt{r}\hat{\psi}^{r}_{kj}(t), \quad \quad \quad \quad \quad \quad \quad \quad \, (k,j) \in \mathscr{N}, \label{eq:service_nonbasic1} \\ 
    &\hat{\psi}^{r}_{kj}(t) \in \mathbb{R}, \quad \quad \quad \quad \quad \quad \quad \quad \quad \quad \quad \,\,\, (k,j) \in \mathscr{B},  \label{eq:service_basic}\\
    &\hat{\psi}^{r}_{kj}(t) \geq 0, \quad \quad \quad \quad \quad \quad \quad \quad \quad \quad \quad \,\,\,\, (k,j) \in \mathscr{N}.
    \label{eq:service_non_basic}
\end{align}

For such policies, we now derive the infinitesimal drift and covariance of the scaled state process $\hat{X}^{r}$ to facilitate the formal derivation of its diffusion limit. In particular, for $t \geq 0$, $k \in \mathcal{K}$, $l\neq k$ and small $h > 0$, the following holds:
\begin{align}  &\mathbb{E}\left[\hat{X}_{k}^{r}(t + h) - x_{k} \mid \hat{X}^{r}(t) = x\right] = \left[\zeta_{k} - \sum_{j \in \mathcal{J}(k)} \mu_{kj} \hat{\psi}_{kj}^{r}(t) - \theta_{k} \hat{Y}_{k}^{r}(t)\right]h + o(h), \label{eq:infinitesimal_drift}\\
&\mathbb{E}\left[(\hat{X}_{k}^{r}(t + h) - x_{k})^{2} \mid \hat{X}^{r}(t) = x \right] = 2\lambda_{k}h + o(h), \label{eq:infinitesimal_variance}\\
& \mathbb{E}\left[(\hat{X}_{k}^{r}(t + h) - x_{k})(\hat{X}_{l}^{r}(t + h) - x_{l}) \mid \hat{X}^{r}(t) = x\right] = o(h). \label{eq:infinitesimal_covariance}
\end{align}

Taking the formal limit as $r \rightarrow \infty$, denoting the weak limit of $(\hat{X}^{r}, \hat{\psi}^{r}, \hat{Y}^{r}, \hat{Z}^{r})$ by $(\hat{X}, \hat{\psi}, \hat{Y}, \hat{Z})$ and using Equation \eqref{eq:const:scaled1}, we deduce from Equations \eqref{eq:infinitesimal_drift}--\eqref{eq:infinitesimal_covariance} that the limiting state process $\hat{X}$ satisfies the following: For $k \in \mathcal{K}$ and $t \geq 0$, 
\begin{equation}
    d \hat{X}_{k}(t) = \left[\zeta_{k} - \theta_{k}\hat{X}_{k}(t) + \sum_{j \in \mathcal{J}(k)} (\theta_{k} - \mu_{kj})\,\hat{\psi}_{kj}(t)\right] dt + \sqrt{2\lambda_{k}}\,dB_{k}(t), \label{eq:state_process_limit}
\end{equation}
where $B(t) = \left(B_{k}(t)\right)$ is a $K$-dimensional standard Brownian motion. 

Also, it follows from Equations \eqref{eq:const:scaled1}--\eqref{eq:const:scaled3} that, the limiting control, queue-length, and idleness process, $\hat{\psi}$, $\hat{Y}$, and $\hat{Z}$, respectively, satisfy the following: For $k \in \mathcal{K}$ and $j \in \mathcal{J}$,
\begin{align}
    &\hat{Y}_{k}(t) = \hat{X}_{k}(t) - \sum_{j \in \mathcal{J}(k)}\hat{\psi}_{kj}(t), \quad t \geq 0,\\
    &\hat{Z}_{j}(t) = - \sum_{k \in \mathcal{K}(j)} \hat{\psi}_{kj}(t), \qquad t \geq 0,\\
    &\hat{Y}(t) \geq 0, \quad \hat{Z}(t) \geq 0.
\end{align}
To minimize technical complexity, we restrict attention to stationary Markov controls, i.e., $\hat{\psi}(t) = \psi(\hat{X}(t))$ for $t \geq 0,$
%\begin{align}
%    \hat{\psi}(t) = \psi(\hat{X}(t)) \quad \text{for} \quad t \geq 0,
%\end{align}
where $\psi : \mathbb{R}^{K} \to \mathbb{R}^{|\mathcal{E}|}$ is a measurable function that we refer to as the control, hereafter. For a control $\psi$ to be admissible, it must satisfy certain restrictions. More specifically, we require that $\psi(x) \in \Psi(x)$ for $x \in \mathbb{R}^{K}$,
%\begin{equation*}
%    \psi(x) \in \Psi(x) \quad \text{for} \quad x \in \mathbb{R}^{K},
%\end{equation*}
where the set $\Psi(x)$ is defined as 
\begin{equation}
\label{eq:set_feasible_policies}
\Psi(x) = \left\{ \psi \in \mathbb{R}^{|\mathcal{E}|} : 
\begin{array}{l}
\sum_{j \in \mathcal{J}(k)} \psi_{kj} \leq x_{k},\, \quad \forall k \in \mathcal{K}, \\
\sum_{k \in \mathcal{K}(j)} \psi_{kj} \leq 0, \,\,\,\, \quad \forall j \in \mathcal{J}, \\
\psi_{kj} \geq 0, \quad \quad \quad \quad \,\, (k, j) \in \mathscr{N}. 
\end{array}
\right\}, \quad x \in \mathbb{R}^{K}.
\end{equation}
Note that the restrictions defining the set $\Psi(x)$ follow from Equations \eqref{feasible_constraint1}--\eqref{feasible_constraint2} and \eqref{eq:service_basic}--\eqref{eq:service_non_basic}. 

For notational simplicity, we let $\sigma_{k} = \sqrt{2\lambda_{k}}$ for $k \in \mathcal{K}$. Also, we define the drift function $b: \mathbb{R}^{K} \times \mathbb{R}^{|\mathcal{E}|} \rightarrow \mathbb{R}^{K}$, where 
\begin{align}
    b_{k}\,(x, a) = \zeta_{k} - \sum_{j \in \mathcal{J}(k)} \mu_{kj} a_{kj} - \theta_{k}\Big(x_{k} - \sum_{j \in \mathcal{J}(k)} a_{kj}\Big), \label{eq:drift_function}
\end{align}
for $(x,a) \in \mathbb{R}^{K} \times \mathbb{R}^{|\mathcal{E}|}$ and $k = 1,\ldots,K$. Then letting $\sigma = \operatorname{diag}(\sigma_{1},\ldots,\sigma_{K})$ denote the covariance matrix, the evolution of the state process $\hat{X}$ under an admissible control $\psi$ can be succinctly described as follows:
\vspace{-5mm}
\begin{equation}
    d \hat{X}(t) = b\,(\hat{X}(t),\hat{\psi}(t))\,dt + \sigma dB(t), \quad t \geq 0.
\end{equation}
\vspace{-2mm}
\noindent Additionally, we define the cost-rate function $c:\mathbb{R}^{K} \times \mathbb{R}^{|\mathcal{E}|} \rightarrow \mathbb{R}$ as
\begin{equation}
    c\,(x,a) = \sum_{k \in \mathcal{K}} c_{k}\Big(x_{k} - \sum_{j \in \mathcal{J}(k)} a_{kj}\Big), \label{eq:instantaneous_cost}
\end{equation}
where $x_{k} - \sum_{j \in \mathcal{J}(k)}a_{kj}$ corresponds to (scaled) class $k$ queue length when the system state is $x$ and action $a$ is taken.
Then, given an admissible control $\psi$ and the limiting system state $\hat{X}(t) = x$, the instantaneous cost rate is given as $c\,(x, \psi(x))$. Therefore, the expected present value of the total discounted cost under an admissible policy $\psi$, given the initial state $\hat{X}(0) = x$, denoted by $\hat{J}(x;\psi)$, is given as follows:
\begin{equation}
    \hat{J}(x;\psi) = \mathbb{E}_{x}^{\psi} \left\{\int_{0}^{\infty} e^{- \alpha s}\,c\left(\hat{X}(s), \psi(s)\right)ds\right\},  \label{eqn:j_definition}
\end{equation}
\color{black}
where $\alpha > 0$ is the interest rate and $\mathbb{E}_{x}^{\psi}$ denotes the conditional expectation starting in state $x$ under control $\psi$. We now define the optimal value function as 
\begin{equation}
    V(x) = \inf \hat{J}(x;\psi), \label{eqn:value_function_definition}
\end{equation}
where the infimum is taken over the class of admissible policies. 

Next, we formally write the associated HJB equation to characterize an optimal policy; see \citet{fleming2006controlled}. To this end, we define the differential operator $\mathcal{L}$ and the function $\mathcal{H}$ as follows:
\begin{align}
    &\mathcal{L} = \sum_{k = 1}^{K} \lambda_{k}\frac{\partial^{2}}{\partial x_{k}^{2}} \quad \text{and} \quad \mathcal{H}(x,p) = \inf_{a \in \Psi(x)}\left[b\,(x,a)\cdot p + c\,(x,a)\right], \label{eq:pde_derivation}
\end{align}
where $b\,(x, a)$ is given in Equation \eqref{eq:drift_function}. Then the HJB equation involves finding a sufficiently smooth function $V(x)$ that solves the following PDE: 
\begin{align*}
    &\mathcal{L} V(x) + \mathcal{H}(x, \nabla V(x)) - \alpha V(x)= 0, \quad x \in \mathbb{R}^{K}.
\end{align*}
Specifically, the HJB equation is considered on $\mathbb{R}^{K}$ with a polynomial growth condition. We let $C^{2}(\mathbb{R}^{K})$ denote the class of functions that are twice continuously differentiable over $\mathbb{R}^{K}$ and set
\begin{equation}
    C_{\text{pol}} := \left\{f \in C^{2}\left(\mathbb{R}^{K}\right): \exists \,\alpha, \beta > 0\, \text{ such that } \, |f(x)| \leq \alpha \left(1+|x|^{\beta}\right) \,\, \forall x \in \mathbb{R}^{K}\right\}.
\end{equation}
That is, $C_{\text{pol}}$ denotes the class of $C^2$ functions with polynomial growth. One seeks a smooth solution to the HJB equation that belongs to this class. Questions of existence and uniqueness for such solutions have been studied rigorously in \citet{atar2005diffusion} under additional assumptions; see also \citet{gilbarg2001elliptic}. These theoretical issues are beyond the scope of this paper. Accordingly, we assume that a solution $V \in C_{\text{pol}}$ to the HJB equation exists and focus instead on its computation in the high-dimensional setting using deep learning. This approach ultimately yields an effective control policy for the parallel-server queueing network; see Section \ref{sect:computational_method}.

To facilitate the analysis that follows, we write the HJB equation in a more explicit form. Combining equations \eqref{eq:drift_function} and \eqref{eq:pde_derivation}, we obtain the following: For $x \in \mathbb{R}^{K}$,
\begin{align}
    \begin{split}
        \sum_{k =1}^{K} \lambda_{k} \frac{\partial^{2}V(x)}{\partial x_{k}^{2}} &+ \sum_{k=1}^{K} c_{k}x_{k} + \sum_{k =1}^{K}(\zeta_{k} - \theta_{k}x_{k})\frac{\partial V(x)}{\partial x_{k}} \\
        & - \sup_{\psi \in \Psi(x)} \left[\sum_{k = 1}^{K} \sum_{j \in \mathcal{J}(k)} \left(c_{k} + (\mu_{kj} - \theta_{k})\frac{\partial V(x)}{\partial x_{k}}\right)\psi_{kj}\right] - \alpha V(x) = 0. \label{eqn:HJB_PDE}
    \end{split}
\end{align}
%Similarly, to solve for the expected discounted cost $\hat{J}(x;\psi)$ under an arbitrary policy $\psi$, one needs to consider the following PDE: For $x \in \mathbb{R}^{K}$,
%\begin{equation}
%    \sum_{k = 1}^{K}\lambda_{k}\frac{\partial^{2}\hat{J}(x;\psi)}{\partial x_{k}^{2}}  + \sum_{k = 1}^{K} b_{k}(x,\psi)\frac{\partial \hat{J}(x;\psi)}{\partial x_{k}} + c(x,\psi) - \alpha \hat{J}(x;\psi) = 0. \label{eqn:hjb_arbitrary}
%\end{equation}

% ============================================================
%  SECTION 5: EQUIVALENT CHARACTERIZATION
% ============================================================
\section{An equivalent characterization of the value function} \label{sect:equivalent_characterization}
In this section, we prove a key identity, Equation (\ref{eq:key_identity}) below, which serves as an alternative characterization of the value function. Section \ref{sect:computational_method} uses this identity to formulate the loss function of our computational method which relies heavily on neural networks. Our method begins by specifying a \textit{reference policy}, denoted by $\tilde{\psi}$. This is a nominal policy, set at the beginning but subject to modification based on computational insights, which we use to generate the sample paths of the system state. Intuitively, our goal is to choose a reference policy whose paths tend to occupy parts of the state space that we expect the optimal policy to visit frequently.  In our computational study, we consider the following reference policies motivated by the literature: the $c\mu$ rule \citep{cox1961queues}, the $c\mu/\theta$ rule \citep{atar2010cmu}, and the fastest-server-first (FSF) rule \citep{armony2005dynamic}; see Section \ref{sect:computational_method} for details.

The corresponding \textit{reference process}, denoted by $\tilde{X}$, satisfies the following: For $k \in \mathcal{K}$ and $t \geq 0$,
\begin{equation}
    d \tilde{X}_{k}(t) = \Big(\zeta_{k} - \theta_{k}\tilde{X}_{k}(t) + \sum_{j \in \mathcal{J}(k)} (\theta_{k} - \mu_{kj})\,\tilde{\psi}_{kj}(\tilde{X}(t))\Big) dt + \sqrt{2\lambda_{k}}\,dB_{k}(t). \label{eq:training_sde}
\end{equation}
As a preliminary to introducing the key identity, we next define an auxiliary function $F(\cdot,\cdot):\mathbb{R}^{K}\times\mathbb{R}^{K}\rightarrow\mathbb{R}$, where
\begin{equation}
    F(x,v) =  \sum_{k = 1}^{K}\sum_{j \in \mathcal{J}(k)}(\theta_{k} - \mu_{kj})\,\tilde{\psi}_{kj}(x)\,v_{k} - \sum_{k = 1}^{K} c_{k} x_{k} + \sup_{\psi \in \Psi(x)}\Big[\sum_{k = 1}^{K}\sum_{j \in \mathcal{J}(k)}\Big(c_{k} + (\mu_{kj} - \theta_{k})\,v_{k}\Big)\psi_{kj}\Big]. \label{eq:f_function} 
\end{equation}
The following proposition derives the key identity; see \ref{app:proof_proposition} for its proof.

\begin{proposition}\label{prop_key_identity}
If $V(\cdot)$ satisfies the HJB equation (\ref{eqn:HJB_PDE}), then it satisfies the following identity almost surely for any $T > 0$:
\begin{equation}
    e^{-\alpha T}V(\tilde{X}(T)) - V(\tilde{X}(0)) = \int_{0}^{T} e^{-\alpha t}\nabla V(\tilde{X}(t))\,\cdot \sigma dB(t) + \int_{0}^{T} e^{-\alpha t} F\left(\tilde{X}(t), \nabla V(\tilde{X}(t))\right)dt. \label{eq:key_identity}
\end{equation}
\end{proposition}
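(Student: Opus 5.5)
The plan is to apply It\^o's formula to the discounted process $e^{-\alpha t}V(\tilde{X}(t))$ on $[0,T]$ and then use the HJB equation (\ref{eqn:HJB_PDE}) to replace the second-order terms by first-order terms and the function $F$. Since $V\in C_{\text{pol}}\subset C^{2}(\mathbb{R}^{K})$ and $\tilde{X}$ solves (\ref{eq:training_sde}) with the constant diagonal diffusion matrix $\sigma=\operatorname{diag}(\sqrt{2\lambda_{1}},\ldots,\sqrt{2\lambda_{K}})$, It\^o's formula applies pathwise (almost surely). We assume the reference policy $\tilde{\psi}$ is measurable and yields a (weak) solution of (\ref{eq:training_sde}). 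It\^o's formula gives
\begin{equation*}
e^{-\alpha T}V(\tilde{X}(T)) - V(\tilde{X}(0)) = \int_{0}^{T} e^{-\alpha t}\Big[-\alpha V + \tilde{b}\cdot\nabla V + \tfrac{1}{2}\sum_{k}\sigma_{k}^{2}\partial_{kk}V\Big](\tilde{X}(t))\,dt + \int_{0}^{T} e^{-\alpha t}\nabla V(\tilde{X}(t))\cdot\sigma\,dB(t).
\end{equation*}
Here $\tilde{b}_{k}(x)=\zeta_{k}-\theta_{k}x_{k}+\sum_{j\in\mathcal{J}(k)}(\theta_{k}-\mu_{kj})\tilde{\psi}_{kj}(x)$. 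Since $\tfrac12\sigma_{k}^{2}=\lambda_{k}$, the second-order term is exactly $\mathcal{L}V$.

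Next, solve (\ref{eqn:HJB_PDE}) for $\mathcal{L}V - \alpha V$ at each point $x$:
\begin{equation*}
\mathcal{L}V(x)-\alpha V(x) = -\sum_{k}c_{k}x_{k}-\sum_{k}(\zeta_{k}-\theta_{k}x_{k})\partial_{k}V(x)+\sup_{\psi\in\Psi(x)}\Big[\sum_{k}\sum_{j\in\mathcal{J}(k)}\big(c_{k}+(\mu_{kj}-\theta_{k})\partial_{k}V(x)\big)\psi_{kj}\Big].
\end{equation*}
Substitute this into the $dt$ integrand. The terms $(\zeta_{k}-\theta_{k}x_{k})\partial_{k}V$ coming from $\tilde{b}\cdot\nabla V$ cancel against those from the HJB equation. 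What remains is
\begin{equation*}
\sum_{k}\sum_{j\in\mathcal{J}(k)}(\theta_{k}-\mu_{kj})\tilde{\psi}_{kj}(x)\,\partial_{k}V(x) - \sum_{k}c_{k}x_{k} + \sup_{\psi\in\Psi(x)}[\cdots].
\end{equation*}
By (\ref{eq:f_function}), this is precisely $F(x,\nabla V(x))$. Evaluating at $x=\tilde{X}(t)$ yields (\ref{eq:key_identity}).

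The main technical point is justifying that the integrals are well defined, and it is minor. The stochastic integral only requires $\int_{0}^{T}|e^{-\alpha t}\sigma\nabla V(\tilde{X}(t))|^{2}dt<\infty$ almost surely. This holds because $\nabla V$ is continuous and $\tilde{X}$ has continuous paths, so the integrand is bounded on $[0,T]$; hence the integral is a local martingale, which suffices for an almost-sure identity. For the $dt$ integral, $F(\tilde{X}(t),\nabla V(\tilde{X}(t)))$ is measurable and pathwise integrable. This follows because it equals the continuous expression $(\mathcal{L}V-\alpha V+\tilde{b}\cdot\nabla V)(\tilde{X}(t))$ up to the $\tilde{\psi}$ term, and that term is locally bounded provided $\tilde{\psi}$ is locally bounded. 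That is true for the reference policies considered, since $\Psi(x)$ bounds the controls in terms of $x$. Finiteness of the supremum is guaranteed implicitly by the HJB equation holding with finite $V$.
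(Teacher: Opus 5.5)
Your proposal is correct and follows the paper's proof essentially verbatim. You apply It\^o's formula to $e^{-\alpha t}V(\tilde{X}(t))$, solve the HJB equation \eqref{eqn:HJB_PDE} for $\mathcal{L}V-\alpha V$, cancel the $(\zeta_k-\theta_k x_k)\,\partial V/\partial x_k$ terms, and recognize what remains as $F(x,\nabla V(x))$ from \eqref{eq:f_function}.

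One small correction to a side remark: your claim that $\Psi(x)$ bounds the controls in terms of $x$ is wrong for basic activities, which are unrestricted below in $\Psi(x)$; this is why the paper adds \eqref{eq:lp4_off_lp} when computing the reference policies. You do not need that claim, however. You already have $F(x,\nabla V(x))=(\mathcal{L}V-\alpha V+\tilde{b}\cdot\nabla V)(x)$ exactly, and $\int_0^T|\tilde{b}(\tilde{X}(t))|\,dt<\infty$ is part of what it means for $\tilde{X}$ to solve \eqref{eq:training_sde}, so pathwise integrability of the $dt$ term follows directly.
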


Proposition \ref{prop_key_identity} motivates the loss function of our computational method (see Section \ref{sect:computational_method}). The following result can be viewed as its converse. Its proof is similar to that of Proposition 3 of \citet{ata2024drift} and requires only minor modifications. As such, it is omitted.

\begin{proposition} \label{prop_key_identity_converse}
Suppose that $V: \mathbb{R}^{K} \rightarrow \mathbb{R}$ is a $C^{2}$ function, $G: \mathbb{R}^{K} \rightarrow \mathbb{R}^{K}$ is continuous and $V, \nabla V,$ and $G$ all have polynomial growth. Also, assume that the following identity holds almost surely for some fixed $T > 0$ and every $\tilde{X}(0) = x \in \mathbb{R}^{K}$:
\begin{equation}
    e^{-\alpha T}V(\tilde{X}(T)) - V(\tilde{X}(0)) = \int_{0}^{T} e^{-\alpha t} G (\tilde{X}(t))\,\cdot\, \sigma \,dB(t) + \int_{0}^{T} e^{-\alpha t} F(\tilde{X}(t), G(\tilde{X}(t)))\,dt. \label{eq:key_identity2}
\end{equation}
Then $G(\cdot) = \nabla V(\cdot)$ and $V$ satisfies the HJB equation (\ref{eqn:HJB_PDE}).
\end{proposition}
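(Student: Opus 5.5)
The plan is to compare the assumed identity \eqref{eq:key_identity2} with the identity that It\^o's formula produces for $e^{-\alpha t}V(\tilde{X}(t))$, and to read off both conclusions from the difference. Throughout I write $\tilde b(x)$ for the drift of the reference process in \eqref{eq:training_sde}, so $\tilde b_k(x)=\zeta_k-\theta_k x_k+\sum_{j\in\mathcal{J}(k)}(\theta_k-\mu_{kj})\tilde\psi_{kj}(x)$. Since $V\in C^2$, It\^o's formula gives
\begin{equation*}
e^{-\alpha T}V(\tilde{X}(T)) - V(\tilde{X}(0)) = \int_0^T e^{-\alpha t}\nabla V(\tilde X(t))\cdot\sigma\, dB(t) + \int_0^T e^{-\alpha t}\big(\mathcal{L}V + \tilde b\cdot\nabla V - \alpha V\big)(\tilde X(t))\,dt .
\end{equation*}
Set $D(x) := F(x,G(x)) - (\mathcal{L}V+\tilde b\cdot\nabla V-\alpha V)(x)$. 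Subtracting the It\^o identity from \eqref{eq:key_identity2} yields, almost surely and for every initial state $x$,
\begin{equation*}
M_T := \int_0^T e^{-\alpha t}\big(G-\nabla V\big)(\tilde X(t))\cdot\sigma\, dB(t) = -\int_0^T e^{-\alpha t} D(\tilde X(t))\,dt =: -A_T .
\end{equation*}

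\textbf{Step 1: $D\equiv 0$.} I assume the reference policy $\tilde\psi$ is continuous with polynomial growth; then $F$ is continuous in $x$ (the supremum term is a parametric LP value), and so $D$ is continuous with polynomial growth. The reference dynamics are linear drift plus a polynomially bounded control term with constant diffusion, which gives polynomial moment bounds on $\tilde X$ uniformly on compact time intervals. With these bounds, $M$ is a true martingale, and taking $\mathbb{E}_x$ gives
\begin{equation*}
w(x):=\int_0^T e^{-\alpha t}\,(P_tD)(x)\,dt=0 \quad\text{for every } x\in\mathbb{R}^K,
\end{equation*}
where $P_t$ is the semigroup of $\tilde X$. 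Applying the Markov property (equivalently, Dynkin's formula for the truncated resolvent), $0=(\tilde{\mathcal L}-\alpha)w = e^{-\alpha T}P_TD - D$. Hence $D = e^{-\alpha T}P_T D$, and iterating gives $D = e^{-n\alpha T}P_{nT}D$. The right-hand side tends to $0$ provided $P_{nT}|D|$ grows at most geometrically more slowly than $e^{n\alpha T}$. For this I would use a Lyapunov bound on polynomial moments coming from the mean-reverting term $-\theta_k x_k$.

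\textbf{Main obstacle.} The step I expect to be hardest is exactly this one: passing from an identity at a single fixed $T$ to $D\equiv 0$. It requires controlling the growth of $P_t|D|$ well enough to make the iteration converge, and making the generator/Markov computation rigorous for the non-smooth $w$. If the iteration argument is awkward, my fallback is to work directly with $u(x)=\mathbb{E}_x\int_0^\infty e^{-\alpha t}D(\tilde X(t))\,dt$, using the relation $u=e^{-\alpha T}P_Tu$ to conclude $u\equiv0$, and then differentiate in $T$.

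\textbf{Step 2: $G=\nabla V$ and the HJB equation.} Once $D\equiv0$, we have $A_T=0$ and therefore $M_T=0$ a.s. The It\^o isometry then gives
\begin{equation*}
\mathbb{E}_x\int_0^T e^{-2\alpha t}\big|\sigma(G-\nabla V)(\tilde X(t))\big|^2\,dt=0 .
\end{equation*}
Since $\sigma$ is nonsingular and $G-\nabla V$ is continuous, path continuity and dominated convergence as $t\downarrow 0$ yield $G(x)=\nabla V(x)$ for each $x$. Finally, expanding $F(x,\nabla V)-\tilde b\cdot\nabla V$ shows that the $\tilde\psi$-terms cancel, leaving
\begin{equation*}
-\sum_k c_kx_k-\sum_k(\zeta_k-\theta_kx_k)\,\partial_kV+\sup_{\psi\in\Psi(x)}[\cdots] .
\end{equation*}
Thus $D=0$ is precisely the negative of the left-hand side of \eqref{eqn:HJB_PDE}, so $V$ solves the HJB equation.
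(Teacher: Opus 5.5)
The paper gives no argument for this statement; it defers to Proposition~3 of \citet{ata2024drift}, so I judge your proposal on its own terms. Your overall route is workable: average the identity to get $w\equiv 0$, turn that into $T$-periodicity of $s\mapsto e^{-\alpha s}P_sD(x)$, kill it by decay, then use the It\^o isometry. Your worry about generator computations for a ``non-smooth $w$'' is moot. Since $w\equiv0$, also $P_sw\equiv0$, i.e.\ $\int_s^{s+T}e^{-\alpha u}P_uD(x)\,du=0$ for all $s\ge0$, and differentiating in $s$ gives the periodicity directly.

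\textbf{The gap: polynomial growth of $D$.} $D$ contains $\mathcal{L}V=\sum_k\lambda_k\partial_{kk}V$, but the hypotheses give polynomial growth only for $V$, $\nabla V$ and $G$, not for $\nabla^2V$. Two of your steps rest on that unproved bound: the Fubini step $\mathbb{E}_x\int_0^Te^{-\alpha t}D(\tilde X(t))\,dt=\int_0^Te^{-\alpha t}P_tD(x)\,dt$, and the control of $P_{nT}|D|$ in your iteration. The repair is to never put $\mathcal{L}V$ under an expectation:
\begin{enumerate}
\item Take expectations in \eqref{eq:key_identity2} itself. With $F_G(y):=F(y,G(y))$, which is continuous with polynomial growth under your standing assumptions, this gives $e^{-\alpha T}P_TV-V=\int_0^Te^{-\alpha u}P_uF_G\,du$ at every starting point.
\item Apply $P_s$ and multiply by $e^{-\alpha s}$. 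This shows that $\Phi_x(s):=e^{-\alpha s}P_sV(x)-\int_0^se^{-\alpha u}P_uF_G(x)\,du$ satisfies $\Phi_x(s+T)=\Phi_x(s)$.
\item Decay gives $\Phi_x(s+nT)\to-\int_0^\infty e^{-\alpha u}P_uF_G(x)\,du$, independently of $s$. Hence $\Phi_x$ is constant, equal to $\Phi_x(0)=V(x)$, so $e^{-\alpha s}P_sV(x)-V(x)=\int_0^se^{-\alpha u}P_uF_G(x)\,du$ for every $s\ge0$.
\item Apply It\^o's formula stopped at the exit time of a small ball around $x$, divide by $s$, and let $s\downarrow0$. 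This yields $(\mathcal{L}+\tilde b\cdot\nabla-\alpha)V(x)=F_G(x)$, i.e.\ $D(x)=0$, and needs only continuity of $\nabla^2V$ near $x$.
\end{enumerate}
Your Step~2 then goes through unchanged.

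\textbf{The decay and your assumptions on $\tilde\psi$.} The decay is a genuine, model-specific ingredient, and you should state what it requires. Averaging throws information away, and for a general diffusion $w\equiv0$ does not force $D\equiv0$. For example, take $dX=AX\,dt+dB$ in $\mathbb{R}^2$ with $A$ having eigenvalues $\alpha\pm2\pi i/T$, and let $D(x)=\operatorname{Re}(u^{\top}x)$ for a complex left eigenvector $u$. Then $e^{-\alpha t}P_tD(x)=\operatorname{Re}(e^{2\pi it/T}u^{\top}x)$ integrates to zero over $[0,T]$, although $D\neq0$. Here decay does hold, for two reasons:
\begin{itemize}
\item every $\theta_k>0$;
\item the LP-based reference controls lie in the polytope \eqref{eq:lp1_off_lp}--\eqref{eq:lp4_off_lp}, whose coordinates satisfy $-\sqrt r\,\psi^*_{kj}\le\psi_{kj}\le\sqrt r\,\nu_j$ uniformly in $x$.
\end{itemize}
Together these give $\sup_t\mathbb{E}_x|\tilde X(t)|^p\le C_p(1+|x|^p)$.

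``Polynomial growth of $\tilde\psi$'' is not sufficient for this bound: superlinear control terms can destroy the moment bound, and even well-posedness. Assume boundedness instead. Continuity of $\tilde\psi$ is also an extra assumption, since LP argmax rules are usually discontinuous. You need it precisely because you prove $D\equiv0$ before $G=\nabla V$, so the terms $\sum_k\sum_{j\in\mathcal{J}(k)}(\theta_k-\mu_{kj})\tilde\psi_{kj}\,(G_k-\partial_kV)$ inside $D$ do not cancel.
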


Computing $F(\cdot,\cdot)$ exactly requires solving a linear program at each iteration of the computation for each function call, which is time consuming given that $F$ is evaluated repeatedly during neural network training. We therefore train two neural networks offline, $\hat{H}$ and $\hat{D}$, that approximate the supremum and reference-policy terms of $F$, respectively, and combine them into an approximation $\hat{F}(\cdot,\cdot)$; see ~\ref{sect:offline_approximation_auxiliary_function} for details.

\color{black}
\vspace{-5mm}
% ============================================================
%  SECTION 6: COMPUTATIONAL METHOD
% ============================================================
\section{Computational method}\label{sect:computational_method}

We build on the computational method introduced by \citet{han2018solving} for solving semilinear parabolic partial differential equations (PDEs). Their approach defines the loss function using a backward stochastic differential equation (BSDE) representation of the target PDE. Similarly, we formulate our loss function using the identity in Equation~\eqref{eq:key_identity}. To begin, we set a reference policy $\tilde{\psi}$ and then simulate the discretized sample paths of the reference process $\tilde{X}$ based on this reference policy on a fixed and finite time domain $[0,T]$. We then fix a partition $0 = t_{0} < t_{1} < \ldots < t_{N} = T$ of the time horizon $[0,T]$, and simulate the discretized sample paths of the reference process at times $t_{0},t_{1}, \ldots, t_{N}$; see Subroutine \ref{subroutine:euler}. 

% ── Subroutine: Euler discretization ──
\begin{algorithm}[!htb]
    \floatname{algorithm}{Subroutine}
    \caption{Euler discretization scheme.}
    \label{subroutine:euler}
    \begin{algorithmic}[1]
        \Statex \textbf{Input:} The trained neural network $\hat{D}(\cdot)$ 
            from Subroutine~\ref{subroutine:d_approx} in \ref{sect:offline_approximation_auxiliary_function}, the variance term 
            $\sigma^{2}$, the time horizon $T$, the number of intervals $N$, 
            a discretization step-size $\Delta t_{n} \triangleq T/N$, and a 
            random initial state $x_{0} \sim \Gamma_{0}$. The initial distribution $\Gamma_{0} = \text{Uniform}([-10,10]^{K})$ is chosen based on the observed range of the simulated sample paths $\tilde{X}$.
        \Statex \textbf{Output:} Discretized reference process 
            $\tilde{X}(t_{n})$ for $n = 1,\ldots, N$, and the Brownian 
            increments $\Delta B(t_{n})$ for $n = 0,\ldots, N-1$.
        \Function{Discretize}{$T, \Delta t_{n}, x_0$}
            \State Construct the partition 
                $0 = t_{0} < t_1 < \ldots < t_{N} = T$ with 
                $\Delta t_{n} = t_{n+1} - t_{n}$ for $n = 0,\ldots,N-1$.
            \State Generate $N$ i.i.d.\ $K$-dimensional Gaussian random 
                vectors $\Delta B(t_{n}) = (\Delta B_{k}(t_{n}))_{k=1}^K$ 
                with mean zero and covariance matrix $\Delta t_{n}\,I$ for 
                $n = 0,\ldots, N-1$.
            \State Set $\tilde{X}(t_0) \leftarrow x_0$.
            \For{$n = 0, \ldots, N-1$}
                \For{$k = 1, \ldots, K$}
                    \State $\tilde{X}_{k}(t_{n+1}) \leftarrow 
                        \tilde{X}_{k}(t_{n}) + \Big(\zeta_{k} - \theta_{k}\,
                        \tilde{X}_{k}(t_{n}) + \hat{D}_{k}\big(
                        \tilde{X}(t_{n})\big)\Big)\,\Delta t_{n} + 
                        \sigma_{k}\,\Delta B_{k}(t_{n})$
                \EndFor
            \EndFor
            \State \Return $\tilde{X}(t_{n})$ for $n = 1,\ldots, N$ and 
                $\Delta B(t_{n})$ for $n = 0,\ldots, N-1$.
        \EndFunction
    \end{algorithmic}
\end{algorithm}

We approximate the value function $V(\cdot)$ using a deep neural network $V^{\omega}(\cdot)$ with associated parameter vector $\omega$. Similarly, we approximate the gradient function $\nabla_{x}V(\cdot)$ using a deep neural network $G^{\nu}(\cdot)$ with parameter vector $\nu$. We adopt a discretized approximation of the identity \eqref{eq:key_identity} to define our loss function, denoted by $\ell(\omega,\nu)$, as a function of the neural network parameters $(\omega,\nu)$ as follows:
\vspace{-2mm}
\begin{equation}
\begin{aligned}
\ell(\omega, \nu)= &\mathbb{E}\Bigg[\Bigg(e^{-\alpha T}V^{\omega}(\tilde{X}(T)) - V^{\omega}(\tilde{X}(0))-\sum_{n=0}^{N-1} e^{-\alpha  t_{n}} G^{\nu}(\tilde{X}(t_n)) \cdot \sigma\,\Delta B(t_{n}) \\
& \qquad \qquad \qquad \qquad \quad \quad-\sum_{n=0}^{N-1}  e^{-\alpha t_{n}} F\left(\tilde{X}(t_{n}), G^{\nu}(\tilde{X}(t_{n}))\right) \Delta t_{n}\Bigg)^{2}\Bigg],\label{eq:loss_function_first}
\end{aligned}
\end{equation}
where $\Delta t_{n}  = t_{n+1} - t_{n}$ and $\Delta B(t_{n})$ is a Gaussian random vector with zero mean and covariance matrix $\Delta t_{n}I$ for $n = 0, 1, \ldots, N-1$. Here, we approximate the expectation, summing over the sample paths of the reference process $\tilde{X}$. Our method determines the optimal neural network parameters $(\omega^{*}, \nu^{*})$ that minimize the loss function \eqref{eq:loss_function_first} using stochastic gradient descent; see Algorithm \ref{main_algorithm}. Given those parameters, we use the learned gradient function $G^{\nu^{*}}(\cdot)$ to propose a policy for the prelimit system. We describe the proposed policy next.   

% ── Main algorithm ──
\begin{algorithm}[!htb]
    \caption{}
    \label{main_algorithm}
    \begin{algorithmic}[1]
        \Statex \textbf{Input:} The trained networks $\hat{H}(\cdot,\cdot)$ 
            from Subroutine~\ref{subroutine:h_approx} in \ref{sect:offline_approximation_auxiliary_function} and 
            $\hat{D}(\cdot)$ from 
            Subroutine~\ref{subroutine:d_approx} in \ref{sect:offline_approximation_auxiliary_function}, the number of 
            training iterations $M$, a batch size $S$, a learning rate schedule $(\text{lr}_0, \gamma, \text{milestones})$, neural network architecture hyperparameters (number of layers, neurons per layer, activation function), the time horizon $T$, the number of time 
            intervals $N$, a discretization step-size 
            $\Delta t_n \triangleq T/N$, an initial distribution 
            $\Gamma_0$, and an optimization solver 
            (Adam, SGD, RMSProp, etc.).
        \Statex \textbf{Output:} The approximate value function 
            $V^\omega(\cdot)$ and the approximate gradient function 
            $G^\nu(\cdot)$.
        \State Define the approximate auxiliary function:
            \begin{equation*}
                \hat{F}(x, v) = \hat{D}(x) \cdot v 
                - \sum_{k = 1}^{K} c_k\, x_k 
                + \hat{H}(x, v).
            \end{equation*}
        \State Initialize the neural networks $V^\omega(\cdot)$ and 
            $G^\nu(\cdot)$.
        \For{$m = 0, \ldots, M - 1$}
            \State Sample $x_0^{(s)} \sim \Gamma_0$ for 
                $s = 1, \ldots, S$.
            \State Simulate $S$ discretized sample paths by invoking 
                \textsc{Discretize}$(T, \Delta t_n, x_0^{(s)})$ 
                (Subroutine~\ref{subroutine:euler}) to obtain 
                $\{\tilde{X}^{(s)}(t_n), 
                \Delta B^{(s)}(t_n)\}$ for $s = 1, \ldots, S$.
            \State Compute the empirical loss:
                \begin{align*}
                    \ell(\omega, \nu) = \frac{1}{S}\sum_{s=1}^{S}
                    \Bigg(&e^{-\alpha T} V^\omega(\tilde{X}^{(s)}(T)) 
                    - V^\omega(\tilde{X}^{(s)}(0)) 
                    - \sum_{n=0}^{N-1} e^{-\alpha t_n}\, 
                    G^\nu(\tilde{X}^{(s)}(t_n)) \cdot \sigma\,
                    \Delta B^{(s)}(t_n) \\
                    &- \sum_{n=0}^{N-1} e^{-\alpha t_n}\, 
                    \hat{F}\Big(\tilde{X}^{(s)}(t_n),\, 
                    G^\nu(\tilde{X}^{(s)}(t_n))\Big)\, 
                    \Delta t_n \Bigg)^{\!2}.
                \end{align*}
            \State Update $(\omega, \nu)$ by computing 
                $\nabla_{(\omega,\nu)} \ell$ and applying the chosen 
                optimizer.
            \State Update learning rate according to schedule.
        \EndFor
        \State \Return $V^\omega(\cdot)$ and $G^\nu(\cdot)$.
    \end{algorithmic}
\end{algorithm}
\vspace{-5mm}
% ── Proposed policy ──
\paragraph{Proposed policy for the prelimit system.}
Given the trained gradient network $G^{\nu^{*}}(\cdot)$ that approximates the gradient $\nabla V(x)$ of the optimal value function, we propose the following policy for the prelimit system, i.e., the $r^{\text{th}}$ system: Upon observing the system state $X^{r}(t)$ at time $t\geq 0$, choose the server assignments $\psi_{kj}^{r}(t)$ for $(k,j) \in \mathcal{E}$ by solving the following linear program, where $\hat{X}^{r}(t)$ is the scaled system state defined via Equation \eqref{eq:scaled_state}:
\begin{align}
    &\operatorname{Maximize} \sum_{k = 1}^{K} \sum_{j \in \mathcal{J}(k)} \Big(c_k + (\mu_{kj} - \theta_k)\, G_k^{\nu^*}(\hat{X}^{r}(t))\Big)\, \psi_{kj}^r(t) \label{eq:prelimit_obj} \\
    &\text{subject to} \nonumber \\
    &\sum_{j \in \mathcal{J}(k)} \psi_{kj}^r(t) \leq X_k^r(t), \quad k = 1, \ldots, K, \label{eq:prelimit_c1} \\
    &\sum_{k \in \mathcal{K}(j)} \psi_{kj}^r(t) \leq N_j^r, \quad\;\;\; j  = 1,\ldots, J, \label{eq:prelimit_c2} \\
    &\psi_{kj}^r(t) \geq 0, \quad\quad\quad\quad\;\;\;\; (k,j) \in \mathcal{E}. \label{eq:prelimit_c3}
\end{align}
\begin{remark}
This policy is motivated by the supremum term in the HJB equation \eqref{eqn:HJB_PDE} because the maximizer of that term yields the optimal  policy $\hat{\psi}$ for the limiting problem. More specifically, for the Brownian control problem, at each state $x$, one chooses $\hat{\psi}$ so as to 
\begin{equation*}
    \operatorname{Maximize} \sum_{k = 1}^{K} \sum_{j \in \mathcal{J}(k)} \Big(c_k + (\mu_{kj} - \theta_k)\, G_k^{\nu^*}(\hat{X}^{r}(t))\Big)\, \hat{\psi}_{kj} \label{eq:limit_obj} \quad \text{subject to } \eqref{eq:lp1_off_lp}-\eqref{eq:lp4_off_lp}
\end{equation*}
To translate this solution to a policy recommendation for the prelimit system, one can set $\psi^{r}$ using $\hat{\psi}$ and the scaling relation \eqref{eq:scaled_control}. Also using the scaling relations, one can show that constraints \eqref{eq:lp1_off_lp}--\eqref{eq:lp4_off_lp} correspond to constraints \eqref{eq:prelimit_c1}--\eqref{eq:prelimit_c3}; see \ref{sec:derivation_proposed_policy} for details.
\end{remark}
\vspace{-5mm}
% ============================================================
%  SECTION 7: DATA, TEST PROBLEMS, AND BENCHMARKS
% ============================================================
\section{Data, test problems, and benchmark policies}\label{sect:data}

% ────────────────────────────────────────────────────────────
\subsection{Data}
\label{sect:data_section}
We use the dataset from the call center of a large U.S. bank, made publicly available by the Service Enterprise Engineering Lab at the Technion. (Available at the \href{https://see-center.iem.technion.ac.il/databases/USBank/}{SEE Center database}. Accessed on January 17, 2024.) The data spans from March 2001 to October 2003 and includes detailed records of agent activities and individual calls.  The call center operates 24/7 and is distributed across four locations: New York, Pennsylvania, Rhode Island, and Massachusetts. The workforce is divided into groups that are referred to as nodes, which do not correspond directly to physical sites. Agents from different locations may belong to the same node, and agents from the same location may be assigned to different nodes. The center handles between 330,000 and 350,000 calls on weekdays, and 170,000 to 190,000 on weekends, routing calls based on agent skill sets. Staffing levels adjust accordingly, with more than a thousand agents working on weekdays and several hundred on weekends, unevenly distributed across the nodes.

Customers first interact with a voice response unit (VRU), an automated system that enables them to complete transactions independently. While the majority exit the system after using the VRU, approximately 55,000 to 65,000 calls per day, around 20\% of total volume, are transferred to live agents. Our analysis focuses exclusively on these calls. The VRU directs each of these calls to agents with the appropriate skill set based on the requested service. Over time, the call center modified its service offerings, phasing out some services available in 2001 and introducing new ones after November 2002. To ensure consistency in arrival patterns, service types, and agent groups, we restrict our analysis to calls received between July and October 2002. 
Each call in the dataset is split into one or more subcalls, tracing its path through the call center from entry to exit. Our analysis focuses on the first subcall, which begins when the customer enters the queue to speak with an agent and ends upon completion of the first service. To remove outliers, we limit our analysis to calls with outcomes of normal termination, transfer, short abandonment, or abandonment, accounting for over 99\% of all observations. To further eliminate outliers, we restrict our analysis to calls with service times under 30 minutes and waiting times under 15 minutes. 

Figure \ref{Arrivals_CI} shows the average weekday call arrival pattern. To focus on the busier and more stable period of the day, we restrict our analysis to the peak period, i.e., to the calls received between 10 AM and 4 PM on weekdays. During this period, Retail class calls are routed to three service nodes, labeled 1, 2, and 3. Since Retail class alone accounts for approximately 68\% of total call volume, we divide it into three distinct subclasses based on the node where each call is handled. For each customer class, we compute the average hourly arrival rate by counting the number of calls during this six-hour window on each weekday, and then averaging across all weekdays from July to October 2002. After interacting with the VRU, customers who request agent assistance are placed in a queue. While waiting, some customers may abandon the queue before being served. However, since fewer than 3\% of calls end in abandonment, abandonment times are heavily censored, which can lead to biased estimates. To address this, we apply the bias-corrected Kaplan-Meier estimator proposed by \citet{stute1994jackknife} to estimate average abandonment rates for each customer class; also see \citet{akcsin2013structural} for an application of this method to the U.S. Bank call center data. Table \ref{table:arrival_abandonment_rates_2002} presents the resulting arrival percentage, hourly arrival rate and abandonment rate for each customer class.

\begin{figure}[!htb]
\centering
 \hspace*{-1.5cm}
    \hspace{20mm} \includegraphics[scale = 0.45]{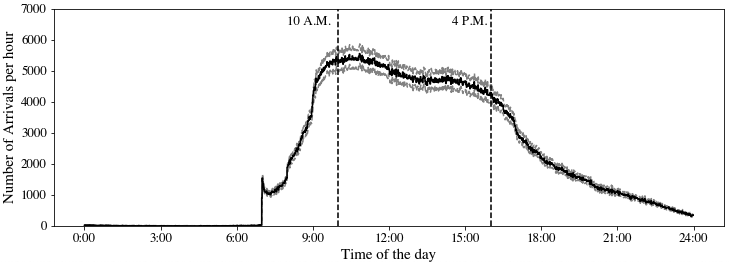}
    \caption{Hourly arrival rate of callers on weekdays during July--October 2002. The resolution of the horizontal axis is one minute. That is, arrival rates are calculated over one-minute intervals. The solid line shows the hourly average arrival rate across all days. The two dashed lines that enclose it are computed by taking the average plus/minus 2 times the standard deviation of the rates for each one-minute interval across all days.}
    \label{Arrivals_CI}
\end{figure}

Calls are routed to agents assigned to different groups according to their skill set. The dataset includes a group code for each agent, assigning them to an agent group that shares a common set of skills. In addition, the dataset identifies the main service type for each agent group, which is defined as the service type that accounts for the largest share of calls handled by agents in that group. During July--October 2002, there are 14 agent group codes: 1, 5, 9, 15/16, 19/20, 26, 28, 30, 31, 33, 34, 38, 39, and 40. The number of agents varies across agent groups. The notation 15/16 and 19/20 reflects that these are the same agents assigned different group codes on different days. Specifically, if group code 15 is active, group code 16 is not, and vice versa; the same applies to codes 19 and 20. These pairs never operate simultaneously and handle the same set of service types, with the code switching from one day to the next.

To estimate staffing levels, we focus on the 10~AM to 4~PM window,
which we divide into one-minute intervals. For each interval and each
agent group, we count the number of agents who are logged in and not
on a break, based on the second-by-second status codes in the dataset. Since our
analysis focuses on first subcalls, we scale each count by the
fraction of each group's total service time that is devoted to first
subcalls. We then average these adjusted counts over all intervals in
the six-hour window and round up to the nearest integer to obtain a
single staffing estimate per group code.
Table~\ref{table:group_code_main_service} in \ref{app:supply_data} reports the resulting
staffing levels along with the main service type handled by each
group, as inferred from the data.

For each group code, we compute the percentage (\%) of each service type among all calls handled by that group; see Table~\ref{table:num_obs_2002} in \ref{app:supply_data}. To eliminate outliers and filter out ad hoc and rare assignments due to unforeseen increases in the workload, we exclude assignments that constitute less than 0.1\% of the total call volume handled by a given group. For the rest of our analysis, we do not directly use the agent group codes provided in the raw dataset to define service stations. Instead, we reorganize the agents into service stations (or skill-based server pools) based on the service types they handle, ensuring that each server pool includes a substantial number of agents. 

We begin by merging group codes 26, 28, and 30 into a single station, as all three primarily serve the Business customer class. Similarly, we merge group codes 15/16, 38, 39, and 40 into a single station, as these groups collectively handle Retail, Case Quality, and Priority Service calls. The remaining group codes each form their own service station. Table \ref{table:code_assignments} in \ref{app:supply_data} summarizes the resulting mapping from the group codes to the service stations, along with the corresponding number of agents and the service types handled by each station. Table~\ref{table:service_rates_2002} in \ref{data_main_test} shows the hourly service rates for each service station and customer class. Consequently, for our analysis, the underlying network we consider has 22 nodes, 13 corresponding to customer classes and 9 to service stations, and 40 edges, representing the feasible service activities; see Figure \ref{fig:underlying_network}.
\begin{figure}[!htb]
    \centering
    \includegraphics[width=0.71\linewidth]{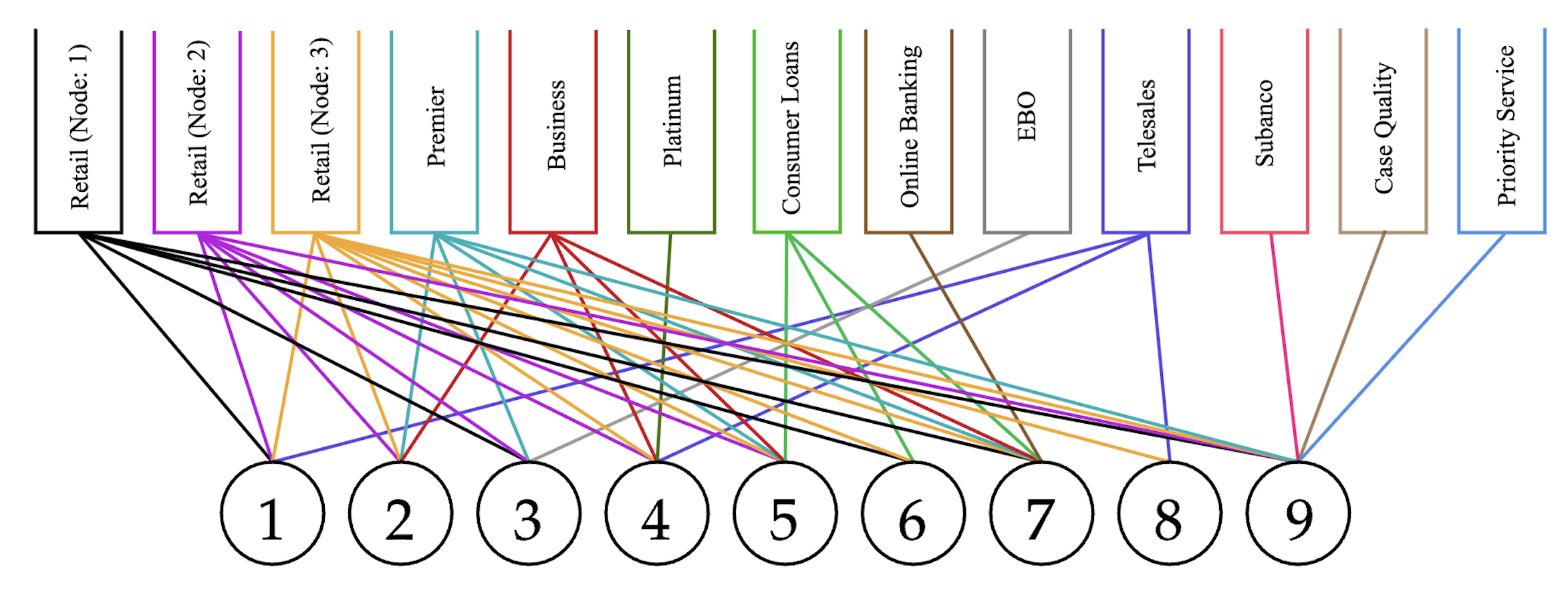}
    \caption{The underlying network of the multiclass, multi-server pool queueing system considered in our analysis, consisting of 13 customer classes and 9 service pools. Colors are used to distinguish customer classes and the service activities (edges) used to serve them.}
    \label{fig:underlying_network}
\end{figure}
In summary, we focus on 2,501,409 calls across 11 service types, 
received on weekdays during July--October 2002 (excluding holidays), 
between 10 AM and 4 PM. Since we split the Retail class into three 
subclasses based on the service node, our analysis involves 13 
customer classes. We restrict attention to calls that were routed to 
an agent queue and consider only their first subcall. The summary 
statistics for these calls are provided in 
Table~\ref{table:arrival_abandonment_rates_2002}.

\vspace{2mm}
\begin{table}[!htb]
	\centering
	\setlength\tabcolsep{4pt}
	{\small 
             \scalebox{0.78}{
		\begin{tabular}{lccccccc}
			\toprule
			Class && Arrival (\%) &&  Arrival Rate  && Abandonment Rate \\
			\rule{0pt}{3.5ex} &&  && (per hr) && (per hr)\\
			\midrule
			Retail (Node: 1) && 23.80 && 1216.22 && 7.01\\
                Retail (Node: 2) && 26.68 && 1363.70 && 7.74\\
                Retail (Node: 3) && 17.94 && 916.82 && 7.74\\
			Premier && 3.11 && 158.85 && 36.26\\			
                Business && 6.86 && 350.79 && 5.73\\
			Platinum && 0.60 && 30.50 && 6.12\\
			Consumer Loans && 7.76 && 396.38 && 4.57\\
			Online Banking && 3.17 && 161.94 && 8.25\\
			EBO && 0.86 && 43.84 && 7.38\\
			Telesales && 7.27 && 371.33 && 9.78\\
			Subanco && 0.71 && 36.48 && 7.62\\
			Case Quality && 0.53 && 27.05 && 13.37\\
			Priority Service && 0.73 && 37.21 && 17.54\\
			\bottomrule 
		\end{tabular}
	}
 }
        \caption{Arrival percentages, hourly arrival rates, and abandonment rates for each customer class.}
	\label{table:arrival_abandonment_rates_2002}
\end{table}

We complement the U.S. Bank data by estimating holding and abandonment costs using an additional data source; see the fifth and sixth columns of Table \ref{stats_main} in \ref{data_main_test}. To approximate holding costs, we assume that the opportunity cost of waiting is equal to the caller's foregone hourly wage. According to the
\href{https://www.bls.gov/news.release/empsit.t19.htm}{U.S. Bureau of Labor Statistics} (accessed on April 2, 2025), the average hourly wage in the retail industry is \$25, which we use as the holding cost rate for the Retail class. 

We group the remaining classes based on their perceived priority relative to the Retail class. Specifically, Premier, Business, Platinum, and Priority Service are treated as higher-priority classes and are assigned holding cost rates ranging from \$27 to \$33 per hour. The remaining lower-priority classes are further subdivided by call volume. For classes whose arrival rates constitute less than 1\% of total call volume, we assign the lowest holding cost rate of \$20 per hour. For the other classes in this group, we assign a holding cost rate of \$23 per hour. This tiered assignment maintains a weighted average holding cost rate close to \$25 per hour used for the Retail class. To estimate abandonment penalties, we approximate the caller's value for service using the value of their time spent in service. Following this logic, the abandonment penalty is set equal to the hourly holding cost rate divided by 15, assuming servers handle approximately 15 calls per hour. While this is a simplified approach, it aligns the abandonment penalties with the priority-based structure of the holding cost rates.

% ────────────────────────────────────────────────────────────
\subsection{Main test problem and its variant}\label{section:main_test}
We calibrate our main test problem using data from the U.S. Bank call center. The call center offers 11 distinct service types. In addition, we divide the Retail class into three separate subclasses based on the nodes where these calls are handled as mentioned above. As a result, we set the number of customer classes to $K = 13$. Furthermore, after combining agent group codes in the raw dataset as shown in Table \ref{table:code_assignments} in \ref{app:supply_data}, we identify 9 distinct service stations (server pools) and set $J = 9$.

We estimate the prelimit arrival rate of class $k$, denoted by $\lambda_{k}^{r}$, from the raw data (see the third column of Table \ref{table:arrival_abandonment_rates_2002}). Similarly, abandonment rates $\theta_{k}$ and service rates $\mu_{kj}$ are estimated directly from the data, as shown in Table \ref{table:arrival_abandonment_rates_2002} and Table \ref{table:service_rates_2002} in \ref{data_main_test}, respectively. These estimates are also used to compute the limiting service and abandonment rates; see Equation \eqref{eq:def:scaled_lambda}. We also estimate the number of agents at each service station $j$, denoted by $N_{j}^{r}$, from the data (see the third column of Table~\ref{table:code_assignments} in \ref{app:supply_data}). We set the scaling parameter to $r = 100$, which reflects the scale of the staffing levels observed in the data. We use this value to calculate the limiting staffing levels as $\nu_{j} = N_{j}^{r}/r$ for $j = 1,\ldots, J$.
%\begin{equation}
%    \nu_{j} = N_{j}^{r}/r, \quad j = 1, \ldots, J. %\label{eq:limiting_agents}
%\end{equation}

Next, we consider the static allocation problem \eqref{lp_obj}--\eqref{lp_constraint3} by putting $\lambda^{r}/r$ in place of $\lambda$. The solution to this problem yields system utilization of $\rho^{*} \approx 80\%$. This suggests that the call center is overstaffed currently, rendering congestion concerns less important. Therefore, in order to focus on a setting where congestion concerns play a more important role, we proportionally increase the arrival rates so that the resulting system utilization is 95\%. This leads to a more interesting and challenging problem from a scheduling perspective. It constitutes our main test problem.
\begin{figure}[!htb]
    \centering
    \includegraphics[scale=0.25]{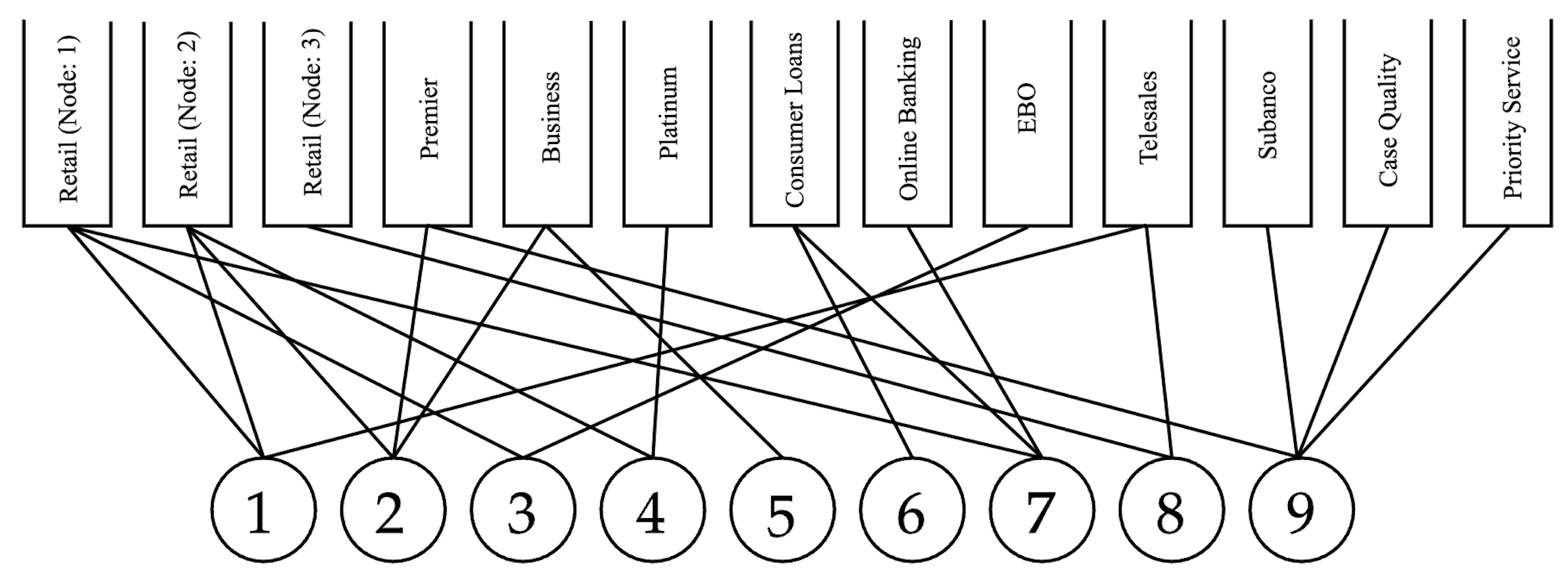}
    \caption{The underlying network corresponding to the system studied in the main test problem. We show only the edges that correspond to basic activities ($\xi_{kj}^{*} > 0$). }
    \label{fig:optimal_solution_2002_basic}
\end{figure}

To be more specific, we first scale up the arrival rate vector $(\lambda^{r}/r)$ (from our first solution of the static planning problem) proportionally to a value $\lambda$ so that the solution of the static planning problem satisfies the heavy traffic assumption; see Equation \eqref{optimal_solution_matrix} in \ref{appendix:optimal_solution} for that solution. Figure \ref{fig:optimal_solution_2002_basic}
shows the basic activities (i.e., those with $\xi_{kj}^{*} > 0$) in the system. While not immediately obvious from the figure, the underlying network structure is a tree with 22 nodes (13 customer classes and 9 service stations) and 21 edges. Building on this solution, we set $\tilde{\lambda}^{r} = 0.95r\lambda$ so that the $r^{\text{th}}$ system has 95\% utilization, corresponding to our main test problem. Then, to characterize the deviation from the fluid limit, we define the second-order terms $\zeta_{k}$ using Equation \eqref{eq:def:scaled_lambda}:
\begin{equation} 
\zeta_{k} = \frac{1}{\sqrt{r}} \left( \tilde{\lambda}_{k}^{r} - r\lambda_{k} \right), \quad k = 1, \ldots, K. \label{eq:limiting_zeta} 
\end{equation}
These $\zeta_k$ values capture the second-order deviation from the fluid limit in the diffusion scaling given in Equation \eqref{eq:def:scaled_lambda}. Lastly, we interpret the discount rate $\alpha$ in Equation~\eqref{eqn:j_definition} as the opportunity cost of capital. Accordingly, we use \href{https://www.federalreserve.gov/releases/h15/}{short-term U.S. Treasury bill rates} as a benchmark for the interest rate $\alpha$. Based on the prevailing rates, we set $\alpha = 4\%$ per year. Given these primitives of the Brownian control problem, we compute the proposed policy using our method and compare its performance against the benchmark policies introduced in Section~\ref{sect:benchmarks}. We observe that the FSF rule stands out among the benchmarks we considered for the main test problem. 

To understand why FSF performs well, we further investigate the system parameters of the main test problem. In doing so, we partition the classes into groups based on their service rates. Classes with the highest service rates form the high-priority group $\mathcal{G}_{H}$, whose queues are negligible under all benchmark policies we consider. The remaining low-priority classes are split into two subgroups, $\mathcal{G}_{L}^{\eta}$ and $\mathcal{G}_{L}^{\beta}$, as shown in Table~\ref{division}. Classes with similar service rates are placed in the same subgroup.
\vspace{-1mm}
\begin{table}[H]
    \centering
    \setlength\tabcolsep{4pt}
    {\footnotesize 
    \scalebox{0.9}{
    \begin{tabular}{llllcccccc}
        \toprule
        Group &    Classes \\
        \midrule
        $\mathcal{G}_H$ & Retail (Node: 3), Retail (Node: 2), Retail (Node: 1)\\
        \midrule
        $\mathcal{G}^{\eta}_{L}$ & Premier, Business, Telesales, Consumer Loans\\
        $\mathcal{G}^{\beta}_{L}$ & Platinum, Case Quality, Priority Service, Online Banking, Subanco, EBO \\
        \bottomrule
    \end{tabular}
    }
    }
    \caption{High-priority group and low-priority subgroups. Within each subgroup, classes are listed from left to right in descending FSF order.}
    \label{division}
\end{table}
\vspace{-5mm}
Table~\ref{updated_stats} reports the system parameters for the classes in each group. FSF orders the three groups as $\mathcal{G}_{H} \succ \mathcal{G}_{L}^{\eta} \succ \mathcal{G}_{L}^{\beta}$. That is, classes in $\mathcal{G}_{H}$ have the highest priority, followed by those in $\mathcal{G}_{L}^{\eta}$, whereas the classes in $\mathcal{G}_{L}^{\beta}$ have the lowest priority. So the queues of the classes in $\mathcal{G}_{H}$ are virtually empty. Most waiting customers belong to classes in $\mathcal{G}_{L}^{\beta}$ whereas customers in classes $\mathcal{G}_{L}^{\eta}$ may also experience queueing delays. In the main test problem, costs and service rates are positively correlated across the two low-priority subgroups: $\mathcal{G}_{L}^{\eta}$ has both higher service rates ($\sum_j \mu_{\cdot,j} \in [42.7, 66.2]$) and a higher average cost rate (\$43.2), while $\mathcal{G}_{L}^{\beta}$ has lower service rates ($\sum_j \mu_{\cdot,j} \in [9.1, 15.4]$) and a lower average cost rate (\$39.4). FSF's throughput-based ordering therefore aligns with the cost-based ordering rule $c\mu$, which partly explains its strong performance in lowering queueing costs.

To generate test problems in which this alignment breaks down, we scale the holding cost rate $h$ and abandonment penalty $p$ by $\eta \in (0,1]$ for classes in $\mathcal{G}_{L}^{\eta}$ and by $\beta \in [1,\infty)$ for classes in $\mathcal{G}_{L}^{\beta}$, leaving service rates unchanged. As $\eta$ decreases and $\beta$ increases, the correlation reverses: $\mathcal{G}_{L}^{\eta}$ becomes the low-cost, high-service-rate group and $\mathcal{G}_{L}^{\beta}$ the high-cost, low-service-rate group. FSF then prioritizes low-cost classes over high-cost ones, which is the condition under which one expects its performance to deteriorate. Table~\ref{updated_stats} summarizes these adjustments.

\begin{table}[!htb]
	\centering
	\setlength\tabcolsep{4pt}
	{\small 
             \scalebox{0.73}{
		\begin{tabular}{lccccccc}
			\toprule
			Class & Arrival & $\theta$ & $h$ & $p$ & $c$ & $\sum_{j} \mu_{\cdot,j}$\\
			\rule{0pt}{3.5ex} & percentage (\%) & (per hr.) & (per hr.) & (per job) & (per hr.) & (per hr.) \\
			\midrule
			Retail (Node: 3) & 17.94 & 7.74 & \$25 & \$1.667 & \$37.90 & 143.15\\
            Retail (Node: 2) & 26.68 & 7.74 & \$25 & \$1.667 & \$37.90 & 97.95\\
            Retail (Node: 1) & 23.80 & 7.01 & \$25 & \$1.667 & \$36.69 & 81.67\\
            \midrule
            Premier & 3.11 & 36.26 & \$27\,\textcolor{orange}{$\eta$} & \$1.800\,\textcolor{orange}{$\eta$} & \$92.27\,\textcolor{orange}{$\eta$} & 66.22\\
			Business & 6.86 & 5.73 & \$30\textcolor{orange}{$\eta$} & \$2.000\,\textcolor{orange}{$\eta$} & \$41.46\textcolor{orange}{$\eta$} & 62.80\\
			Telesales & 7.27 & 9.78 & \$23\textcolor{orange}{$\eta$} & \$1.533\textcolor{orange}{$\eta$} & \$37.99\textcolor{orange}{$\eta$} & 43.07\\
            Consumer Loans & 7.76 & 4.57 & \$23\textcolor{orange}{$\eta$} & \$1.533\textcolor{orange}{$\eta$} & \$30.01\textcolor{orange}{$\eta$} & 42.73\\
            \midrule
		    Platinum & 0.60 & 6.12 & \$33\textcolor{blue}{$\beta$} & \$2.200\textcolor{blue}{$\beta$} & \$46.46\textcolor{blue}{$\beta$} & 15.37\\
            Case Quality & 0.53 & 13.37 & \$20\textcolor{blue}{$\beta$} & \$1.333\textcolor{blue}{$\beta$} & \$37.82\textcolor{blue}{$\beta$} & 11.36\\
            Priority Service & 0.73 & 17.53 & \$33\textcolor{blue}{$\beta$} & \$2.200\textcolor{blue}{$\beta$} & \$71.59\textcolor{blue}{$\beta$} & 11.21\\
            Online Banking & 3.17 & 8.25 & \$23\textcolor{blue}{$\beta$} & \$1.533\textcolor{blue}{$\beta$} & \$35.65\textcolor{blue}{$\beta$} & 10.86\\
            Subanco & 0.71 & 7.62 & \$20\textcolor{blue}{$\beta$} & \$1.333\textcolor{blue}{$\beta$} & \$30.16\textcolor{blue}{$\beta$} & 10.86\\
            EBO & 0.86 & 7.38 & \$20\textcolor{blue}{$\beta$} & \$1.333\textcolor{blue}{$\beta$} & \$29.84\textcolor{blue}{$\beta$} & 9.13\\
			\bottomrule 
		\end{tabular}
	}
 }
        \caption{System parameters adjusted with scaling coefficients $\eta \in (0,1]$ and $\beta \in [1,\infty)$ for low-priority subgroups $\mathcal{G}_{L}^{\eta}$ and $\mathcal{G}_{L}^{\beta}$, respectively. Parameters for $\mathcal{G}_H$ are held fixed.}
	\label{updated_stats}
\end{table}
\vspace{-3mm}
Table~\ref{results} reports the average discounted queueing cost across 10{,}000 simulation replications for the FSF and $c\mu$ rules over a range of $(\eta,\beta)$ pairs. As $\eta$ decreases and $\beta$ increases, FSF's performance deteriorates relative to $c\mu$: the gap widens from $-10\%$ at $(\eta, \beta) = (1, 1)$ to $+73\%$ at $(0.3, 1.7)$. The cost adjustments make the classes in $\mathcal{G}_{L}^{\beta}$ more expensive, while their service rates remain the lowest in the system. The FSF rule continues to deprioritize these classes according to their service rates, causing expensive customers to accumulate in the queue.

\noindent\textbf{A variant of the main test problem.} We scale the holding cost rates $h_{k}$ and abandonment penalties $p_{k}$ by $\eta = 0.7$ for $k \in \mathcal{G}_{L}^{\eta}$ and by $\beta = 1.3$ for $k \in \mathcal{G}_{L}^{\beta}$. For larger $(\eta, \beta)$ gaps, $c\mu$ dominates by a wide margin, leaving little room for improvement. When $(\eta, \beta) = (0.7, 1.3)$, no static policy is dominant and coming up with a near optimal policy appears challenging, which makes it an interesting case to consider.
\vspace{-4mm}
\begin{table}[!htb]
    \centering
    \setlength\tabcolsep{4pt}
    {\small 
    \scalebox{0.8}{
    \begin{tabular}{lccccccc}
        \toprule
        $(\eta, \beta)$ &  FSF rule & $c\mu$ rule & Gap: FSF vs.\ $c\mu$ \\
        \midrule
        (1.0, 1.0) & 21,479,438 & 23,755,812 & $-$9.58\%\\
        (0.9, 1.1) & 23,440,071 & 23,561,278 & $-$0.51\%\\
        (0.8, 1.2) & 25,398,081 & 23,833,193 & 6.57\%\\
        (0.7, 1.3) & 27,358,860 & 23,487,455 & 16.48\%\\
        (0.6, 1.4) & 29,316,849 & 22,524,191 & 30.16\%\\
        (0.5, 1.5) & 31,277,527 & 21,767,266 & 43.69\%\\
        (0.4, 1.6) & 33,234,868 & 21,005,053 & 58.22\%\\
        (0.3, 1.7) & 35,195,781 & 20,319,722 & 73.21\%\\
                \bottomrule 
    \end{tabular}
    }
    }
    \caption{Average discounted queueing cost for each policy under different $(\eta, \beta)$ pairs. The gap column reports the percentage by which the FSF rule exceeds the $c\mu$ rule; negative values indicate that FSF outperforms $c\mu$.}
    \label{results}
\end{table}
\vspace{-8mm}
% ────────────────────────────────────────────────────────────
\subsection{Low-dimensional test problems}
\label{sect:low_dimensional}

In this section, we introduce two 2-dimensional test problems for which standard dynamic programming techniques are computationally tractable, allowing us to compute their optimal policies. These optimal policies serve as natural benchmarks for evaluating the performance of our proposed policy. Standard MDP techniques become computationally intractable as the problem dimension grows. For example, 3-dimensional instances are significantly harder to solve and can require several days of computation, while 4-dimensional instances are not feasible with the computing resources available to us.

To design the first 2-dimensional test problem, we partition the 13 classes of the main test problem into $K = 2$ groups, and combine the classes within each group into a single new class; see Table~\ref{class_division_2dim} in \ref{app:first_two_dim_data}. Similarly, we partition the 9 service stations of the main test problem into $J = 2$ groups, and combine the service stations within each group into a single new service station; see Table~\ref{table:code_assignments_2dim} in \ref{app:first_two_dim_data}. In this instance, however, the performance gap between the best and worst benchmark policies is less than 2\% (see the first column of Table~\ref{results_low_dim}), leaving limited room to demonstrate the value of accurately approximating the gradient of the value function $\nabla V(x)$. We therefore design a second 2-dimensional test problem based on the structural properties of the $N$-network analyzed in \citet{ghamami2013dynamic}. Those authors have proved the asymptotic optimality of a two-threshold policy under certain assumptions. For the second low-dimensional test problem, there is a larger performance gap between benchmarks. As such, it provides a setting in which the quality of the gradient approximation plays a decisive role in policy performance.
\vspace{-2mm}
\subsubsection{The first two-dimensional test problem}
We have $K = 2$ customer classes and $J = 2$ service stations. To design this test problem, we aggregate the Retail (Node: 1, 2, 3) classes into a single large class, while the remaining ten classes are combined into a second, smaller class. The resulting arrival rates for the two aggregated classes, derived directly from the dataset, are shown in the third column of Table~\ref{table_2dim} in \ref{app:first_two_dim_data}. The mean abandonment rates are set as weighted averages of the abandonment rates of individual classes within each group. Weights are proportional to each class' share of arrivals within its respective group, based on the 13-class main test example. Similarly, we set the hourly holding cost rates $h_k$, and abandonment penalties $p_k$ for each group by taking weighted averages of the corresponding cost parameters from the original classes.

To define the two service stations, we group agents according to their main service type. Agents whose main skill is Retail (see Table~\ref{table:group_code_main_service} in \ref{app:supply_data}) are assigned to service station~1, and all remaining agents are assigned to service station~2. Table~\ref{table:code_assignments_2dim} in \ref{app:first_two_dim_data} reports the agent-group codes aggregated into each station, together with the total number of agents. Table~\ref{table:service_rates_2dim} in \ref{app:first_two_dim_data} reports the service rates for each class--station pair. The resulting network is an $X$-network, as shown in Figure~\ref{fig:two_dim_x_network}. 

For this test problem, we set the scaling parameter to $r = 100$ to reflect the order of magnitude of staffing levels in each station as done earlier. This example is further studied in \ref{appendix:optimal_solution}, where the corresponding optimal nominal routing fractions $\xi_{kj}^{*}$ are shown in Equation~(\ref{optimal_solution_matrix_2dim}). All limiting quantities are computed following the same procedure described in Section \ref{section:main_test}.

\begin{figure}[!htb]
    \centering
    \begin{subfigure}[t]{0.25\textwidth}
        \centering
        \includegraphics[height=0.16\textheight]{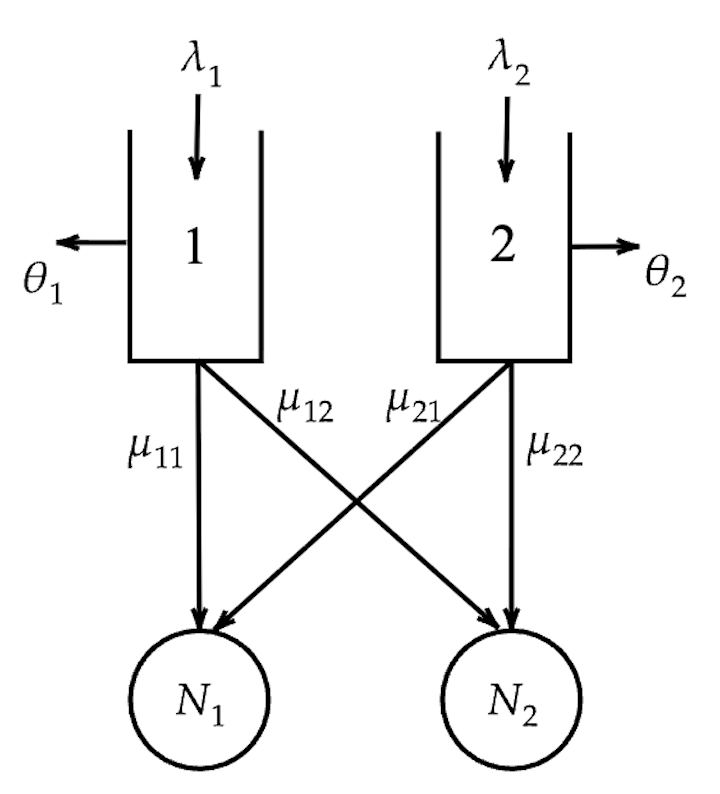}
        \caption{X-network.}
        \label{fig:two_dim_x_network}
    \end{subfigure}
    \hspace{10mm}
    \begin{subfigure}[t]{0.25\textwidth}
        \centering
        \includegraphics[height=0.16\textheight]{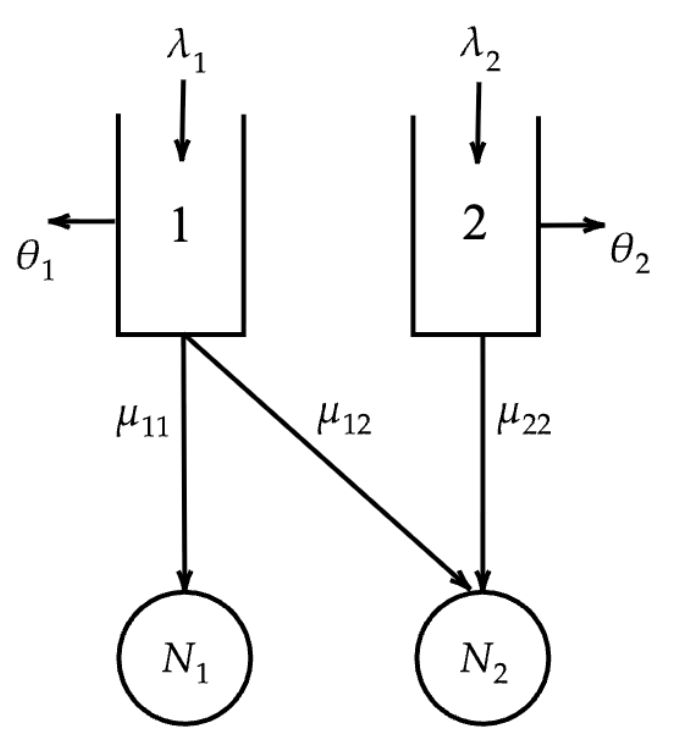}
        \caption{N-network.}
        \label{fig:two_dim_n_network}
    \end{subfigure}
    \caption{Network structures for the two low-dimensional test problems:
    (a) the first two-dimensional test problem and
    (b) the second two-dimensional test problem.}
    \label{fig:two_dim_networks}
\end{figure}
\vspace{-5mm}
% ────────────────────────────────────────────────────────────
\subsubsection{The second two-dimensional test problem}
\label{subsec:2dim_variant}

For the second 2-dimensional test problem, we design an instance whose structure follows the $N$-network studied by \citet{ghamami2013dynamic}, where server pool~1 is dedicated to class~1 and server pool~2 can serve both classes; see Figure~\ref{fig:two_dim_n_network}. \citet{ghamami2013dynamic} study this model in the conventional heavy traffic regime and derive asymptotically optimal policies. Here, we focus on the following particular parameter regime considered in \citet{ghamami2013dynamic}:
\begin{equation}
\label{eq:conditions}
c_1\mu_{12} > c_2\mu_{22}
\quad \text{and} \quad 
\frac{\theta_1+\alpha}{\theta_2+\alpha}
> \frac{c_1\mu_{12}}{c_2\mu_{22}}.
\end{equation}
The first inequality in~\eqref{eq:conditions} means that class~1 is more expensive in the $c\mu$ sense, so the cost structure favors giving server pool~2 to class~1. The second inequality assumes that the expensive class also abandons sufficiently faster than the cheaper class. Together, these conditions generate a trade-off. When the workload is low, server pool~2 prioritizes the more expensive class~1 directly, which minimizes cost. When the workload is high, however, continuing to prioritize class~1 would force class~2 jobs to accumulate in the queue; because class~2 abandons slowly, this backlog would persist and create costly congestion. In this parameter regime, \citet{ghamami2013dynamic} show that the following state-dependent two-threshold policy is asymptotically optimal: Let $L$ be a threshold on the number of class $1$ jobs and let $M$ denote a threshold on the workload process $\{W(t), \,\,t\geq0\}$ defined as
\begin{equation}\label{eq:workload}
W(t) = X_1(t) + \frac{\mu_{12}}{\mu_{22}}\,X_2(t), \quad t \geq 0.
\end{equation}
The policy operates as follows. Server pool~1 works whenever class~1 jobs are present. Server pool~2 prioritizes class~1 when $X_1(t)>L$ and $W(t)\leq M$, and prioritizes class~2 when $X_1(t)>L$ and $W(t)>M$. When $X_1(t)\leq L$, server pool~2 does not serve class~1; instead, it serves class~2 whenever buffer~2 is nonempty and otherwise idles.

The assumptions underlying \citet{ghamami2013dynamic} include $\lambda_{1} > \mu_{11}$, so that server pool~1 alone cannot handle class 1 demand and server pool~2 must help, together with conditions~\eqref{eq:conditions}. Because we are modeling call center operations, we focus attention on the many-server setting and make the following assumptions: (i)~$N_1 = N_2$, so the two server pools are equally staffed; (ii)~$\lambda_1 > \lambda_2$, so class~1 carries the larger arrival volume; and (iii)~$\mu_{12} < \mu_{22}$, so cross-trained service is slower than primary service, reflecting the reduced efficiency when a server pool~2 associate handles a class~1 call outside their specialty. 

More specifically, we set $r = 100$ and take $N_{1} = N_{2} = r$, so that the staffing scale matches the main test problem. Associates working on their primary class serve at rate $15$ calls per hour, matching the average service rate in the main test problem; cross-trained service is slower, with $\mu_{12} = 10 < 15 = \mu_{22}$. The long-run service fractions $(\xi_{12}, \xi_{22})$ for server pool~2 are chosen proportional to the service rates $(\mu_{12}, \mu_{22})$, giving $(\xi_{12}, \xi_{22}) = (0.4, 0.6)$. In the fluid limit, the balance equations yield the arrival rates $\lambda_1 = \nu_1\,\mu_{11}\xi_{11} + \nu_2\,\mu_{12}\,\xi_{12} = 19$ and $\lambda_2 = \nu_2\,\mu_{22}\,\xi_{22} = 9$.
%\[
%\lambda_1 = \nu_1\,\mu_{11}\xi_{11} + \nu_2\,\mu_{12}\,\xi_{12} = 19, \qquad
%\lambda_2 = \nu_2\,\mu_{22}\,\xi_{22} = 9.
%\]

To construct the corresponding prelimit system with target utilization $\rho = 0.95$, the prelimit hourly arrival rates are $\tilde{\lambda}_1^{r} = 0.95\,r\,\lambda_1 = 1805$ and $\tilde{\lambda}_2^{r} = 0.95\,r\,\lambda_2 =  855$. The cost and abandonment parameters are chosen such that conditions~\eqref{eq:conditions} hold. Table~\ref{table_2dim_variant} in \ref{app:second_two_dim_data} summarizes the arrival, cost, and abandonment parameters, and Table~\ref{table:service_rates_2dim_variant} in \ref{app:second_two_dim_data} reports the hourly service rates. The remaining quantities are computed following the same procedure as in Section~\ref{section:main_test}.

% ────────────────────────────────────────────────────────────
\subsection{A high-dimensional test problem}\label{sec:high_dim}

To evaluate the scalability of our method, we construct a test problem with
$\tilde{K} = 100$ customer classes and $\tilde{J} = 70$ service stations.
We set the total staffing to $\tilde{N}_{\text{total}} = 2{,}500$, matching
the maximum number of concurrent active tasks per instance supported by
Amazon Connect, a cloud-based contact center platform; see \citet{aws_connect_limits}. This represents a realistic upper bound on the scale of
operations in large call centers. 

We construct the system by growing the tree of basic activities of the main test problem in Figure~\ref{fig:optimal_solution_2002_basic} into a larger tree, attaching new customer classes and service stations sequentially as leaves. Each new node inherits its service and abandonment parameters from a template drawn uniformly at random from the original classes or stations. Long-run service fractions on the resulting tree are drawn from a symmetric Dirichlet distribution, arrival rates are set so that every station is fully utilized, and service rates on nonbasic edges are adjusted when necessary to preserve strict complementary slackness so that they indeed correspond to nonbasic activities. The system parameter $\tilde{r}$ and the target utilization $\tilde{\rho}^{\tilde{r}}$ are calibrated so that the limiting staffing levels and drift terms are of the same order of magnitude as those of the main test problem. By construction, the resulting system satisfies the heavy traffic assumption, the static planning problem has a unique optimal solution, and $\rho^{*} = 1$; see Proposition~\ref{prop:scaling} in \ref{app:high-dim-construction}, where the full construction, the underlying algorithm, and the cost parameters are provided.

% ────────────────────────────────────────────────────────────
\subsection{Benchmark policies} \label{sect:benchmarks}
For the two low-dimensional test problems introduced in Section~\ref{sect:low_dimensional}, it is computationally feasible to obtain optimal policies using standard dynamic programming techniques. These serve as natural benchmarks for comparison; see the online supplement for details. For the second two-dimensional test problem, we also consider the two-threshold policy of \citet{ghamami2013dynamic}, 
described in Section~\ref{sect:low_dimensional}.

Since an exhaustive search over all possible static priority policies is infeasible for the high-dimensional test problems, we focus on policies that have been well-studied in the literature for multiclass queueing systems: the $c\mu/\theta$ rule proposed by \citet{atar2010cmu}, the $c\mu$ rule proposed by \citet{cox1961queues}, and the fastest-server-first (FSF) rule proposed by \citet{armony2005dynamic}. The $c\mu/\theta$, $c\mu$ and FSF rules are well-studied policies and (asymptotically) optimal for different models; see \citet{atar2010cmu}, \citet{cox1961queues} and \citet{armony2005dynamic}, respectively. We also consider the $G$-$c\mu$ 
rule, a dynamic priority rule studied for a parallel-server network by \citet{mandelbaum2004scheduling}, focusing on quadratic 
queueing costs. 
\vspace{-5mm}
% ============================================================
%  SECTION 8: COMPUTATIONAL RESULTS
% ============================================================
\section{Computational results}
\label{sect:results}
This section compares our proposed policy, derived using the computational method in Section~\ref{sect:computational_method}, with the benchmark policies introduced in Section~\ref{sect:data}. For the low-dimensional problems, where the optimal policy is known from the MDP solution, our policy performs on par with the best benchmark. For the main and high-dimensional test problems, where the optimal policy is unknown, our policy outperforms all benchmarks considered. We use the same random seed for each simulation study with 10,000 replications. All the performance figures reported are subject to simulation and discretization errors.

\vspace{-3mm}

% ────────────────────────────────────────────────────────────
\subsection{Computational results for the low-dimensional test problems}
For low-dimensional test problems, the main benchmark policy is the optimal policy computed using standard dynamic programming techniques. Table \ref{results_low_dim} reports the average infinite-horizon discounted costs obtained in a simulation study, along with the percentage optimality gap of our proposed policy.
\begin{table}[H]
    \centering
    \renewcommand{\arraystretch}{1.2}
    \setlength\tabcolsep{4pt}
	{\small 
             \scalebox{0.8}{
    \begin{tabular}{lcccccccccccccccc}
        \toprule
          Method &&&&& First 2-Dimensional &&&&& Second 2-Dimensional \\
        \midrule
          Our Policy &&&&& 16,288,762  $\pm$ 138,854   &&&&&  20,552,308  $\pm$  151,233   \\
          $c\mu/\theta$ rule &&&&&   16,527,617  $\pm$  141,121   &&&&& 21,850,520 $\pm$ 145,678     \\
          $c\mu$ rule &&&&&  16,272,075  $\pm$ 140,896  &&&&&  21,175,098 $\pm$ 171,373\\
          FSF rule &&&&&  16,272,075 $\pm$ 140,896  &&&&&  21,850,520  $\pm$  145,678  \\
          G-$c\mu$ rule &&&&& 16,425,255 $\pm$ 139,698   &&&&&  21,413,156  $\pm$  153,544  \\
          Two threshold rule &&&&& NA &&&&&  20,399,333 $\pm$  154,244 \\
          Optimal Policy &&&&& 16,173,792 $\pm$   138,907   &&&&&  20,435,237 $\pm$  152,131  \\
          \midrule
          Optimality Gap &&&&& 0.71\% $\pm$ 1.22\% &&&&& 0.75\% $\pm$ 1.06\%\\
        \bottomrule
    \end{tabular}
    }
    \caption{The rows report the total cost $\pm$ the half-length of the 99\% confidence interval for each policy in the two low-dimensional test problems. The last row reports the percentage optimality gap $\pm$ the half-length of the 99\% confidence interval.}
    \label{results_low_dim}
    }
\end{table}
\vspace{-6mm}
\begin{figure}[!htb]
    \centering
    \includegraphics[width=0.45\linewidth]{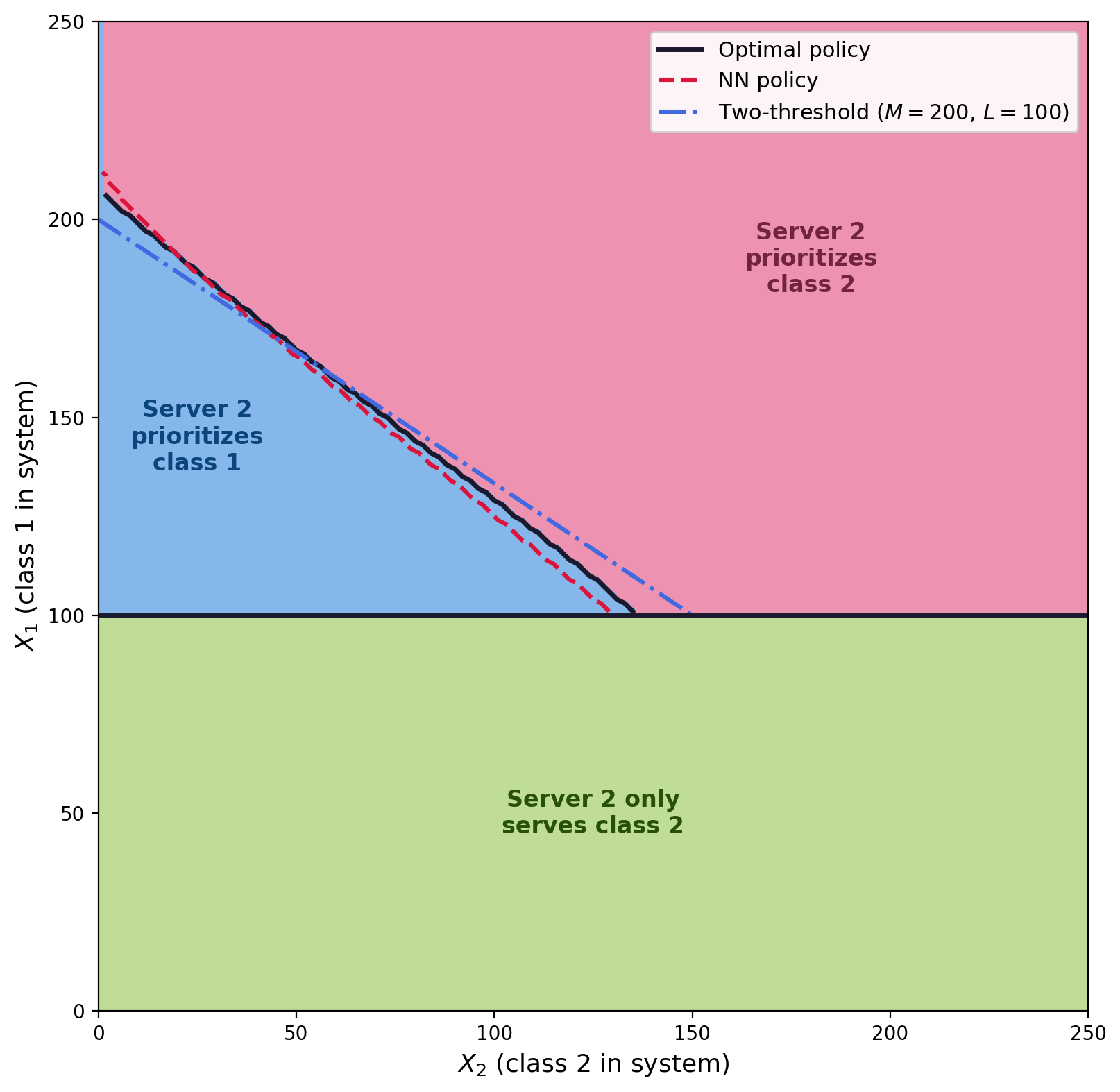}
    \caption{Priority-switching boundaries for the second two-dimensional test problem under the optimal policy, the proposed NN policy, and the two-threshold policy of \citet{ghamami2013dynamic}.}
    \label{fig:all_three_boundaries_2dim_variant}
\end{figure}
\vspace{-2mm}
Figure~\ref{fig:all_three_boundaries_2dim_variant} shows the priority-switching
boundaries of the optimal policy (derived via standard MDP techniques), the neural network-based policy, and the
two-threshold policy of \citet{ghamami2013dynamic} in the $(X_1, X_2)$ state
space. Because of the simple structure of the $N$-network, the only nontrivial control decision is which class server pool~2 should
prioritize, making the switching boundary a natural object to visualize and
compare across methods. All three boundaries agree closely in the region
$X_1 \in [120, 170]$ and $X_2 \in [50, 100]$, which is where the system spends most of its time under the stationary distribution. This explains why the three policies achieve statistically equivalent costs in Table~\ref{results_low_dim}. The two-threshold policy is
parametrized by $(M,L)$, which we calibrate by minimizing the simulated
cost over the grid $M \in \{150, 151, \ldots, 250\}$ and $L \in \{50, 51, \ldots, 150\}$, using $10{,}000$ replications per
grid point. This search required approximately 10 hours, and its cost
grows with the range of candidate values considered. By contrast, the
CTMC policy and the NN policy are each computed in less than one hour.
\color{black}
% ────────────────────────────────────────────────────────────
\subsection{Computational results for the main test problem, its variant, and the high-dimensional test problem}
\label{sect:computational_results_main_high_dim}

For the main test problem, its variant, and the 100-dimensional test problem, the optimal policy is not known. Therefore, we compare our proposed policy against the benchmark policies described in Section~\ref{sect:benchmarks}, using the best-performing benchmark in each case as the reference. Table~\ref{results_main_high_dim_test} reports the average infinite-horizon discounted costs from the simulation study, together with the percentage performance gap between our proposed policy and the best benchmark.

\begin{table}[H]
    \centering
    \renewcommand{\arraystretch}{1.2}
    \setlength\tabcolsep{4pt}
	{\small 
             \scalebox{0.8}{
    \begin{tabular}{lcccccccccccccccccccc}
        \toprule
          Method &&&&& Main (13-Dimensional) &&&&& Variant &&&&& 100-Dimensional\\
        \midrule
          Our Policy &&&&&  21,030,365 $\pm$  106,867  &&&&&  23,086,249  $\pm$ 126,375  &&&&& 20,325,378 $\pm$ 237,656 \\
          $c\mu/\theta$ rule &&&&& 25,084,619  $\pm$ 116,244  &&&&&   24,708,425 $\pm$ 112,632 &&&&& 46,400,824 $\pm$ 283,271 \\
          $c\mu$ rule &&&&& 23,755,812 $\pm$ 113,759  &&&&& 23,487,455  $\pm$ 125,318         &&&&& 83,229,246 $\pm$ 285,878 \\
          FSF rule &&&&&  21,479,438 $\pm$ 108,336  &&&&& 27,358,860  $\pm$ 137,719   &&&&& 21,933,477 $\pm$ 232,431 \\
          G-$c\mu$ rule &&&&& 28,567,826 $\pm$ 155,208  &&&&&  27,418,743  $\pm$ 141,917       &&&&& 46,595,574 $\pm$ 315,015  \\
          \midrule
          Performance Gap &&&&& -2.09\% $\pm$ 0.70\% &&&&& -1.71\% $\pm$ 0.75\% &&&&& -7.33\% $\pm$ 1.46\% \\
        \bottomrule
    \end{tabular}
    }
    \caption{The rows report the total cost $\pm$ the half-length of the 99\% confidence interval for each policy in the main, variant, and high-dimensional test problems. The last row reports the percentage performance gap between the proposed policy and the best benchmark, together with the half-length of the 99\% confidence interval.}
    \label{results_main_high_dim_test}
    }
\end{table}
\vspace{-4mm}
For the main test problem, the best-performing benchmark is the FSF rule, whereas for the cost-scaled variant it is the $c\mu$ rule. In the 100-dimensional test problem, the FSF rule is again the best benchmark. Our proposed policy outperforms the best benchmark in all three cases, improving performance by about two percentage points in the two 13-dimensional test problems and by seven percentage points in the 100-dimensional test problem.

% ────────────────────────────────────────────────────────────
\section{Concluding Remarks}
\label{sect:concluding_remarks}
\vspace{-2mm}
We close by highlighting some implications of generality in our formulation. Prior work in this regime, for example \citet{atar2005diffusion}, restricts attention to settings in which all activities are basic and policies satisfy the joint work conservation (JWC) assumption in the heavy traffic limit. JWC is the assumption that no agent is idle whenever any customer class has a nonempty queue. A weaker condition is local work conservation, which requires that no agent idles whenever there is a waiting customer from a class that the agent is eligible to serve. Our formulation allows both basic and nonbasic activities, and we permit policies that need not be jointly work conserving. Although our formulation does not impose local work conservation either, we observe that it holds for the policies we evaluate. 

In our test problems, we observe for both our proposed policy and the best benchmark policy that the nonbasic activities are used a significant fraction of the time. We also observe that the JWC condition can be violated. As will be illustrated below, this appears to depend on the network topology, especially on its sparsity. These observations point to the need for further theoretical research that incorporates these features more generally than the prior literature. Next, we illustrate our findings in the context of the main test problem and the 100-dimensional test problem.

\begin{figure}[!htb]
    \centering
    \includegraphics[width=0.99\textwidth]{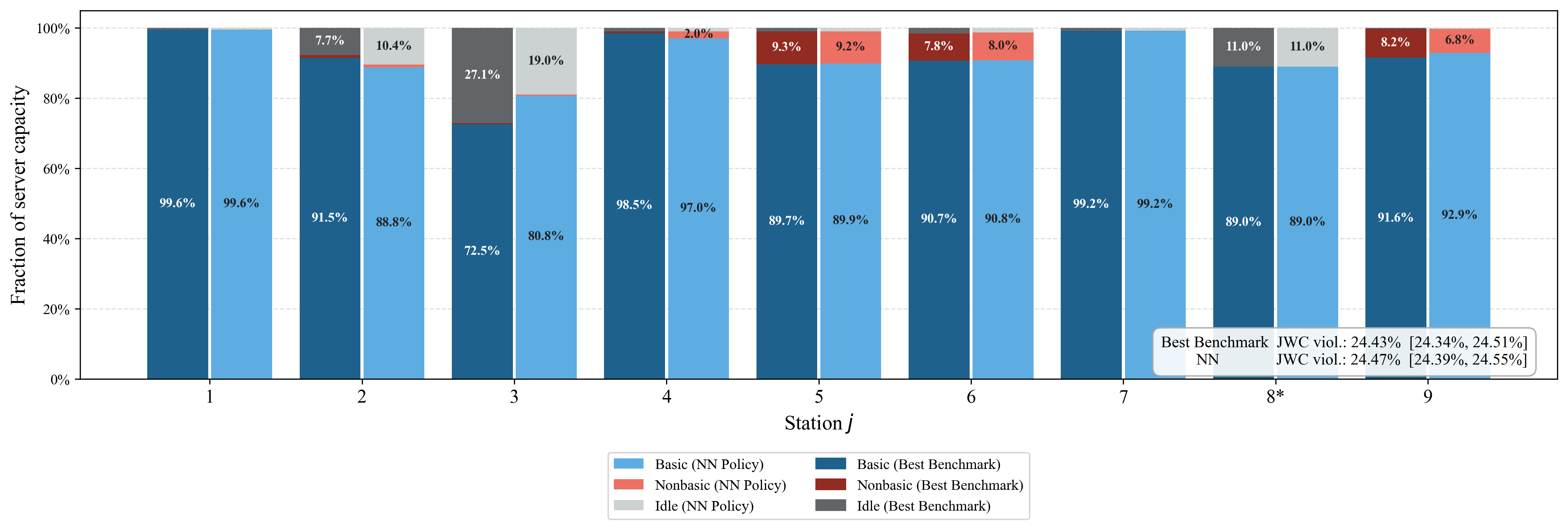}
    \caption{
    Capacity decomposition by station in the main test problem under the best benchmark policy and the proposed NN policy. Each bar reports the fraction of station capacity allocated to basic activities, nonbasic activities, and idleness. The figure also reports the percentage of time that each policy violates joint work conservation. The asterisk denotes the station for which all connected activities are basic.
    }
    \label{fig:main_capacity_comparison}
\end{figure}
\vspace{-3mm}
Figure \ref{fig:main_capacity_comparison} compares the fraction of server capacity allocated to basic activities, nonbasic activities, and idleness at each station in the main test problem under the best benchmark policy (FSF) and under our proposed policy. It also shows that nonbasic activities account for a substantial share of service capacity in several stations.  Figure~\ref{fig:high_capacity_comparison} in ~\ref{100_dim_graphs} shows the corresponding capacity decompositions for the 100-dimensional test problem under the same two policies. Together, these results suggest that restricting attention to policies that use only basic activities would exclude behavior that naturally arises in this data-driven system. Next, we consider joint work conservation. The main test problem has a sparse network structure: six of the thirteen classes can each be served by exactly one service station. As a result, when demand accumulates for one of these classes, the corresponding station can become a bottleneck even if agents at other stations remain idle. This explains why both policies violate JWC approximately 24\% of the time. On the other hand, in the 100-class, 70-station network, each class is connected to several stations, so the single-station bottlenecks present in the main test problem are largely absent. The best benchmark policy and our proposed policy differ substantially in their adherence to JWC. The best benchmark violates JWC about 3.15\% of the time, whereas our policy violates it only about 0.02\% of the time. Thus, although joint work conservation is not imposed by our formulation, the proposed policy recovers it almost exactly when the network topology makes it attainable. 

% ============================================================
%  BIBLIOGRAPHY
% ============================================================

\begingroup
\setstretch{0.92}
\setlength{\bibsep}{0pt}
\bibliographystyle{apalike}
\bibliography{bibfile}
\endgroup
\newpage
% ============================================================
%  APPENDICES
% ============================================================
\appendix
\renewcommand{\thesection}{Appendix \Alph{section}}
\section{Proof of Proposition 1}
\label{app:proof_proposition}
\begin{proof}
Applying Ito's formula to $e^{-\alpha t}\,V(\tilde{X}(t))$ on $[0,T]$ and using \eqref{eq:state_process_limit} yields
\begin{align}
    e^{-\alpha T}\,V(\tilde{X}(T)) - V(\tilde{X}(0)) &= \int_{0}^{T} e^{-\alpha t}\Big[\sum_{k = 1}^{K} \frac{\partial V(\tilde{X}(t))}{\partial x_{k}} \Big(\zeta_{k} - \theta_{k} \tilde{X}_{k}(t) + \sum_{j \in \mathcal{J}(k)} (\theta_{k} - \mu_{kj})\tilde{\psi}_{kj}(t)\Big)\Big]dt \nonumber\\
    & \quad + \int_{0}^{T} e^{-\alpha t}\Big[\sum_{k = 1}^{K} \lambda_{k} \frac{\partial^{2}V(\tilde{X}(t))}{\partial x_{k}^{2}} - \alpha V(\tilde{X}(t))\Big]dt \nonumber\\
    & \quad + \int_{0}^{T} e^{-\alpha t} \sum_{k = 1}^{K} \frac{\partial V(\tilde{X}(t))}{\partial x_{k}} \sqrt{2 \lambda_{k}}dB_{k}(t). \label{eq:ito_formula_value_func}
\end{align}
Multiplying both sides of the HJB equation (\ref{eqn:HJB_PDE}) by $e^{-\alpha t}$ and integrating over $[0,T]$ yields
\begin{align}
    \int_{0}^{T} e^{-\alpha t}\Big(\sum_{k = 1}^{K} \lambda_{k} &\frac{\partial^{2}V(\tilde{X}(t))}{\partial x_{k}^{2}} - \alpha V(\tilde{X}(t))\Big)dt =\nonumber \\
    &- \int_{0}^{T} e^{-\alpha t}\Big(\sum_{k = 1}^{K} c_{k}\tilde{X}_{k}(t) + (\zeta_{k} - \theta_{k} \tilde{X}_{k}(t))\frac{\partial V(\tilde{X}(t))}{\partial x_{k}}\Big)dt \nonumber \\
    & + \int_{0}^{T} e^{-\alpha t} \sup_{\psi \in \Psi(x)} \Big[\sum_{k = 1}^{K}\sum_{j \in \mathcal{J}(k)} \Big(c_{k} + (\mu_{kj} - \theta_{k})\frac{\partial V(\tilde{X}(t))}{\partial x_{k}}\Big)\psi_{kj}\Big]dt \label{eq:hjb_value_func}
\end{align}
Substituting Equation (\ref{eq:hjb_value_func}) into Equation (\ref{eq:ito_formula_value_func}), using the definition of $F$ (Equation (\ref{eq:f_function})), and writing the integrand of the stochastic integral term in (\ref{eq:ito_formula_value_func}) in vector notation gives Equation (\ref{eq:key_identity}). 
\end{proof}

% ────────────────────────────────────────────────────────────
%  Section: Approximating F(x,v)
% ────────────────────────────────────────────────────────────
\section{Approximating the auxiliary function \texorpdfstring{$F(x,v)$}{F(x,v)}} 
\label{sect:offline_approximation_auxiliary_function}
The auxiliary function $F(x,v),$ defined via Equation \eqref{eq:f_function}, can be rewritten as follows:
\begin{equation}
    F(x,v) = H(x,v) + \sum_{k=1}^{K}D_{k}(x)\,v_{k}-\sum_{k=1}^{K}c_{k}x_{k},
\end{equation}
where
\begin{equation}
    H(x,v) = \sup_{\psi \in \Psi(x)}\left[\sum_{k=1}^{K}\sum_{j\in \mathcal{J}(k)}(c_{k} + (\mu_{kj}-\theta_{k})\,v_{k})\,\psi_{kj}\right], \label{eq:h_function}
\end{equation}
\begin{equation}
    D_{k}(x) = \sum_{j \in \mathcal{J}(k)}(\theta_{k} - \mu_{kj})\,\tilde{\psi}_{kj}(x),\quad k = 1,\ldots,K. \label{eq:d_function}
\end{equation}
Also let $D(x) = (D_{1}(x),\ldots,D_{K}(x))^{\prime}$ for $x \in \mathbb{R}^{K}$.

As seen from Equation \eqref{eq:h_function}, computing $H(x,v)$ involves solving a linear program whose feasible set is $\Psi(x)$. Similarly, computing $D_{k}(x)$ also involves solving a linear program because our reference policy $\tilde{\psi}(\cdot)$ is defined via a linear program as explained below. Since these linear programs must be evaluated at every state visited along the sample path and at every iteration of the neural network training algorithm, solving them online is computationally expensive. To address this, we approximate the functions $H$ and $D$ with neural networks $\hat{H}$ and $\hat{D}$, respectively. These neural networks are trained on datasets generated by solving the corresponding linear programs over sampled inputs. Given the trained networks, the approximation of the auxiliary function $F(x,v)$, used in our computational method (see Algorithm \ref{main_algorithm}) is 
\begin{equation}
    \hat{F}(x, v) = \hat{H}(x, v) + \sum_{k=1}^{K}\hat{D}_{k}(x) \, v_{k} - \sum_{k=1}^{K} c_k\, x_k.
\end{equation}
Next, we describe further details of our offline approximations of $H$ and $D$.

% ── Supremum term ──
\paragraph{Approximating the $H$ function.}
As a preliminary to training the neural network $\hat{H}$ to approximate $H(x,v)$, we first note that $x$ corresponds to the (scaled) system state and $v$ corresponds to the gradient $\nabla V(x)$ of the value function at that state. For the neural network training, one needs to sample data points $\{(x^{(m)}, v^{(m)}): m=1,\ldots,M_{H}\}$. Then for each data point $(x^{(m)}, v^{(m)})$, we compute $H(x^{(m)}, v^{(m)})$ to use as the ground truth for the neural network training. 

Let us first describe how we sample the data for training. We set $M_{H}$ = 1 million and uniformly sample $(x,v)$ pairs from the compact set
\begin{equation*}
    \prod_{k=1}^{K}\left[\ushort{x}_{k}, \bar{x}_{k}\right] \times \prod_{k=1}^{K}\left[\ushort{v}_{k}, \bar{v}_{k}\right] \subset \mathbb{R}^{2K},
\end{equation*}
where the bounds $\ushort{x}_{k}$, $\bar{x}_{k}$, $\ushort{v}_{k}$, and $\bar{v}_{k}$ $(k=1,\ldots,K)$ are determined as follows. First, in order to set the bounds $\ushort{x}_{k}$ and $\bar{x}_{k}$ $(k=1,\ldots,K)$, we simulate the (prelimit) system under the first-come-first-served (FCFS) rule and let $\bar{X}^{r}$ denote the maximum total number of customers ever observed in the system. Thus, we have $0 \leq X^{r}_{k}(t) \leq \bar{X}^{r}$ for all $k,t$.

Then applying the diffusion scaling in Equation \eqref{eq:scaled_state} to these (prelimit) inequalities yields the following bounds for the (limiting) state:
\begin{equation*}
    \ushort{x}_{k} = \frac{0 - rx_{k}^{*}}{\sqrt{r}} = -\sqrt{r}x_{k}^{*}, \quad \text{and} \quad \bar{x}_{k} = \frac{\bar{X}^{r} - rx_{k}^{*}}{\sqrt{r}}, \quad k = 1,\ldots,K.
\end{equation*}
Second, to set the bounds $\ushort{v}_{k}$ and $\bar{v}_{k}$ on the gradient of the value function $V$, we rely on the structural properties of $V$. One expects the value function to be nondecreasing in each of its arguments, i.e., $\partial V/\partial x_{k} \geq 0$ for $k = 1,\ldots, K$. Thus, we set $\ushort{v}_{k} = 0,$ for $k = 1,\ldots,K$.

Intuitively, $\partial V/\partial x_{k}$ corresponds to the rate of change in the optimal objective as class $k$ queue length increases. Consider adding a customer to class $k$ queue. If this job is never served, it eventually abandons. The corresponding additional (discounted) cost is given as 
\begin{equation*}
    \frac{c_{k}}{(\theta_{k} + \alpha)} \leq \frac{c_{k}}{\theta_{k}}, \quad k =1,\ldots,K.
\end{equation*}
Thus, one expects $\partial V/\partial x_{k} \leq c_{k}/\theta_{k}$ for $k = 1,\ldots, K$. As such, we set $\bar{v}_{k} = c_{k}/\theta_{k}$ for $k = 1,\ldots,K$.
%\begin{equation*}
%    \bar{v}_{k} = \frac{c_{k}}{\theta_{k}}, \quad k = 1,\ldots, K.
%\end{equation*}
Third, given a data point $(x^{(m)}, v^{(m)})$, we solve the following linear program to compute $H(x^{(m)}, v^{(m)}):$ Choose $\psi \in \mathbb{R}^{|\mathcal{E}|}$ so as to
\begin{align}
&\text{Maximize } \sum_{k = 1}^{K} \sum_{j \in \mathcal{J}(k)} \Big(c_{k} + (\mu_{kj} - \theta_{k})\,v^{(m)}_{k}\Big)\psi_{kj} \label{eq:objective_off_lp}\\
&\text{subject to} \nonumber \\
&\sum_{j \in \mathcal{J}(k)} \psi_{kj} \leq x^{(m)}_{k}, \quad \,  k = 1,\ldots,K, \label{eq:lp1_off_lp}\\
&\sum_{k \in \mathcal{K}(j)} \psi_{kj} \leq 0, \quad \quad \,\, j = 1, \ldots,J, \label{eq:lp2_off_lp}\\
&\psi_{kj} \geq 0, \quad \quad \quad \quad \,\,\,\,\,\,\, (k,j) \in \mathscr{N}, \label{eq:lp3_off_lp}\\
&\psi_{kj} \geq -\sqrt{r}\psi_{kj}^{*}  \quad \,\,\,\,\,\,\,\,\, (k,j) \in \mathscr{B}. \label{eq:lp4_off_lp}
\end{align}

Note that constraints \eqref{eq:lp1_off_lp}--\eqref{eq:lp3_off_lp} follow from the definition of the constraint set $\Psi(x)$; see Equation (\ref{eq:set_feasible_policies}). Crucially, the approximation that underlies our approach is that we approximate the $r^{\text{th}}$ system by the Brownian control problem. With this in mind, for the $r^{\text{th}}$ system, it follows from Equation (\ref{eq:service_basic1}) and from the natural requirement $\psi^{r}(t) \geq 0$ that $\hat{\psi}_{kj}^{r}(t) \geq -\sqrt{r}\psi_{kj}^{*}$ for $t \geq 0$. 
%\begin{equation*}
%    \hat{\psi}_{kj}^{r}(t) \geq -\sqrt{r}\psi_{kj}^{*}, \quad t \geq 0.
%\end{equation*}
Motivated by this, we impose constraint \eqref{eq:lp4_off_lp} on the (limiting) controls for the basic activities. We then denote the optimal objective of \eqref{eq:objective_off_lp}--\eqref{eq:lp4_off_lp} by $H(x^{(m)},v^{(m)})$.
\begin{remark}
Because we do not impose any restrictions on the relative magnitudes of $\mu_{kj}$ and $\theta_{k}$, the objective coefficient $c_{k} + (\mu_{kj} - \theta_{k})\,v_{k}$ can be negative for some $(k,j) \in \mathcal{E}$. In these cases, the objective function in \eqref{eq:objective_off_lp} can be made arbitrarily large by taking $\psi_{kj} \rightarrow -\infty$. Constraints \eqref{eq:lp3_off_lp}--\eqref{eq:lp4_off_lp} prevent this.
\end{remark}
Lastly, we proceed with the neural network training as described in Subroutine \ref{subroutine:h_approx}. The hyperparameters used to train the deep neural network $\hat{H}_{\omega}(\cdot,\cdot)$ are shown in Table \ref{tab:hnet-driftnet-hyperparams} in \ref{app:offline_hyperparams}.

% ── Subroutine: H(x,v) approximation ──
\begin{algorithm}[!htb]
\floatname{algorithm}{Subroutine}
\caption{Offline approximation of the supremum term $H(x,v)$.}
\label{subroutine:h_approx}
\textbf{Input:} The number of training samples $M_H$, the sampling domains $\prod_{k=1}^{K}[\ushort{x}_k, \bar{x}_k]$ and $\prod_{k=1}^{K}[\ushort{v}_k, \bar{v}_k]$, the number of epochs $E_H$, a batch size $S_H$, a learning rate schedule $(\text{lr}_0, \gamma, \text{milestones})$ (We use PyTorch's \texttt{MultiStepLR} scheduler: the learning rate starts at $\text{lr}_0$ and is multiplied by $\gamma \in (0,1)$ at each epoch listed in $\text{milestones}$), and neural network architecture hyperparameters (number of layers, neurons per layer, activation function).\\
\textbf{Output:} A trained neural network $\hat{H}(\cdot\,, \cdot)$ approximating $H(x,v)$.
\begin{algorithmic}[1]
\State \textbf{Data generation:}
\For{$m = 1, \ldots, M_H$}
    \State Sample $x^{(m)}_k \sim \text{Uniform}([\ushort{x}_k, \bar{x}_k])$ and $v^{(m)}_k \sim \text{Uniform}([\ushort{v}_k, \bar{v}_k])$ independently for $k = 1, \ldots, K$.
    \State Solve the linear program \eqref{eq:objective_off_lp}--\eqref{eq:lp4_off_lp} with $(x, v) = (x^{(m)}, v^{(m)})$ to obtain $H^{(m)} = H(x^{(m)}, v^{(m)})$.
\EndFor
\State \textbf{Network training:}
\State Initialize a feedforward neural network $\hat{H}_\omega: \mathbb{R}^{2K} \to \mathbb{R}$ with parameter vector $\omega$.
\State Split the dataset $\{(x^{(m)}, v^{(m)}), H^{(m)}\}_{m=1}^{M_H}$ into training (80\%) and validation (20\%) sets.
\For{epoch $= 1, \ldots, E_H$}
    \For{each mini-batch $\{(x^{(m)}, v^{(m)}), H^{(m)}\}$ of size $S_H$}
        \State Compute the loss: $\ell(\omega) = \frac{1}{S_H} \sum_{m} \big(\hat{H}_\omega(x^{(m)}, v^{(m)}) - H^{(m)}\big)^2$.
        \State Update $\omega$ using Adam optimizer.
    \EndFor
    \State Update learning rate according to schedule.
\EndFor
\State \Return $\hat{H}(\cdot\,,\cdot) \equiv \hat{H}_\omega(\cdot\,,\cdot)$.
\end{algorithmic}
\end{algorithm}
\color{black}
% ── Reference policy term ──
\paragraph{Approximating the $D$ function.}
Recall from Section~\ref{sect:equivalent_characterization} that our computational method requires a reference policy $\tilde{\psi}$ to (i)~simulate sample paths of the reference process $\tilde{X}$ via Equation~\eqref{eq:training_sde}, and (ii)~evaluate the auxiliary function $F(x,v)$ defined in Equation~\eqref{eq:f_function}. In both expressions, the reference policy enters through the term $\sum_{j \in \mathcal{J}(k)} (\theta_{k} - \mu_{kj})\,\tilde{\psi}_{kj}(x)$. To facilitate our analysis, recall the mapping $D: \mathbb{R}^{K} \rightarrow \mathbb{R}^{K}$, defined above by $D_k(x) = \sum_{j \in \mathcal{J}(k)} (\theta_k - \mu_{kj})\,\tilde{\psi}_{kj}(x),$ for $k = 1,\ldots, K$.

In the analysis below, we consider three reference policies drawn from the literature: the $c\mu$ rule \citep{cox1961queues}, the $c\mu/\theta$ rule \citep{atar2010cmu}, and the fastest-server-first (FSF) rule \citep{armony2005dynamic}. Each is obtained by solving a linear program over the feasible set defined by constraints \eqref{eq:lp1_off_lp}--\eqref{eq:lp4_off_lp}, with the objective function
\vspace{-3mm}
\begin{equation}
    \operatorname{Maximize} \sum_{k = 1}^{K} \sum_{j \in \mathcal{J}(k)} w_{kj}\,\psi_{kj},
\end{equation}
where the activity weights $w_{kj}$ are set to $c_{k}\mu_{kj}$ for the $c\mu$ rule, $c_{k}\mu_{kj}/\theta_{k}$ for the $c\mu/\theta$ rule, and $\mu_{kj}$ for the FSF rule.

Since constraint~\eqref{eq:lp1_off_lp} depends on the current state $x$, the optimal solution $\tilde{\psi}(x)$ must be recomputed at every state visited along the sample path. Solving this linear program online at each time step of the Euler discretization (Subroutine~\ref{subroutine:euler}) and at each evaluation of $F$ during training (Algorithm~\ref{main_algorithm}) is computationally demanding. To address this, we approximate $D(x)$ using a neural network $\hat{D}$ trained offline; see Subroutine~\ref{subroutine:d_approx}. The state sampling domain used to generate the training data for approximating the function $D$ is the same as that used to approximate the function $H$, i.e., $\prod_{k=1}^{K}\left[\ushort{x}_{k}, \bar{x}_{k}\right],$
%\begin{equation*}
%    \prod_{k=1}^{K}\left[\ushort{x}_{k}, \bar{x}_{k}\right], 
%\end{equation*}
that is defined above. The hyperparameters used to train the deep neural network $\hat{D}_{\phi}(\cdot)$ are shown in Table \ref{tab:hnet-driftnet-hyperparams} in \ref{app:offline_hyperparams}.
% ── Subroutine: D(x) approximation ──
\begin{algorithm}[!htb]
\floatname{algorithm}{Subroutine}
\caption{Offline approximation of the reference policy term $D(x)$.}
\label{subroutine:d_approx}
\textbf{Input:} A reference policy rule (e.g., $c\mu$, $c\mu/\theta$, or FSF), the number of training samples $M_D$, the sampling domain $\prod_{k=1}^{K}[\ushort{x}_k, \bar{x}_k]$, the number of epochs $E_D$, a batch size $S_D$, a learning rate schedule $(\text{lr}_0, \gamma, \text{milestones})$, and neural network architecture hyperparameters (number of layers, neurons per layer, activation function).\\
\textbf{Output:} A trained neural network $\hat{D}(\cdot)$ approximating $D(x) = \big(D_1(x), \ldots, D_K(x)\big)$, where $D_k(x) = \sum_{j \in \mathcal{J}(k)} (\theta_k - \mu_{kj})\,\tilde{\psi}_{kj}(x)$ for $k = 1, \ldots, K$.
\begin{algorithmic}[1]
\State \textbf{Data generation:}
\For{$m = 1, \ldots, M_D$}
    \State Sample $x^{(m)}_k \sim \text{Uniform}([\ushort{x}_k, \bar{x}_k])$ independently for $k = 1, \ldots, K$.
    \State Solve the linear program corresponding to the chosen reference policy rule with state $x = x^{(m)}$ subject to constraints \eqref{eq:lp1_off_lp}--\eqref{eq:lp4_off_lp} to obtain $\tilde{\psi}^{(m)}$.
    \State Compute $D_k^{(m)} = \sum_{j \in \mathcal{J}(k)} (\theta_k - \mu_{kj})\,\tilde{\psi}_{kj}^{(m)}$ for $k = 1, \ldots, K$.
\EndFor
\State \textbf{Network training:}
\State Initialize a feedforward neural network $\hat{D}_\phi: \mathbb{R}^{K} \to \mathbb{R}^{K}$ with parameter vector $\phi$.
\State Split the dataset $\{x^{(m)}, D^{(m)}\}_{m=1}^{M_D}$ into training (80\%) and validation (20\%) sets.
\For{epoch $= 1, \ldots, E_D$}
    \For{each mini-batch $\{x^{(m)}, D^{(m)}\}$ of size $S_D$}
        \State Compute the loss: $\ell(\phi) = \frac{1}{S_D} \sum_{m} \big\|\hat{D}_\phi(x^{(m)}) - D^{(m)}\big\|^2$.
        \State Update $\phi$ using Adam optimizer.
    \EndFor
    \State Update learning rate according to schedule.
\EndFor
\State \Return $\hat{D}(\cdot) \equiv \hat{D}_\phi(\cdot)$.
\end{algorithmic}
\end{algorithm}
\vspace{-8mm}
\subsection{Hyperparameters of the neural networks for \texorpdfstring{$H$}{H} and \texorpdfstring{$D$}{D} functions}
\label{app:offline_hyperparams}
\begin{table}[H]
    \centering
    \setlength\tabcolsep{4pt} % default value: 6pt
    {\small
        \scalebox{0.73}{
        \begin{tabular}{lllllll}
            \toprule
            Hyperparameters && $\hat{H}_{\omega}(\cdot,\cdot)$ network && $\hat{D}_{\phi}(\cdot)$ network\\
            \midrule
            Number of hidden layers && 4 && 4 \\
            \\
            Number of neurons per layer && 150 && 150 \\
            \\
            Input batch normalization && Yes && Yes \\
            \\
            Hidden-layer batch normalization && Yes && Yes \\
            \\
            Activation function \& Weight initialization  && ELU \& Kaiming uniform && Leaky ReLU \& Kaiming uniform \\
            \\
            Optimizer && Adam && Adam \\
            \\
            Batch size && 4096 && 4096 \\
            \\
            Number of epochs && 5000 && 1000 \\
            \\
            Learning rate (epoch range) && 1e-2 \;\,(0, 150)     && 1e-2 \;\,(0, 150) \\
                                        && 5e-3 \;\,(150, 350)   && 1e-3 \;\,(150, 350) \\
                                        && 2.5e-3 (350, 5000)    && 1e-4 \;\,(350, 1000) \\
            \\
            Train/validation split && 80/20 && 80/20 \\
            \bottomrule
        \end{tabular}
        }
    }
    \caption{Summary of the hyperparameters used for the $\hat{H}$ and the $\hat{D}$ networks.}
    \label{tab:hnet-driftnet-hyperparams}
\end{table}

% ────────────────────────────────────────────────────────────
\section{Derivation of the proposed policy for the prelimit system}
\label{sec:derivation_proposed_policy}

We derive the prelimit linear program \eqref{eq:prelimit_obj}--\eqref{eq:prelimit_c3} by substituting the scaling 
relations \eqref{eq:scaled_state}--\eqref{eq:scaled_control} into the 
limiting linear program \eqref{eq:lp1_off_lp}--\eqref{eq:lp4_off_lp} and simplifying each 
component in turn.

\paragraph{Objective.}
Substituting Equation~\eqref{eq:scaled_control} 
into the objective of~\eqref{eq:objective_off_lp} gives
\begin{align*}
    &\sum_{k,\, j} \Big(c_k + (\mu_{kj} - \theta_k)\, 
    G_k^{\nu^*}(\hat{X}(t))\Big)\, \hat{\psi}_{kj} \\
    &\quad= \sum_{k,\, j} \Big(c_k + (\mu_{kj} - \theta_k)\, 
    G_k^{\nu^*}(\hat{X}(t))\Big) 
    \frac{\psi_{kj}^r(t) - r\psi_{kj}^*}{\sqrt{r}} \\
    &\quad= \frac{1}{\sqrt{r}}\sum_{k,\, j} \Big(c_k + (\mu_{kj} - \theta_k)\, 
    G_k^{\nu^*}(\hat{X}(t))\Big)\, \psi_{kj}^r(t) 
    - \frac{1}{\sqrt{r}}\sum_{k,\, j} \Big(c_k + (\mu_{kj} - \theta_k)\, 
    G_k^{\nu^*}(\hat{X}(t))\Big)\, r\psi_{kj}^*.
\end{align*}
Since $1/\sqrt{r} > 0$ and the second sum is constant with respect to 
the decision variables $\psi_{kj}^r(t)$, maximizing this expression is 
equivalent to maximizing
\begin{equation*}
    \sum_{k,\, j} \Big(c_k + (\mu_{kj} - \theta_k)\, 
    G_k^{\nu^*}(\hat{X}(t))\Big)\, \psi_{kj}^r(t),
\end{equation*}
which yields \eqref{eq:prelimit_obj}.

\paragraph{Constraint \eqref{eq:lp1_off_lp} $\to$ \eqref{eq:prelimit_c1}.}
Substituting the scaling 
relations~\eqref{eq:scaled_state}--\eqref{eq:scaled_control} into the 
class constraint $\sum_{j \in \mathcal{J}(k)} \hat{\psi}_{kj} \leq 
\hat{X}_k$ gives
\begin{equation*}
    \sum_{j \in \mathcal{J}(k)} 
    \frac{\psi_{kj}^r(t) - r\psi_{kj}^*}{\sqrt{r}}
    \leq \frac{X_k^r(t) - r\, x_k^*}{\sqrt{r}}.
\end{equation*}
Multiplying through by $\sqrt{r}$ gives
\begin{equation*}
    \sum_{j \in \mathcal{J}(k)} \psi_{kj}^r(t) 
    - r\!\sum_{j \in \mathcal{J}(k)} \psi_{kj}^* 
    \leq X_k^r(t) - r\, x_k^*.
\end{equation*}
By Equations~\eqref{eq:nominal_no_system}--\eqref{eq:nominal_no_service}, 
the fluid solution satisfies $\sum_{j \in \mathcal{J}(k)} 
\psi_{kj}^* = x_k^*$, so the constant terms on both sides 
cancel:
\begin{equation*}
    \sum_{j \in \mathcal{J}(k)} \psi_{kj}^r(t) 
    \leq X_k^r(t),
\end{equation*}
which yields \eqref{eq:prelimit_c1}.

\paragraph{Constraint \eqref{eq:lp2_off_lp} $\to$ \eqref{eq:prelimit_c2}.}
Substituting~\eqref{eq:scaled_control} into the station constraint 
$\sum_{k \in \mathcal{K}(j)} \hat{\psi}_{kj} \leq 0$ gives
\begin{equation*}
    \sum_{k \in \mathcal{K}(j)} 
    \frac{\psi_{kj}^r(t) - r\psi_{kj}^*}{\sqrt{r}} \leq 0.
\end{equation*}
Multiplying by $\sqrt{r}$ and rearranging gives
\begin{equation*}
    \sum_{k \in \mathcal{K}(j)} \psi_{kj}^r(t) 
    \leq r \sum_{k \in \mathcal{K}(j)} \psi_{kj}^*.
\end{equation*}
By Equations~\eqref{assumption:heavy_traffic} 
and~\eqref{eq:nominal_no_service}, the fluid solution satisfies 
$\sum_{k \in \mathcal{K}(j)} \psi_{kj}^* = \nu_j$. Together 
with $\nu_{j} = N_{j}^{r}/r$ for $j = 1,\ldots, J$, the right-hand side therefore 
equals $r\nu_j = N_j^r$, giving
\begin{equation*}
    \sum_{k \in \mathcal{K}(j)} \psi_{kj}^r(t) \leq N_j^r,
\end{equation*}
which yields \eqref{eq:prelimit_c2}.

\paragraph{Constraints \eqref{eq:lp3_off_lp}--\eqref{eq:lp4_off_lp} 
$\to$ \eqref{eq:prelimit_c3}.}
We consider basic and nonbasic activities separately. For nonbasic 
activities $(k,j) \in \mathscr{N}$, the fluid solution has 
$\psi_{kj}^* = 0$, so substituting into~\eqref{eq:scaled_control} 
gives $\hat{\psi}_{kj} = \psi_{kj}^r(t)/\sqrt{r}$, and the 
nonnegativity constraint $\hat{\psi}_{kj} \geq 0$ translates directly 
to $\psi_{kj}^r(t) \geq 0$. For basic activities $(k,j) \in 
\mathscr{B}$, substituting~\eqref{eq:scaled_control} into the lower 
bound $\hat{\psi}_{kj} \geq -\sqrt{r}\,\psi_{kj}^*$ gives
\begin{equation*}
    \frac{\psi_{kj}^r(t) - r\psi_{kj}^*}{\sqrt{r}} 
    \geq -\sqrt{r}\,\psi_{kj}^*,
\end{equation*}
and multiplying through by $\sqrt{r}$ and simplifying yields 
$\psi_{kj}^r(t) \geq 0$. In both cases we obtain 
\eqref{eq:prelimit_c3}.

% ────────────────────────────────────────────────────────────
\section{Data used for the test problems}
% ────────────────────────────────────────────────────────────
\subsection{Agent and service rate data}
\label{app:supply_data}
\begin{table}[H]
    \centering
    \setlength\tabcolsep{8pt}
    \renewcommand{\arraystretch}{1.1}
    \scalebox{0.65}{
        \begin{tabular}{ccc@{\hskip 24pt}ccc}
            \toprule
            Group & \# Agents & Service Type & Group & \# Agents & Service Type \\
            \midrule
            1     & 93 & Retail         & 31 & 31 & Consumer Loans   \\
            5     & 89 & Retail         & 33 & 19 & Online Banking   \\
            9     & 28 & EBO            & 34 & 45 & Telesales        \\
            15/16 & 12 & Retail         & 38 & 3  & Case Quality     \\
            19/20 & 15 & Premier        & 39 & 1  & Case Quality     \\
            26    & 15 & Business       & 40 & 4  & Priority Service \\
            28    & 3  & Business       &    &    &                  \\
            30    & 9  & Business       &    &    &                  \\
            \bottomrule
        \end{tabular}
    }
    \caption{Average number of agents and main service type for each agent group code.}
    \label{table:group_code_main_service}
\end{table}

\begin{table}[H]
    \centering
    \setlength\tabcolsep{6pt}
    \renewcommand{\arraystretch}{1.1}
    {\footnotesize
    \scalebox{0.7}{
    \begin{tabular}{ll*{13}{c}}
        \toprule
        Group && Retail & Retail & Retail & Premier & Business & Platinum & Consumer & Online & EBO & Telesales & Subanco & Case & Priority \\
        Code && (Node: 1) & (Node: 2) & (Node: 3) & & & & Loans & Banking & & & & Quality & Service \\
        \midrule
        1 && 62.35 & 23.50 & 13.92 & 0.06 & 0 & 0 & 0.04 & 0 & 0 & 0.12 & 0 & 0 & 0 \\
        5 && 0 & 62.85 & 36.63 & 0.17 & 0.33 & 0 & 0 & 0.01 & 0 & 0 & 0 & 0 & 0 \\
        9 && 58.59 & 25.51 & 0 & 0.29 & 0 & 0 & 0 & 0.07 & 15.53 & 0.01 & 0 & 0 & 0 \\
        15/16 && 29.47 & 0 & 47.43 & 2.15 & 0 & 0 & 0 & 0 & 0 & 0.02 & 20.93 & 0 & 0 \\
        19/20 && 0 & 9.36 & 9.17 & 77.28 & 2.37 & 0 & 1.82 & 0 & 0 & 0 & 0 & 0 & 0 \\
        26 && 0 & 1.99 & 4.28 & 0.05 & 90.96 & 1.80 & 0 & 0 & 0 & 0.92 & 0 & 0 & 0 \\
        28 && 0 & 8.48 & 0 & 0 & 86.96 & 3.05 & 0 & 0 & 0 & 1.51 & 0 & 0 & 0 \\
        30 && 0 & 0.69 & 0 & 0 & 79.72 & 19.55 & 0 & 0 & 0.005 & 0.03 & 0 & 0 & 0 \\
        31 && 3.77 & 0 & 9.66 & 0 & 0 & 0 & 86.57 & 0 & 0 & 0.01 & 0 & 0 & 0 \\
        33 && 3.12 & 0 & 10.44 & 0.15 & 0.42 & 0 & 0.27 & 85.59 & 0.02 & 0 & 0 & 0 & 0 \\
        34 && 0 & 0 & 0.24 & 0 & 0.05 & 0 & 0.06 & 0.01 & 0 & 99.63 & 0 & 0 & 0 \\
        38 && 0.31 & 0.39 & 0 & 0 & 0 & 0 & 0 & 0 & 0 & 0 & 0.15 & 92.71 & 6.44 \\
        39 && 0.22 & 0 & 0 & 0 & 0 & 0 & 0 & 0 & 0 & 0 & 0.65 & 64.44 & 34.70 \\
        40 && 0.58 & 0.12 & 0 & 0 & 0 & 0 & 0 & 0 & 0 & 0 & 0 & 7.01 & 92.29 \\
        \bottomrule
    \end{tabular}
    }}
    \caption{Percentage (\%) of each service type among all calls handled by each agent group code. Each row represents the conditional distribution of service types for the corresponding group.}
\label{table:num_obs_2002}
\end{table}

\begin{table}[H]
	\centering
	\setlength\tabcolsep{4pt}
	{\small 
             \scalebox{0.75}{
		\begin{tabular}{llllccll}
			\toprule
			Service Station && Combined Codes &&  \# of Agents && Service Types Offered\\
			\midrule
			1 && 1 && 93 && Retail (Node: 1, 2, 3), Telesales\\
                2 && 5 && 89 && Retail (Node: 2, 3), Premier, Business \\
                3 && 9 && 28 && Retail (Node: 1, 2), Premier, EBO\\
			4 && 26, 28, 30 && 27 && Retail (Node: 2, 3), Business, Platinum, Telesales\\			
                5 && 19/20 && 15 && Retail (Node: 2, 3), Premier, Business, Consumer Loans\\
			6 && 31 && 31 && Retail (Node: 1, 3), Consumer Loans\\
			7 && 33 && 19 && Retail (Node: 1, 3), Premier, Business, Consumer Loans, Online Banking\\
			8 && 34 && 45 && Retail (Node: 3), Telesales\\
			9 && 15/16, 38, 39, 40 && 20 && Retail (Node: 1, 2, 3), Premier, Subanco, Case Quality, Priority Service\\
			\bottomrule 
		\end{tabular}
	}
 }
        \caption{The codes of agent groups combined, the average number of agents in each service station and the service types that the agents in each service station can serve.}
	\label{table:code_assignments}
\end{table}

\begin{table}[!htb]
    \setlength\tabcolsep{3pt}
	{\small 
             \scalebox{0.79}{
		\begin{tabular}{llccccccccccccc}
			\toprule
			Service & Retail & Retail & Retail & Premier & Business & Platinum & Consumer & Online & EBO & Telesales & Subanco & Case & Priority \\
			Station & (Node: 1) & (Node: 2) & (Node: 3) & & & & Loans & Banking & & & & Quality & Service\\
			\midrule
                1 & 16.47 & 16.72 & 15.91 & - & - & - & - & - & - & 17.57 & - & - & -\\
                2 & - & 16.36 & 17.28 & 13.21 & 15.61 & - & - & - & - & - & - & - & -\\
                3 & 16.05 & 14.82 & - & 11.90 & - & - & - & - & 9.13 & - & - & - & -\\
                4 & -	& 16.65 & 16.74 & - & 15.45 & 15.37 & - & - & - & 15.87 & - & - & - \\
                5 & - & 16.14 & 16.47 & 13.58 & 17.01 & - & 14.12 & - & - & - & - & - & - \\
                6 & 16.07 & - & 18.13 & - & - & - & 15.18 & - & - & - & - & - & - \\
                7 & 16.29 & - & 15.19 & 12.49 & 14.73 & - & 13.43 & 10.86 & - & - & - & - & - \\
                8 & - & - & 26.71 & - & - & - & - & - & - & 9.63 & - & - & - \\
                9 & 16.79 & 17.26 & 16.72 & 15.04 & - & - & - & - & - & - & 10.86 & 11.36 & 11.21\\
			\bottomrule 
		\end{tabular}
	}
 }
        \caption{The hourly service rates for each service station and customer class pair for the main test problem and its variant.}
	\label{table:service_rates_2002}
\end{table}
\subsection{Data used for the main test problem and its variant}\label{data_main_test}
\begin{table}[H]
    \centering
    \vspace{0mm}
    \setlength\tabcolsep{4pt}
    {\small 
         \scalebox{0.8}{
        \begin{tabular}{lccccccccc}
            \toprule
            Class & Arrival & $\tilde{\lambda}^{r}$ & $\theta$ & $p$ & $h$ & $c$\\
            \rule{0pt}{3.5ex} & percentage (\%)  & (per hr) & (per hr) & (per job) & (per hr) & (per hr)\\
            \midrule
            Retail (Node: 1)  & 23.80 & 1398.65 & 7.01  & \$1.667 & \$25.00 & \$36.69\\
            Retail (Node: 2)  & 26.68 & 1568.26 & 7.74  & \$1.667 & \$25.00 & \$37.90\\
            Retail (Node: 3)  & 17.94 & 1054.34 & 7.74  & \$1.667 & \$25.00 & \$37.90\\
            Premier           &  3.11 &  182.68 & 36.26 & \$1.800 & \$27.00 & \$92.27\\
            Business          &  6.86 &  403.41 &  5.73 & \$2.000 & \$30.00 & \$41.46\\
            Platinum          &  0.60 &   35.08 &  6.12 & \$2.200 & \$33.00 & \$46.46\\
            Consumer Loans    &  7.76 &  455.84 &  4.57 & \$1.533 & \$23.00 & \$30.01\\
            Online Banking    &  3.17 &  186.23 &  8.25 & \$1.533 & \$23.00 & \$35.65\\
            EBO               &  0.86 &   50.42 &  7.38 & \$1.333 & \$20.00 & \$29.84\\
            Telesales         &  7.27 &  427.03 &  9.78 & \$1.533 & \$23.00 & \$37.99\\
            Subanco           &  0.71 &   41.95 &  7.62 & \$1.333 & \$20.00 & \$30.16\\
            Case Quality      &  0.53 &   31.11 & 13.37 & \$1.333 & \$20.00 & \$37.82\\
            Priority Service  &  0.73 &   42.79 & 17.54 & \$2.200 & \$33.00 & \$71.59\\
            \bottomrule 
        \end{tabular}
    }
    }
    \caption{Summary statistics for the data used in the main test problem.}
    \label{stats_main}
\end{table}
\begin{table}[H]
    \centering
    \vspace{0mm}
    \setlength\tabcolsep{4pt}
    {\small 
         \scalebox{0.8}{
        \begin{tabular}{lccccccccc}
            \toprule
            Class & Arrival & $\tilde{\lambda}^{r}$ & $\theta$ & $p$ & $h$ & $c$\\
            \rule{0pt}{3.5ex} & percentage (\%)  & (per hr) & (per hr) & (per job) & (per hr) & (per hr)\\
            \midrule
            Retail (Node: 1)  & 23.80 & 1398.65 &  7.01 & \$1.667 & \$25.00 & \$36.69\\
            Retail (Node: 2)  & 26.68 & 1568.26 &  7.74 & \$1.667 & \$25.00 & \$37.90\\
            Retail (Node: 3)  & 17.94 & 1054.34 &  7.74 & \$1.667 & \$25.00 & \$37.90\\
            Premier           &  3.11 &  182.68 & 36.26 & \$1.260 & \$18.90 & \$64.59\\
            Business          &  6.86 &  403.41 &  5.73 & \$1.400 & \$21.00 & \$29.02\\
            Platinum          &  0.60 &   35.08 &  6.12 & \$2.860 & \$42.90 & \$60.40\\
            Consumer Loans    &  7.76 &  455.84 &  4.57 & \$1.073 & \$16.10 & \$21.01\\
            Online Banking    &  3.17 &  186.23 &  8.25 & \$1.993 & \$29.90 & \$46.35\\
            EBO               &  0.86 &   50.42 &  7.38 & \$1.733 & \$26.00 & \$38.79\\
            Telesales         &  7.27 &  427.03 &  9.78 & \$1.073 & \$16.10 & \$26.59\\
            Subanco           &  0.71 &   41.95 &  7.62 & \$1.733 & \$26.00 & \$39.21\\
            Case Quality      &  0.53 &   31.11 & 13.37 & \$1.733 & \$26.00 & \$49.17\\
            Priority Service  &  0.73 &   42.79 & 17.54 & \$2.860 & \$42.90 & \$93.07\\
            \bottomrule 
        \end{tabular}
    }
    }
    \caption{Summary statistics for the data used in the variant test problem.}
    \label{stats_variant}
\end{table}

\subsection{Data used for the low-dimensional test problems}
\subsubsection{The first 2-dimensional test problem}
\label{app:first_two_dim_data}
\begin{table}[H]
	\centering
	\setlength\tabcolsep{4pt}
	{\footnotesize 
             \scalebox{0.9}{
		\begin{tabular}{llllcccccc}
			\toprule
			Class &  &  Names of the Combined Classes \\
			\midrule
			1 & & Retail (Node: 1, 2, 3)\\
			2 & & Premier, Business, Platinum, Consumer Loans, Online Banking, EBO, \\
            & & Telesales, Subanco, Case Quality, Priority Service\\
			\bottomrule 
		\end{tabular}
	}
 }
        \caption{The combination of original classes into two new classes.}
	\label{class_division_2dim}
\end{table}

\begin{table}[H]
	\centering
	\setlength\tabcolsep{4pt}
	{\footnotesize 
             \scalebox{0.9}{
		\begin{tabular}{lccccccccccccccccc}
			\toprule
			Class && Arrival && $\tilde{\lambda}^{r}$ && $\theta$ && $p$ && $h$ && $c$ \\
			\rule{0pt}{3.5ex} && percentage (\%) && (per hr) && (per hr) && (per job) && (per hr) && (per hr)\\
			\midrule
			 1 && 68.41 && 3450.24 && 7.40 && \$1.67 && \$25.00 && \$37.40\\
          2 && 31.59 && 1592.91 && 6.53 && \$1.68 && \$25.13 && \$36.10\\
			\bottomrule 
		\end{tabular}
	}
 }
        \caption{Summary statistics for the first two-dimensional test problem.}
	\label{table_2dim}
\end{table}

\begin{table}[!htb]
	\centering
	\setlength\tabcolsep{4pt}
	{\small 
             \scalebox{0.9}{
		\begin{tabular}{llllccccccccccccc}
			\toprule
    	Service Station &&& Combined Agent Codes &&&  \# of Agents\\
			\midrule
			1 &&& 1, 5, 15/16 &&& 194\\
                2 &&& 9, 19/20, 26, 28, 30, 31, 33, 34, 38, 39, 40 &&& 173\\
			\bottomrule 
		\end{tabular}
	}
 }
        \caption{The codes of agent groups combined, the number of agents in each service station for the first two-dimensional test problem where $K = 2$ and $J = 2$.}
	\label{table:code_assignments_2dim}
\end{table}
\begin{table}[!htb]
	\centering
	\setlength\tabcolsep{4pt}
	{\small 
             \scalebox{0.9}{
		\begin{tabular}{lccccccccccccccccc}
			\toprule
    	Classes  &&& Service Station 1 &&&  Service Station 2\\
			\midrule
			1 &&& 16.50 &&& 12.35\\
                2 &&& 16.20 &&& 12.14\\
			\bottomrule 
		\end{tabular}
	}
 }
        \caption{The hourly service rates for each customer class and service station pair for the first two-dimensional test problem.}
	\label{table:service_rates_2dim}
\end{table}

\subsubsection{The second 2-dimensional test problem}
\label{app:second_two_dim_data}
\begin{table}[H]
	\centering
	\setlength\tabcolsep{4pt}
	{\footnotesize 
             \scalebox{0.9}{
		\begin{tabular}{lccccccccccccccccc}
			\toprule
			Class && Arrival && $\tilde{\lambda}^{r}$ && $\theta$ && $p$ && $h$ && $c$ \\
			\rule{0pt}{3.5ex} && percentage (\%) && (per hr) && (per hr) && (per job) && (per hr) && (per hr) \\
			\midrule
			 1 && 67.86 && 1805 && 10.00 && \$2.00 && \$30.00 && \$50.00 \\
             2 && 32.14 && 855 && 5.00 && \$1.33 && \$20.00 && \$26.67 \\
			\bottomrule 
		\end{tabular}
	}
        }
        \caption{Summary statistics for the second two-dimensional test problem.}
	\label{table_2dim_variant}
\end{table}

\begin{table}[!htb]
	\centering
	\setlength\tabcolsep{4pt}
	{\small 
             \scalebox{0.9}{
		\begin{tabular}{lccccccccccccccccc}
			\toprule
    	Classes &&& Service Station 1 &&& Service Station 2 \\
			\midrule
			1 &&& 15.00 &&& 10.00 \\
            2 &&& --- &&& 15.00 \\
			\bottomrule 
		\end{tabular}
	}
        }
        \caption{The hourly service rates for each customer class and service station pair for the second two-dimensional test problem.}
	\label{table:service_rates_2dim_variant}
\end{table}
\newpage
\subsection{Optimal solution to the static planning problem}
\label{appendix:optimal_solution}

Throughout this appendix, $\xi_{kj}^{*}$ denotes the optimal solution to the static allocation problem defined by~\eqref{lp_obj}--\eqref{lp_constraint3}, computed with the parameters of the corresponding test instance.

\subsubsection{Main test problem}
\begin{equation}
\scalebox{0.75}{$
\xi_{kj}^{*} = \begin{pmatrix}
0.7252 & 0.0 & 0.7925 & 0.0 & 0.0 & 0.0 & 0.015 & 0.0 & 0.0 \\
0.0203 & 0.8303 & 0.0 & 0.9111 & 0.0 & 0.0 & 0.0 & 0.0 & 0.0 \\
0.0 & 0.0 & 0.0 & 0.0 & 0.0 & 0.0 & 0.0  & 0.9229  & 0.0\\
0.0 & 0.0479 & 0.0 & 0.0 & 0.0 & 0.0 & 0.0 & 0.0 & 0.4519\\
0.0 & 0.1219 & 0.0 & 0.0 & 1.0 &  0.0 & 0.0 & 0.0 & 0.0 \\
0.0 & 0.0 & 0.0 & 0.0889 & 0.0 & 0.0 & 0.0 & 0.0 & 0.0\\
0.0 & 0.0 & 0.0 & 0.0 & 0.0 & 1.0 & 0.0354 & 0.0 & 0.0\\
0.0 & 0.0 & 0.0 & 0.0 & 0.0 & 0.0 & 0.9496 & 0.0 & 0.0\\
0.0 & 0.0 & 0.2075 & 0.0 & 0.0 & 0.0 & 0.0 & 0.0 & 0.0\\
0.2545 & 0.0 & 0.0 & 0.0 & 0.0 & 0.0 & 0.0 & 0.0771 & 0.0\\
0.0 & 0.0 & 0.0 & 0.0 & 0.0 & 0.0 & 0.0 & 0.0 & 0.2032\\
0.0 & 0.0 & 0.0 & 0.0 & 0.0 & 0.0 & 0.0 & 0.0 & 0.1441\\
0.0 & 0.0 & 0.0 & 0.0 & 0.0 & 0.0 & 0.0 & 0.0 & 0.2008
\end{pmatrix}$}
\label{optimal_solution_matrix}
\end{equation}
\subsubsection{The first 2-dimensional test problem}
\begin{equation}
\scalebox{0.8}{$
\xi_{kj}^{*} = \begin{pmatrix}
1.0 & 0.2016\\
0.0 & 0.7984
\end{pmatrix}$}.
\label{optimal_solution_matrix_2dim}
\end{equation}
In the prelimit, the system is an $X$-network (Figure~\ref{fig:two_dim_x_network}); in the fluid limit, it reduces to an $N$-network (Figure~\ref{fig:two_dim_n_network}) with only basic activities.

\section{Construction of the high-dimensional test problem}
\label{app:high-dim-construction}
We build on the main test problem of Section~\ref{section:main_test} using a
scaling procedure that preserves the heavy traffic assumption by construction.

\paragraph{System parameter.}
Recall that $r = 100$ for the main test problem. Also recall that the
total number of agents for the main test problem is
$\sum_{j = 1}^{J} N_{j}^{r} = 367$. For the 100-dimensional test
problem, we set
\begin{equation*}\label{eq:rtilde}
    \tilde{r} = \left\lceil r \times
    \frac{\tilde{N}_{\mathrm{total}}}
         {\sum_{j=1}^{J} N^{r}_{j}} \right\rceil
    = \left\lceil 100 \times \frac{2{,}500}{367} \right\rceil = 682.
\end{equation*}
The rationale behind this choice is to ensure
\begin{equation}
    \sum_{j=1}^{\tilde{J}} \tilde{\nu}_{j}
    = \frac{\tilde{N}_{\mathrm{total}}}{\tilde{r}}
    \approx \frac{1}{r} \sum_{j=1}^{J} N_{j}^{r}
    = \sum_{j=1}^{J} \nu_{j},
\end{equation}
so that the limiting quantities $\sum_{j=1}^{\tilde{J}}\tilde{\nu}_{j}$
and $\sum_{j=1}^{J}\nu_{j}$ are close and that $\tilde{\nu}_{j}$ and
$\nu_{j}$ are of order 1 ($j=1,\ldots,\tilde{J}$).

\paragraph{Target utilization.}
The main test problem has utilization $\rho^{r} = 0.95$. For the
100-dimensional test problem, we set
\begin{equation}
    \tilde{\rho}^{\tilde{r}} = 1 - \frac{1 - \rho^{r}}{\sqrt{\tilde{r}/r}}
    \approx 0.98. \label{eq:utilization_large}
\end{equation}
The rationale for this choice stems from the statistical economies of
scale, i.e., the larger the system is, the larger the utilization can
be without sacrificing system performance. Under
\eqref{eq:utilization_large}, one sees that
$\sqrt{r}\,(1-\rho^{r}) = \sqrt{\tilde{r}}\,(1 - \tilde{\rho}^{\tilde{r}})$.
Because the drift terms $\zeta_{k}$ ($k=1,\ldots,K$) and
$\tilde{\zeta}_{k}$ ($k=1,\ldots,\tilde{K}$) are proportional to
$\sqrt{r}\,(1-\rho^r)$ and $\sqrt{\tilde{r}}\,(1-\tilde{\rho}^{\tilde{r}})$,
respectively, Equation \eqref{eq:utilization_large} leads to drift terms of
similar magnitudes in the approximating Brownian control problems for
the two systems; see Section 6.4 of \citet{ata2025dynamic} for a
similar discussion on designing high-dimensional test problems.

\paragraph{Staffing.}
The $\tilde{N}_{\mathrm{total}} = 2{,}500$ agents are distributed across
the $\tilde{J}$ stations as follows. First, we assign
$\tilde{N}_{\min} = 25$ agents to each station as a minimum base. Next,
we distribute the remaining
$\tilde{N}_{\mathrm{total}} - \tilde{J} \cdot \tilde{N}_{\min}$ agents in
a certain proportional manner. To explain this, we first mention that
each station consists of agents with identical skill sets that are
bootstrapped from those agents in the main test problem. Viewing the
agent types in the main test problem as templates, we let $\tau(j)$
denote the template station in the main test problem of station $j$
(in the 100-dimensional test problem). Station $j$ inherits its
parameters from the template station $\tau(j)$ of the main test
problem (see step (i) below) and the remaining agents are assigned to
station $j$ proportionally to $N_{\tau(j)}^{r}$. That is, we set
\begin{equation}
    \tilde{N}_{j} = \tilde{N}_{\min}
    + \left\lceil (\tilde{N}_{\mathrm{total}} - \tilde{J}\cdot\tilde{N}_{\min})
    \frac{N_{\tau(j)}^{r}}{\sum_{j'=1}^{\tilde{J}} N_{\tau(j')}^{r}}\right\rceil.
\end{equation}
The limiting staffing levels are then
$\tilde{\nu}_{j} = \tilde{N}_{j}/\tilde{r}$ for $j = 1,\ldots,\tilde{J}$.

\subsection{An algorithm for building a larger test problem}

Let $\mathcal{T}$ denote the tree of basic activities in the original
system that has $13 + 9 - 1 = 21$ edges (see
Figure~\ref{fig:optimal_solution_2002_basic}). We first describe the
general algorithm and then illustrate it on a small example.

\paragraph{Step~(i): Growing the tree.}
We expand $\mathcal{T}$ by attaching $\tilde{K} - K = 87$ new customer
classes and $\tilde{J} - J = 61$ new service stations as leaves to the bipartite graph of server pools and buffers, e.g., see Figure \ref{figure_model}. For each new customer class $k = K + 1, \ldots, \tilde{K}$, we draw it uniformly at random from the customer classes $\{1,\ldots, K\}$ of the main test problem. We denote the resulting class type as $\kappa(k) \in \{1,\ldots,K\}$ for $k = K+ 1, \ldots, \tilde{K}$. Recall that customer classes $1,\ldots,K$ of the new test problem are taken directly from the main test problem. Similarly, for each new server pool $j = J+1,\ldots, \tilde{J}$, we draw its type from the service pools $\{1,\ldots,J\}$ of the main test problem uniformly at random. We denote the resulting server pool type as $\tau(j)$ for $j = J+1, \ldots, \tilde{J}$. For notational convenience, we set $\kappa(k) = k$ for $k = 1,\ldots, K$ and $\tau(j) = j$ for $j = 1,\ldots, J$. New nodes are processed sequentially following the
order given in a random permutation, attaching them to a node already in
the tree chosen uniformly at random among eligible neighbors, where
eligibility requires that the pair
$(\kappa(k), \tau(j))$ corresponds to an edge in $\mathcal{E}$. Since
each step adds exactly one node and one edge, the resulting tree
$\tilde{\mathcal{T}}$ has $\tilde{K} + \tilde{J} - 1 = 169$ edges.
Additionally, the service and abandonment rates are given as $\tilde{\mu}_{kj} = \mu_{\kappa(k),\tau(j)}$ and $\tilde{\theta}_{k} = \theta_{\kappa(k)}$.

\paragraph{Step~(ii): Allocating capacity and setting the arrival rates.}
For each server pool $j$, the service fractions across its tree neighbors
$\mathcal{K}_{\tilde{\mathcal{T}}}(j) = \{k : (k,j) \in \tilde{\mathcal{T}}\}$
are drawn from a symmetric Dirichlet distribution  
\begin{equation}
\bigl(\tilde{\xi}^*_{kj}\bigr)_{k \in \mathcal{K}_{\tilde{\mathcal{T}}}(j)}
    \sim \mathrm{Dirichlet}(\mathbf{1}), \label{eq:dirichlet_high}
\end{equation}
which ensures $\tilde{\xi}^*_{kj} > 0$ on every tree edge and that
every station is fully utilized.

Given the limiting staffing levels $\tilde{\nu}_{j}$ and the service
fractions $\tilde{\xi}^*_{kj}$, the limiting arrival rates are
determined by the demand constraint of the static planning problem
(SPP), which requires that the total service capacity allocated to
each class meets its demand:
\begin{equation}\label{eq:limiting_staffing}
    \tilde{\lambda}_{k}
    = \sum_{j=1}^{\tilde{J}} \tilde{\nu}_{j}\,\tilde{\mu}_{kj}\,\tilde{\xi}^*_{kj},
    \qquad k = 1, \ldots, \tilde{K}.
\end{equation}
The prelimit arrival rates are then scaled to achieve the target
utilization $\tilde{\rho}^{\tilde{r}}$, and the second-order terms
follow from Equation~\eqref{eq:def:scaled_lambda}:
\begin{align}
    \tilde{\lambda}_{k}^{\tilde{r}}
    &= \tilde{\rho}^{\tilde{r}}\,\tilde{r}\,\tilde{\lambda}_{k}, \quad \,\,\,\, \qquad \qquad k = 1, \ldots, \tilde{K},
    \label{eq:prelimit_arrival_high}\\
    \tilde{\zeta}_{k}
    &= \frac{1}{\sqrt{\tilde{r}}}
       \bigl(\tilde{\lambda}^{\tilde{r}}_{k} - \tilde{r}\,\tilde{\lambda}_{k}\bigr), \qquad k = 1, \ldots, \tilde{K}.
    \label{eq:zeta_high}
\end{align}

\paragraph{Step~(iii): Nonbasic activities.}
We refer to the edges of $\tilde{\mathcal{T}}$ as the basic activities
of the high-dimensional system. However, because the server pools and buffers are drawn from the main test problem randomly, there may be additional edges that are not in $\tilde{\mathcal{T}}$. We set the service rates to those edges so that their corresponding activities remain nonbasic and that the solution to the static planning problem is unique. More specifically, let
$(\tilde{\alpha}^*, \tilde{\beta}^*)$ denote the optimal dual variables
of the SPP, with $\tilde{\alpha}^*_{k}$ associated with the demand
constraint of class $k$ and $\tilde{\beta}^*_{j}$ with the capacity
constraint of station $j$. By complementary slackness on basic
activities (cf.\ Equation~(2.15) of \citet{harrison1999heavy}), the following must hold:
\begin{equation}\label{eq:cs_tree_high}
    \tilde{\nu}_{j}\,\tilde{\mu}_{kj}\,\tilde{\alpha}^*_{k}
    = \tilde{\beta}^*_{j},
    \qquad (k,j) \in \tilde{\mathcal{T}}.
\end{equation}
Because $\tilde{\mathcal{T}}$ is a tree spanning every class and every
station, fixing a dual variable determines all remaining dual variables
through \eqref{eq:cs_tree_high} and iterating along $\tilde{\mathcal{T}}$ assigns a value to
each $\tilde{\alpha}^*_{k}$ and $\tilde{\beta}^*_{j}$.

By the last statement of Proposition~2 in \citet{harrison1999heavy},
the optimal dual solution must satisfy strict complementary slackness
for every nonbasic activity. That is, 
\begin{equation*}
    \tilde{\mu}_{kj}
    < \frac{\tilde{\beta}^*_{j}}{\tilde{\nu}_{j}\,\tilde{\alpha}^*_{k}},
    \qquad (k,j) \in \tilde{\mathcal{E}} \setminus \tilde{\mathcal{T}}.
\end{equation*}
If the rate $\mu_{\kappa(k),\tau(j)}$ assigned in Step~(i) already
satisfies this bound, we use it directly. Otherwise we set
\begin{equation}\label{eq:nontree_high}
    \tilde{\mu}_{kj}
    = (1 - \delta) \cdot
    \frac{\tilde{\beta}^*_{j}}{\tilde{\nu}_{j}\,\tilde{\alpha}^*_{k}},
    \qquad \delta = 0.01.
\end{equation}

\begin{proposition}\label{prop:scaling}
The system constructed by Steps~(i)--(iii) satisfies the heavy traffic
assumption. The SPP has a unique optimal solution with $\rho^* = 1$,
and the basic activities correspond to the edges of the tree $\tilde{\mathcal{T}}$.
\end{proposition}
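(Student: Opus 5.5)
We would prove Proposition~\ref{prop:scaling} by LP duality for the static planning problem \eqref{lp_obj}--\eqref{lp_constraint3}, taking $\tilde{\lambda}$ from \eqref{eq:limiting_staffing}. The plan has three steps: exhibit a primal feasible point with $\rho = 1$ that uses only tree edges; build a dual solution of value $1$ that is strictly complementary off the tree; and conclude optimality and uniqueness from strict complementarity and the tree structure.

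\emph{Step 1 (primal feasibility).} Set $\xi_{kj} = \tilde{\xi}^*_{kj}$ on the edges of $\tilde{\mathcal{T}}$ and $\xi_{kj} = 0$ on $\tilde{\mathcal{E}}\setminus\tilde{\mathcal{T}}$, with $\rho = 1$. The demand constraints \eqref{lp_constraint1} hold by the definition \eqref{eq:limiting_staffing}. The Dirichlet draw \eqref{eq:dirichlet_high} gives $\sum_{k}\tilde{\xi}^*_{kj} = 1$ for every $j$, so every capacity constraint \eqref{lp_constraint2} is tight, and $\tilde{\xi}^*_{kj} > 0$ on every tree edge almost surely.

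\emph{Step 2 (dual feasibility and optimality).} The dual is to maximize $\sum_k \tilde{\lambda}_k \alpha_k$ subject to $\sum_j \beta_j = 1$, $\beta \geq 0$ and $\tilde{\nu}_j\tilde{\mu}_{kj}\alpha_k \leq \beta_j$ on every edge.
\begin{itemize}
\item Fix one dual variable, say $\beta_1 > 0$, and propagate along $\tilde{\mathcal{T}}$ using \eqref{eq:cs_tree_high}. Every step multiplies or divides by the positive quantity $\tilde{\nu}_j\tilde{\mu}_{kj}$, so all $\tilde{\alpha}^*_k$ and $\tilde{\beta}^*_j$ are strictly positive.
\item Rescale so that $\sum_j \tilde{\beta}^*_j = 1$.
\item On tree edges the dual constraints hold with equality by construction. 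On non-tree edges, Step~(iii), in particular \eqref{eq:nontree_high}, gives $\tilde{\nu}_j\tilde{\mu}_{kj}\tilde{\alpha}^*_k < \tilde{\beta}^*_j$ strictly.
\end{itemize}
Modifying these $\tilde{\mu}_{kj}$ does not affect $\tilde{\lambda}$, since \eqref{eq:limiting_staffing} only involves tree edges. Hence the dual point is feasible. Its value is
\[
\sum_k\tilde{\lambda}_k\tilde{\alpha}^*_k=\sum_{(k,j)\in\tilde{\mathcal{T}}}\tilde{\nu}_j\tilde{\mu}_{kj}\tilde{\xi}^*_{kj}\tilde{\alpha}^*_k=\sum_{(k,j)\in\tilde{\mathcal{T}}}\tilde{\xi}^*_{kj}\tilde{\beta}^*_j=\sum_j\tilde{\beta}^*_j=1 .
\]
By weak duality the primal point of Step~1 is optimal and $\rho^* = 1$.

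\emph{Step 3 (uniqueness).} This is the main obstacle. Let $(\xi,\rho)$ be any optimal solution, so $\rho = 1$. Complementary slackness with our dual, which is strictly slack on non-tree edges, forces $\xi_{kj} = 0$ on $\tilde{\mathcal{E}}\setminus\tilde{\mathcal{T}}$. Because every $\tilde{\beta}^*_j > 0$, it also forces every capacity constraint to be tight. On the tree we are then left with $\tilde{K}$ demand equations and $\tilde{J}$ capacity equations in $|\tilde{\mathcal{T}}| = \tilde{K}+\tilde{J}-1$ unknowns. Two such solutions $\xi$ and $\xi'$ differ by an edge vector $d = \xi - \xi'$ with $\sum_j \tilde{\nu}_j\tilde{\mu}_{kj}d_{kj} = 0$ at each class $k$ and $\sum_k d_{kj} = 0$ at each station $j$.

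To show $d = 0$, peel leaves of the tree. A leaf's single incident edge carries the only term in that leaf's equation, so $d = 0$ on that edge. Removing the leaf leaves a smaller tree satisfying the same homogeneous equations, and we induct. The one subtle point is that the final remaining edge must satisfy two equations simultaneously; a single equation already forces its $d$ to vanish because all coefficients are nonzero. Hence the tree system is nonsingular, the optimal solution is unique and equals Step~1's point, and the basic activities are exactly the edges of $\tilde{\mathcal{T}}$. This establishes the heavy traffic assumption.
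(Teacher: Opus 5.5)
Your argument is correct and has the same skeleton as the paper's proof: the primal point $(\tilde{\xi}^*,1)$, dual variables propagated along $\tilde{\mathcal{T}}$ via \eqref{eq:cs_tree_high}, and strict complementary slackness on $\tilde{\mathcal{E}}\setminus\tilde{\mathcal{T}}$ from \eqref{eq:nontree_high}. Where you differ is in how the two conclusions are reached.

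\emph{Optimality.} The paper reads off $\rho^*=1$ from the fact that every capacity constraint binds. Binding constraints alone do not certify optimality, and the paper leaves the certificate implicit in its later mention of the dual variables. You make it explicit: you normalize $\sum_j\tilde{\beta}^*_j=1$, check that the dual is feasible, compute $\sum_k\tilde{\lambda}_k\tilde{\alpha}^*_k=1$, and apply weak duality.

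\emph{Uniqueness.} The paper delegates this to Proposition~3 of \citet{harrison1999heavy}, which is phrased in terms of the optimal basis. You prove it directly. Complementary slackness with your dual confines any optimal solution to the tree edges, because the non-tree dual constraints are strictly slack. Since every $\tilde{\beta}^*_j>0$, it also makes every capacity constraint tight. Leaf-peeling then shows that the resulting homogeneous system of $\tilde{K}+\tilde{J}$ equations in $\tilde{K}+\tilde{J}-1$ unknowns has only the trivial solution.

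\emph{Trade-off.} The paper's route is shorter but depends on matching the hypotheses of an external result. Yours is self-contained. It also gives uniqueness of the full optimal pair $(\xi^*,\rho^*)$, which is exactly what the heavy traffic assumption requires.
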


\begin{proof}
The pair $(\tilde{\xi}^*, 1)$ is primal feasible. The demand
constraint holds by~\eqref{eq:limiting_staffing} and every capacity
constraint binds by~\eqref{eq:dirichlet_high}, so $\rho^* = 1$. The
dual variables satisfy~\eqref{eq:cs_tree_high} on basic activities and
strict complementary slackness on nonbasic activities
by~\eqref{eq:nontree_high}, so by Proposition~3 of
\citet{harrison1999heavy} the optimal SPP basis is unique and consists
of the edges of $\tilde{\mathcal{T}}$.
\end{proof}

\subsection{Cost parameters}

For each class $k = 1, \ldots, \tilde{K}$, we draw a holding cost rate
$\tilde{h}_{k}$ uniformly from $[\$15, \$35]$. This range spans the
full set of holding cost rates observed in the main test problem. Its
midpoint of $\$25$ corresponds to the Retail class. Following the same
logic as in Section~\ref{sect:data_section}, we set the abandonment
penalty to $\tilde{p}_{k} = \tilde{h}_{k}/15$, reflecting the value of
one service interaction for an agent handling approximately fifteen
calls per hour. The effective cost rate is then
$\tilde{c}_{k} = \tilde{h}_{k} + \tilde{\theta}_{k}\,\tilde{p}_{k}$.
\newpage
\section{100-dimensional graphs}
\label{100_dim_graphs}
\begin{figure}[!htb]
    \centering
    \begin{subfigure}{1.0\textwidth}
        \centering
        \includegraphics[width=\textwidth]{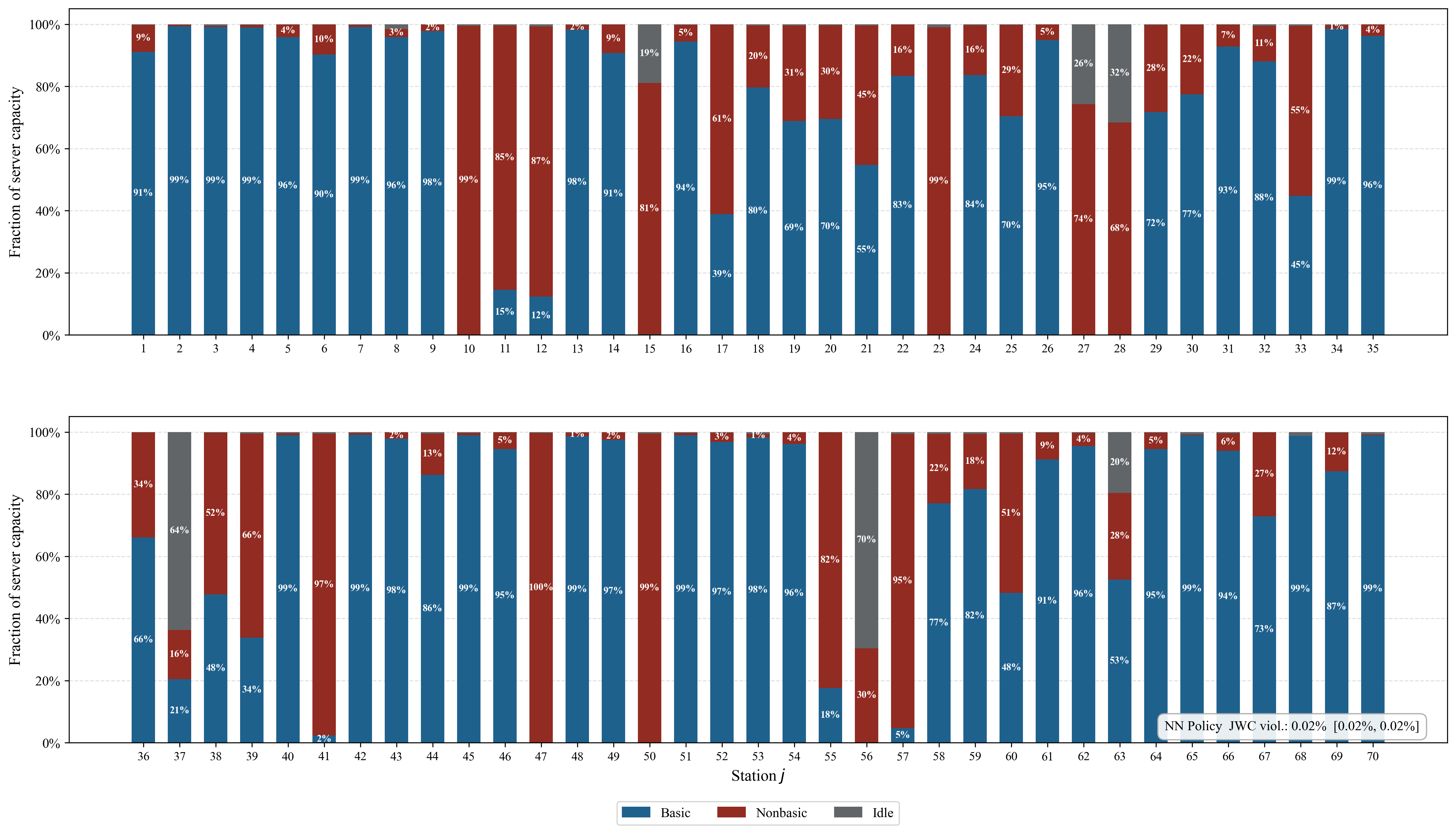}
        \caption{Proposed NN policy.}
        \label{fig:high_capacity_nn}
    \end{subfigure}

    \vspace{0.5cm}

    \begin{subfigure}{1.0\textwidth}
        \centering
        \includegraphics[width=\textwidth]{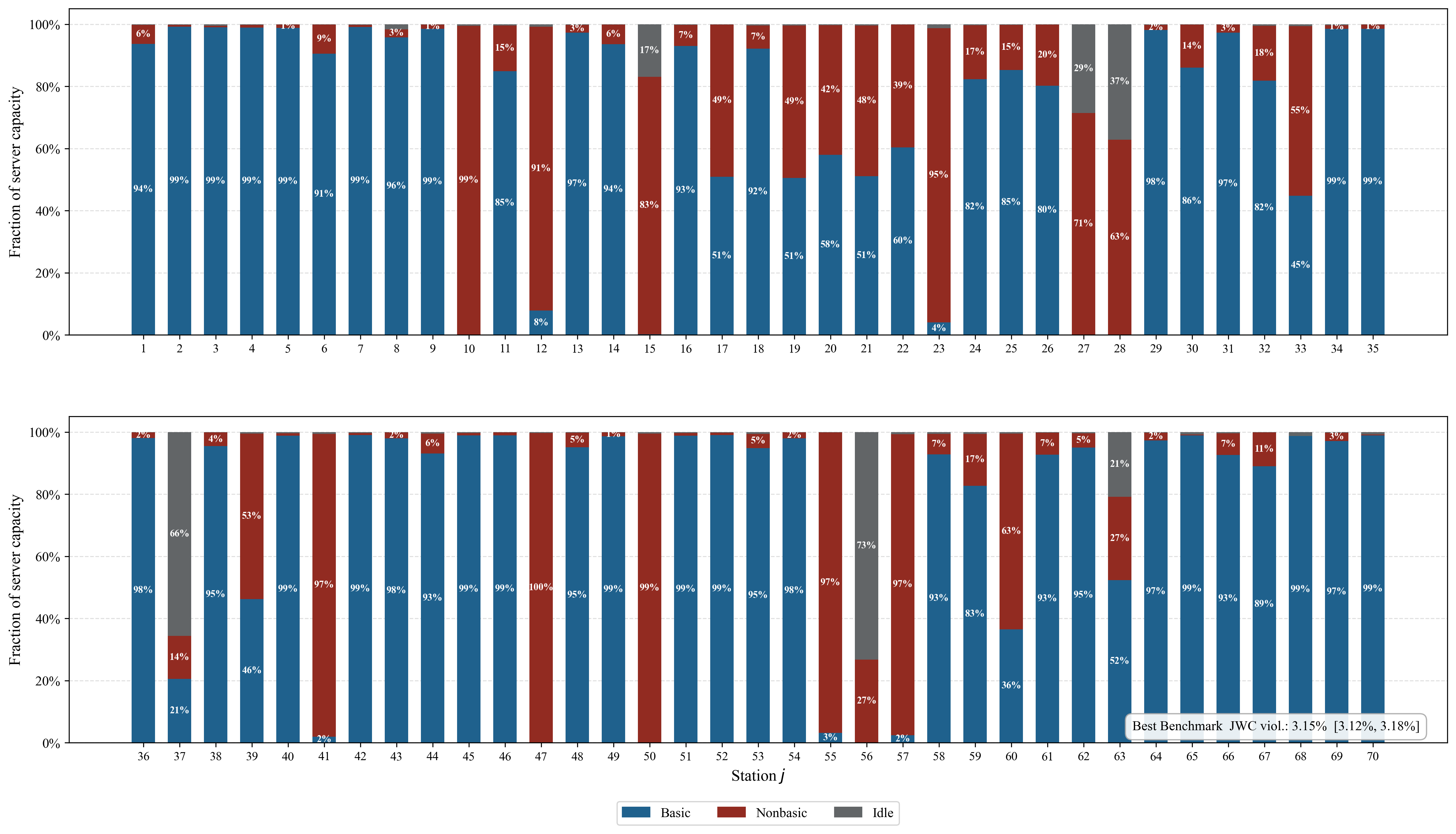}
        \caption{Best benchmark policy.}
        \label{fig:high_capacity_benchmark}
    \end{subfigure}

    \caption{
    Capacity decomposition by station in the 100-dimensional test problem. Each bar reports the
    fraction of station capacity allocated to basic activities, nonbasic activities, and idleness.
    Although the aggregate allocation patterns are broadly similar, the proposed NN policy violates
    joint work conservation substantially less often than the best benchmark policy.
    }
    \label{fig:high_capacity_comparison}
\end{figure}

\newpage
\section{A worked example for building a larger test problem}\label{sec:scaling_example}

For concreteness, we illustrate the algorithm by scaling a system with
$K = 2$ classes and $J = 2$ stations to $\tilde{K} = 4$ classes and
$\tilde{J} = 4$ stations. The original system corresponds to an
$X$-model with edge set
$\mathcal{E} = \{(1,1),\, (1,2),\, (2,1),\, (2,2)\}$, service rates
\[
\mu = \begin{pmatrix} 2 & 3 \\ 1 & 4 \end{pmatrix},
\]
and staffing $N_{1} = N_{2} = 100$. The tree of basic activities is
$\mathcal{T} = \{(1,1),\, (1,2),\, (2,2)\}$, which has $K + J - 1 = 3$
edges; see Figure~\ref{fig:orig_small_example}.

\begin{figure}[ht]
\centering
\begin{tikzpicture}[
    class/.style={circle, draw, thick, fill=blue!15, minimum size=24pt, inner sep=0pt, font=\footnotesize},
    station/.style={circle, draw, thick, fill=red!15, minimum size=24pt, inner sep=0pt, font=\footnotesize},
    treeedge/.style={thick},
    nontreeedge/.style={thick, dashed, gray!60},
]
\node[class]   (C1) at (0,1.5)    {$C_1$};
\node[class]   (C2) at (2.5,1.5)  {$C_2$};
\node[station] (S1) at (0,0)      {$S_1$};
\node[station] (S2) at (2.5,0)    {$S_2$};
\draw[treeedge] (C1) -- node[left, font=\scriptsize] {2} (S1);
\draw[treeedge] (C1) -- node[pos=0.3, above, font=\scriptsize] {3} (S2);
\draw[treeedge] (C2) -- node[right, font=\scriptsize] {4} (S2);
\draw[nontreeedge] (C2) -- node[pos=0.3, below, font=\scriptsize] {1} (S1);
\end{tikzpicture}
\caption{The $X$-model with $K = J = 2$. Solid edges denote basic
activities; dashed edges denote nonbasic activities. Edge labels are
the service rates $\mu_{kj}$.}
\label{fig:orig_small_example}
\end{figure}
\vspace{-3mm}

\paragraph{Step (i): Growing the tree.}
We add $\tilde{K} - K = 2$ new classes (indexed $3, 4$) and
$\tilde{J} - J = 2$ new stations (indexed $3, 4$). For each new node,
we draw its parent class $\kappa(k) \in \{1, 2\}$ or template station
$\tau(j) \in \{1, 2\}$ uniformly at random:

\begin{table}[ht]
\centering
\label{tab:template_assignments_example}
\begin{tabular}{lcc}
\toprule
New node & Draw & Realization \\
\midrule
Class 3   & $\kappa(3) \sim \mathrm{Unif}\{1, 2\}$ & $\kappa(3) = 1$ \\
Class 4   & $\kappa(4) \sim \mathrm{Unif}\{1, 2\}$ & $\kappa(4) = 2$ \\
Station 3 & $\tau(3) \sim \mathrm{Unif}\{1, 2\}$   & $\tau(3) = 1$ \\
Station 4 & $\tau(4) \sim \mathrm{Unif}\{1, 2\}$   & $\tau(4) = 2$ \\
\bottomrule
\end{tabular}
\caption{Template assignments for the new classes and stations in the worked example.}
\end{table}

\noindent We generate a random ordering of the new nodes as follows: 
$[\text{station 3},\; \text{class 3},\; \text{class 4},\; \text{station 4}]$,
and attach them to the tree one at a time. At each step, the new node
is connected to a node already in the tree, drawn uniformly from the
eligible set. A class $k$ and a station $j$ are eligible to be
connected if $(\kappa(k), \tau(j)) \in \mathcal{E}$. The added edge
$(k, j)$ receives the service rate
$\tilde{\mu}_{kj} = \mu_{\kappa(k),\tau(j)}$.

\noindent\textit{Step 1: attach station 3.} Since $\tau(3) = 1$,
eligible classes are $\{k \text{ in tree} : (\kappa(k), 1) \in \mathcal{E}\}
= \{1, 2\}$. Draw uniformly and assuming the outcome is 2, we attach it to class 2. Add edge $(2, 3)$
with $\tilde{\mu}_{2,3} = \mu_{2,1} = 1$.

\begin{figure}[H]
\centering
\begin{tikzpicture}[
    class/.style={circle, draw, thick, fill=blue!15, minimum size=24pt, inner sep=0pt, font=\footnotesize},
    station/.style={circle, draw, thick, fill=red!15, minimum size=24pt, inner sep=0pt, font=\footnotesize},
    newnode/.style={draw, very thick, dashed},
    treeedge/.style={thick},
    newedge/.style={very thick, green!50!black},
]
\node[class]   (C1) at (0,1.5)    {$C_1$};
\node[class]   (C2) at (2.5,1.5)  {$C_2$};
\node[station] (S1) at (0,0)      {$S_1$};
\node[station] (S2) at (2.5,0)    {$S_2$};
\node[station, newnode] (S3) at (5,0) {$S_3$};
\node[below=0pt, font=\scriptsize, gray] at (S3.south) {$(S_1)$};
\draw[treeedge] (C1) -- node[left, font=\scriptsize] {2} (S1);
\draw[treeedge] (C1) -- node[above, font=\scriptsize, pos=0.3] {3} (S2);
\draw[treeedge] (C2) -- node[right, font=\scriptsize] {4} (S2);
\draw[newedge]  (C2) -- node[above, font=\scriptsize] {1} (S3);
\end{tikzpicture}
\caption{Step 1 of the tree-growth procedure: station~3 is added to the tree and attached to class~2. The dashed node indicates the newly added station, the gray label identifies its template station, and the highlighted edge denotes the newly created basic activity.}

\end{figure}

\medskip
\noindent\textit{Step 2: attach class 3.} Since $\kappa(3) = 1$,
eligible stations are $\{j \text{ in tree} : (1, \tau(j)) \in \mathcal{E}\}
= \{1, 2, 3\}$. Draw uniformly and assuming the outcome is 1, we attach it to station 1. We then add edge
$(3, 1)$ with $\tilde{\mu}_{3,1} = \mu_{1,1} = 2$.

\begin{figure}[H]
\small
\centering
\begin{tikzpicture}[
    class/.style={circle, draw, thick, fill=blue!15, minimum size=24pt, inner sep=0pt, font=\footnotesize},
    station/.style={circle, draw, thick, fill=red!15, minimum size=24pt, inner sep=0pt, font=\footnotesize},
    newnode/.style={draw, very thick, dashed},
    treeedge/.style={thick},
    newedge/.style={very thick, green!50!black},
]
\node[class, newnode]   (C3) at (-2.5,1.5)  {$C_3$};
\node[font=\scriptsize, gray] at (C3.north) [above] {$(C_1)$};
\node[class]   (C1) at (0,1.5)    {$C_1$};
\node[class]   (C2) at (2.5,1.5)  {$C_2$};
\node[station] (S1) at (0,0)      {$S_1$};
\node[station] (S2) at (2.5,0)    {$S_2$};
\node[station, newnode] (S3) at (5,0) {$S_3$};
\node[below=0pt, font=\scriptsize, gray] at (S3.south) {$(S_1)$};
\draw[treeedge] (C1) -- node[left, font=\scriptsize] {2} (S1);
\draw[treeedge] (C1) -- node[above, font=\scriptsize, pos=0.3] {3} (S2);
\draw[treeedge] (C2) -- node[right, font=\scriptsize] {4} (S2);
\draw[treeedge] (C2) -- node[above, font=\scriptsize] {1} (S3);
\draw[newedge]  (C3) -- node[left, font=\scriptsize] {2} (S1);
\end{tikzpicture}
\caption{Step 2 of the tree-growth procedure: class~3 is added to the tree and attached to station~1. The dashed node indicates the newly added class, the gray label identifies its template class, and the highlighted edge denotes the newly added basic activity.}
\end{figure}
\vspace{-4mm}
\noindent\textit{Step 3: attach class 4.} Since $\kappa(4) = 2$,
eligible stations are $\{j \text{ in tree} : (2, \tau(j)) \in \mathcal{E}\}
= \{1, 2, 3\}$. Draw uniformly and assuming the outcome is 3, we attach it to station 3. We then add edge
$(4, 3)$ with $\tilde{\mu}_{4,3} = \mu_{2,1} = 1$.

\begin{figure}[H]
\small
\centering
\begin{tikzpicture}[
    class/.style={circle, draw, thick, fill=blue!15, minimum size=24pt, inner sep=0pt, font=\footnotesize},
    station/.style={circle, draw, thick, fill=red!15, minimum size=24pt, inner sep=0pt, font=\footnotesize},
    newnode/.style={draw, very thick, dashed},
    treeedge/.style={thick},
    newedge/.style={very thick, green!50!black},
]
\node[class, newnode]   (C3) at (-2.5,1.5)  {$C_3$};
\node[font=\scriptsize, gray] at (C3.north) [above] {$(C_1)$};
\node[class]   (C1) at (0,1.5)    {$C_1$};
\node[class]   (C2) at (2.5,1.5)  {$C_2$};
\node[class, newnode]   (C4) at (5,1.5)  {$C_4$};
\node[font=\scriptsize, gray] at (C4.north) [above] {$(C_2)$};
\node[station] (S1) at (0,0)      {$S_1$};
\node[station] (S2) at (2.5,0)    {$S_2$};
\node[station, newnode] (S3) at (5,0) {$S_3$};
\node[below=0pt, font=\scriptsize, gray] at (S3.south) {$(S_1)$};
\draw[treeedge] (C1) -- node[left, font=\scriptsize] {2} (S1);
\draw[treeedge] (C1) -- node[above, font=\scriptsize, pos=0.3] {3} (S2);
\draw[treeedge] (C2) -- node[right, font=\scriptsize] {4} (S2);
\draw[treeedge] (C2) -- node[above, font=\scriptsize] {1} (S3);
\draw[treeedge] (C3) -- node[left, font=\scriptsize] {2} (S1);
\draw[newedge]  (C4) -- node[right, font=\scriptsize] {1} (S3);
\end{tikzpicture}
\caption{Step 3 of the tree-growth procedure: class~4 is added to the tree and attached to station~3. The dashed node indicates the newly added class, the gray label identifies its template class, and the highlighted edge denotes the newly added basic activity.}
\end{figure}

\medskip
\noindent\textit{Step 4: attach station 4.} Since $\tau(4) = 2$,
eligible classes are $\{k \text{ in tree} : (\kappa(k), 2) \in \mathcal{E}\}
= \{1, 2, 3, 4\}$. Draw uniformly and assuming the outcome is 4, we attach it to class 4. We then add edge
$(4, 4)$ with $\tilde{\mu}_{4,4} = \mu_{2,2} = 4$.

\begin{figure}
\centering
\small
\begin{tikzpicture}[
    class/.style={circle, draw, thick, fill=blue!15, minimum size=24pt, inner sep=0pt, font=\footnotesize},
    station/.style={circle, draw, thick, fill=red!15, minimum size=24pt, inner sep=0pt, font=\footnotesize},
    newnode/.style={draw, very thick, dashed},
    treeedge/.style={thick},
    newedge/.style={very thick, green!50!black},
]
\node[class, newnode]   (C3) at (-2.5,1.5)  {$C_3$};
\node[font=\scriptsize, gray] at (C3.north) [above] {$(C_1)$};
\node[class]   (C1) at (0,1.5)    {$C_1$};
\node[class]   (C2) at (2.5,1.5)  {$C_2$};
\node[class, newnode]   (C4) at (5,1.5)  {$C_4$};
\node[font=\scriptsize, gray] at (C4.north) [above] {$(C_2)$};
\node[station] (S1) at (0,0)      {$S_1$};
\node[station] (S2) at (2.5,0)    {$S_2$};
\node[station, newnode] (S3) at (5,0) {$S_3$};
\node[below=0pt, font=\scriptsize, gray] at (S3.south) {$(S_1)$};
\node[station, newnode] (S4) at (7.5,0) {$S_4$};
\node[below=0pt, font=\scriptsize, gray] at (S4.south) {$(S_2)$};
\draw[treeedge] (C1) -- node[left, font=\scriptsize] {2} (S1);
\draw[treeedge] (C1) -- node[above, font=\scriptsize, pos=0.3] {3} (S2);
\draw[treeedge] (C2) -- node[right, font=\scriptsize] {4} (S2);
\draw[treeedge] (C2) -- node[above, font=\scriptsize] {1} (S3);
\draw[treeedge] (C3) -- node[left, font=\scriptsize] {2} (S1);
\draw[treeedge] (C4) -- node[right, font=\scriptsize] {1} (S3);
\draw[newedge]  (C4) -- node[above, font=\scriptsize] {4} (S4);
\end{tikzpicture}
\caption{Step 4 of the tree-growth procedure: station~4 is added to the tree and attached to class~4. The rightmost dashed node indicates the newly added station, the gray label identifies its template station, and the highlighted edge denotes the newly added basic activity.}
\end{figure}

\paragraph{Final tree.} The resulting tree has
$\tilde{K} + \tilde{J} - 1 = 7$ edges,
\[
\tilde{\mathcal{T}} = \{(1,1),\, (1,2),\, (2,2),\, (2,3),\, (3,1),\,
(4,3),\, (4,4)\}.
\]

\begin{figure}[H]
\centering
\begin{tikzpicture}[
    class/.style={circle, draw, thick, fill=blue!15, minimum size=24pt, inner sep=0pt, font=\footnotesize},
    station/.style={circle, draw, thick, fill=red!15, minimum size=24pt, inner sep=0pt, font=\footnotesize},
    newnode/.style={draw, very thick, dashed},
    treeedge/.style={thick},
]
\node[class, newnode]   (C3) at (-2.5,1.5)   {$C_3$};
\node[font=\scriptsize, gray] at (C3.north) [above] {$(C_1)$};
\node[class]   (C1) at (0,1.5)      {$C_1$};
\node[class]   (C2) at (2.5,1.5)    {$C_2$};
\node[class, newnode]   (C4) at (5,1.5)      {$C_4$};
\node[font=\scriptsize, gray] at (C4.north) [above] {$(C_2)$};
\node[station] (S1) at (0,0)        {$S_1$};
\node[station] (S2) at (2.5,0)      {$S_2$};
\node[station, newnode] (S3) at (5,0)        {$S_3$};
\node[below=0pt, font=\scriptsize, gray] at (S3.south) {$(S_1)$};
\node[station, newnode] (S4) at (7.5,0)      {$S_4$};
\node[below=0pt, font=\scriptsize, gray] at (S4.south) {$(S_2)$};
\draw[treeedge] (C1) -- node[left, font=\scriptsize] {2} (S1);
\draw[treeedge] (C1) -- node[above, font=\scriptsize, pos=0.3] {3} (S2);
\draw[treeedge] (C2) -- node[right, font=\scriptsize] {4} (S2);
\draw[treeedge] (C2) -- node[above, font=\scriptsize] {1} (S3);
\draw[treeedge] (C3) -- node[left, font=\scriptsize] {2} (S1);
\draw[treeedge] (C4) -- node[right, font=\scriptsize] {1} (S3);
\draw[treeedge] (C4) -- node[above, font=\scriptsize] {4} (S4);
\end{tikzpicture}
\caption{Final tree obtained by expanding the original $2$-class, $2$-station system to a $4$-class, $4$-station system. Dashed nodes denote the classes and stations added during the expansion, gray labels identify their templates in the original system, and edge labels denote the service rates $\tilde{\mu}_{kj}$ on the basic activities.}
\label{fig:final_tree_example}
\end{figure}

\noindent\textit{Staffing.} We set $\tilde{N}_{\mathrm{total}} = 2000$
with $\tilde{N}_{\min} = 100$. Both templates carry $N_{\tau(j)} = 100$
agents in the original system, so the proportional weights are
$N_{\tau(j)}/\sum_{j'} N_{\tau(j')} = 1/4$ for every $j$, giving
\[
    \tilde{N}_{j} = \tilde{N}_{\min}
    + \left\lceil (\tilde{N}_{\mathrm{total}} - \tilde{J}\,\tilde{N}_{\min})
      \cdot \tfrac{1}{4} \right\rceil
    = 100 + 400 = 500,
    \qquad j = 1, \ldots, 4.
\]
For ease of exposition, we assume the system parameter is $\tilde{r} = 100$ and then we set the limiting staffing levels as $\tilde{\nu}_{j} = \tilde{N}_{j}/\tilde{r} = 5$ for $j = 1,\ldots, \tilde{J}$ in this example.

\paragraph{Step~(ii): Allocating capacity.} For each station $j$, we
draw the service fractions across its tree neighbors
$\mathcal{K}_{\tilde{\mathcal{T}}}(j)$ from a symmetric Dirichlet distribution. Assuming the displayed realizations, we arrive at the following table:
\vspace{-3mm}
\begin{table}[!htb]
\centering
\small
\caption{Realized capacity fractions on the tree activities in the worked example.}
\label{tab:capacity_fractions_example}
\begin{tabular}{lll}
\toprule
Station & Neighbors in $\tilde{\mathcal{T}}$ & Draw \\
\midrule
$j = 1$ & $\{1, 3\}$ & $(\tilde{\xi}^*_{1,1},\, \tilde{\xi}^*_{3,1}) = (0.7,\, 0.3)$ \\
$j = 2$ & $\{1, 2\}$ & $(\tilde{\xi}^*_{1,2},\, \tilde{\xi}^*_{2,2}) = (0.4,\, 0.6)$ \\
$j = 3$ & $\{2, 4\}$ & $(\tilde{\xi}^*_{2,3},\, \tilde{\xi}^*_{4,3}) = (0.8,\, 0.2)$ \\
$j = 4$ & $\{4\}$    & $\tilde{\xi}^*_{4,4} = 1$ \\
\bottomrule
\end{tabular}
\end{table}

\noindent\textit{Arrival rates.} The limiting arrival rates follow
from the demand constraint~\eqref{eq:limiting_staffing},
$\tilde{\lambda}_{k} = \sum_{j : (k,j) \in \tilde{\mathcal{T}}}
\tilde{\nu}_{j}\,\tilde{\mu}_{kj}\,\tilde{\xi}^*_{kj}$:
\begin{align*}
    \tilde{\lambda}_{1}
    &= 5 \cdot 2 \cdot 0.7 + 5 \cdot 3 \cdot 0.4 = 13, \\
    \tilde{\lambda}_{2}
    &= 5 \cdot 4 \cdot 0.6 + 5 \cdot 1 \cdot 0.8 = 16, \\
    \tilde{\lambda}_{3}
    &= 5 \cdot 2 \cdot 0.3 = 3, \\
    \tilde{\lambda}_{4}
    &= 5 \cdot 1 \cdot 0.2 + 5 \cdot 4 \cdot 1 = 21.
\end{align*}

\paragraph{Step~(iii): Nonbasic activities.} Recall
from~\eqref{eq:cs_tree_high} that the optimal dual variables satisfy
$\tilde{\nu}_{j}\,\tilde{\mu}_{kj}\,\tilde{\alpha}^*_{k}
= \tilde{\beta}^*_{j}$ on every basic activity. Because
$\tilde{\mathcal{T}}$ is a tree, fixing $\tilde{\alpha}^*_{1}$
determines the remaining duals through these relations as multipliers of $\tilde{\alpha}_{1}^{*}$
\begin{alignat*}{3}
    (1,1):&\quad \tilde{\beta}^*_{1}
    &&= \tilde{\nu}_{1}\,\tilde{\mu}_{1,1}\,\tilde{\alpha}^*_{1}
    &&= 10\tilde{\alpha}_{1}^{*}, \\
    (1,2):&\quad \tilde{\beta}^*_{2}
    &&= \tilde{\nu}_{2}\,\tilde{\mu}_{1,2}\,\tilde{\alpha}^*_{1}
    &&= 15\tilde{\alpha}_{1}^{*}, \\
    (2,2):&\quad \tilde{\alpha}^*_{2}
    &&= \tilde{\beta}^*_{2}/(\tilde{\nu}_{2}\,\tilde{\mu}_{2,2})
    &&= 0.75\tilde{\alpha}_{1}^{*}, \\
    (2,3):&\quad \tilde{\beta}^*_{3}
    &&= \tilde{\nu}_{3}\,\tilde{\mu}_{2,3}\,\tilde{\alpha}^*_{2}
    &&= 3.75\tilde{\alpha}_{1}^{*}, \\
    (3,1):&\quad \tilde{\alpha}^*_{3}
    &&= \tilde{\beta}^*_{1}/(\tilde{\nu}_{1}\,\tilde{\mu}_{3,1})
    &&= \tilde{\alpha}_{1}^{*}, \\
    (4,3):&\quad \tilde{\alpha}^*_{4}
    &&= \tilde{\beta}^*_{3}/(\tilde{\nu}_{3}\,\tilde{\mu}_{4,3})
    &&= 0.75\tilde{\alpha}_{1}^{*}, \\
    (4,4):&\quad \tilde{\beta}^*_{4}
    &&= \tilde{\nu}_{4}\,\tilde{\mu}_{4,4}\,\tilde{\alpha}^*_{4} &&= 15\tilde{\alpha}_{1}^{*}.
\end{alignat*}
For each
$(k, j) \in \tilde{\mathcal{E}} \setminus \tilde{\mathcal{T}}$,
strict complementary slackness requires
$\tilde{\mu}_{kj} < \tilde{\beta}^*_{j}/(\tilde{\nu}_{j}\,\tilde{\alpha}^*_{k})$.
If the rate $\mu_{\kappa(k),\tau(j)}$ from Step~(i) already satisfies
this bound, we use it directly. Otherwise we set
$\tilde{\mu}_{kj} = (1 - \delta)\,\tilde{\beta}^*_{j}/(\tilde{\nu}_{j}\,\tilde{\alpha}^*_{k})$
with $\delta = 0.01$:

\begin{table}[H]
\centering
\label{tab:nonbasic_service_rates_example}
\begin{tabular}{ccccc}
\toprule
Edge & Template rate & Bound & $\tilde{\mu}_{kj}$ & $\bar{c}_{kj}$ \\
$(k,j)$ & $\mu_{\kappa(k),\tau(j)}$ 
& $\tilde{\beta}^*_{j}/(\tilde{\nu}_{j}\tilde{\alpha}^*_{k})$ & & \\
\midrule
$(2,1)$ & 1 & 2.67 & 1    & 6.25 \\
$(3,2)$ & 3 & 3    & 2.97 & 0.15 \\
$(3,3)$ & 2 & 0.75 & 0.74 & 0.04 \\
$(3,4)$ & 3 & 3    & 2.97 & 0.15 \\
$(4,1)$ & 1 & 2.67 & 1    & 6.25 \\
$(4,2)$ & 4 & 4    & 3.96 & 0.15 \\
$(1,3)$ & 2 & 0.75 & 0.74 & 0.04 \\
$(1,4)$ & 3 & 3    & 2.97 & 0.15 \\
$(2,4)$ & 4 & 4    & 3.96 & 0.15 \\
\bottomrule
\end{tabular}
\caption{Construction of service rates for nonbasic activities. The bound is imposed to ensure strict complementary slackness, and $\bar c_{kj}$ denotes the resulting reduced cost.}
\end{table}

\medskip
\noindent The service rates for edges $(2,1)$ and $(4,1)$ are below
the bound and used directly; the service rates for the remaining
edges are set according to~\eqref{eq:nontree_high}. All nonbasic
activities have strictly positive reduced cost. By construction, the SPP admits a unique optimal solution with
$\rho^* = 1$, and the tree $\tilde{\mathcal{T}}$ corresponds to the
unique optimal basis.
\color{black}
\section{Computational Benchmarks}

\subsection{Optimal policies for MDP formulations of the low-dimensional test problems}\label{appendix_mdp_low_dim}
The state process $X(t)$ is a continuous-time Markov chain (CTMC) on $\mathbb{Z}_{+}^{K}$.  The control is the $K \times J$-dimensional process $\psi(x) \in \Psi(x)$, where the set of admissible controls is given by:
\begin{equation*}
\Psi(x) = \left\{ \psi \in \mathbb{R}_{+}^{|\mathcal{E}|} : 
\begin{array}{l}
\sum_{j \in \mathcal{J}(k)} \psi_{kj} \leq x_{k},\, \forall k \in \mathcal{K}, \quad \sum_{k \in \mathcal{K}(j)} \psi_{kj} \leq N_{j}, \, \forall j \in \mathcal{J}
\end{array}
\right\}.
\end{equation*}
The control $\psi_{kj}$ is the number of customers of class $k$ in service at station $j$. The transition rate matrix $Q^{\psi} = (Q^{\psi}(x, y))$ under policy $\psi$ is defined as follows: For $k = 1, \ldots, K$,
\begin{align}
&Q^{\psi}(x, x + e_k) = \lambda_k, \label{transition_arrival}\\
&Q^{\psi}(x, x - e_k) = \sum_{j \in \mathcal{J}(k)} \mu_{kj} \psi_{kj} + \theta_k \left(x_k - \sum_{j \in \mathcal{J}(k)} \psi_{kj}\right), \\
&Q^{\psi}(x, x) = - \sum_{k=1}^K \left[ \lambda_k + \sum_{j \in \mathcal{J}(k)} \mu_{kj} \psi_{kj} + \theta_k \left(x_k - \sum_{j \in \mathcal{J}(k)} \psi_{kj}\right) \right]. \label{transition_remaining}
\end{align}
Then, we define the optimal value function for the infinite-horizon discounted-cost problem as follows:
\begin{equation*}
        \tilde{V}(x) = \inf_{\psi \in \Psi(x)} \mathbb{E}_{x}^{\psi}\left\{ \int_{0}^{\infty} e^{-\alpha s} \sum_{k=1}^{K}c_{k}(x_{k} - \sum_{j \in \mathcal{J}(k)}\psi_{kj})ds\right\}.
\end{equation*}
The associated Bellman equation which helps us characterize the value function $\tilde{V}$ and the corresponding policy is as follows: For $x \in \mathbb{Z}_{+}^{K}$
\begin{equation}
    \alpha \tilde{V}(x) = \inf_{\psi \in \Psi(x)} \left\{\sum_{k=1}^{K} c_{k} (x_{k} - \sum_{j \in \mathcal{J}(k)} \psi_{kj}) + Q^{\psi}\tilde{V}(x)\right\}. \label{eq:bellman_equation_ctmc}
\end{equation}
Substituting the definition of $Q^{\psi}$ given in Equations (\ref{transition_arrival})--(\ref{transition_remaining}) into the Bellman equation (\ref{eq:bellman_equation_ctmc}) gives the following explicit form:
\begin{align}
        \alpha \tilde{V}(x) = \sum_{k=1}^{K} c_{k}x_{k} &+ \sum_{k=1}^{K}\lambda_{k}\Delta_{k}^{-}(x+e_{k}) - \sum_{k=1}^{K}\theta_{k}x_{k}\Delta_{k}^{-}(x)\\
        &-\sup_{\psi \in \Psi(x)}\left\{\sum_{k=1}^{K} \sum_{j \in \mathcal{J}(k)} \left(c_{k} + (\mu_{kj} - \theta_{k})\Delta_{k}^{-}(x)\right)\psi_{kj}\right\}, \label{eq:hjb_ctmc}
\end{align}
where for $k = 1,\ldots, K$, and $x \in \mathbb{Z}_{+}^{K}$, we have
$$\Delta_{k}^{-}(x) = \tilde{V}(x) - \tilde{V}(x-e_{k}).$$
The supremum term in Equation (\ref{eq:hjb_ctmc}) characterizes the optimal policy $\psi^{*}(x)$ as solution of the linear program over the feasible set $\Psi(x)$.

\noindent \textbf{Computational Method}. To numerically solve the Bellman equation, we use the policy iteration algorithm as shown in Algorithm \ref{algorithm_policy_iteration}.

\begin{algorithm}[!htb]    
\caption{Policy Iteration Algorithm for Low-Dimensional Test Problems}
\label{algorithm_policy_iteration}
\begin{minipage}{\textwidth}
\begin{algorithmic}[1]
\Statex \textbf{Input:} Discount rate $\alpha$, state space $S_{\bar{x}}$.
\Statex \textbf{Output:} Optimal value function $\tilde{V}(x)$ and optimal policy $\psi^{*}(x)$ for all $x \in S_{\bar{x}}$
\State Initialize a feasible policy $\psi^0(x) \in \Psi(x)$ for all $x \in S_{\bar{x}}$. Set $n \leftarrow 0$.
\Repeat
    \State Solve the linear system: 
    \[
    \alpha \tilde{V}^{\psi^n}(x) = \sum_{k=1}^K c_k\left(x_k - \sum_{j \in \mathcal{J}(k)} \psi^n_{kj}(x)\right) + Q^{\psi^n} \tilde{V}^{\psi^n}(x), \quad \forall x \in S_{\bar{x}}.
    \]
    \State For each $x \in S_{\bar{x}}$ and $k = 1,\ldots,K$, compute 
    \[
    \Delta_k^-(x) = \tilde{V}^{\psi^n}(x) - \tilde{V}^{\psi^n}(x - e_k).
    \]
    \State Update policy:
    \[
    \psi^{n+1}(x) \in \arg\max_{\psi \in \Psi(x)} \sum_{k=1}^K \sum_{j \in \mathcal{J}(k)} (c_k + (\mu_{kj} - \theta_k)\Delta_k^-(x)) \psi_{kj}.
    \]
    \State $n \leftarrow n+1$
\Until{\textit{convergence}: $\psi^{n+1}(x) = \psi^n(x)$ for all $x \in S_{\bar{x}}$}
\State \Return $\tilde{V}(x)$ and $\psi^{*}(x)$
\end{algorithmic}
\end{minipage}
\end{algorithm}

\noindent \textbf{Truncating the state space.} For computational feasibility, we truncate the state space by replacing it with $S$ defined as follows:
\begin{equation*}
    S = \{x \in \mathbb{Z}^{K}_{+}: 0 \leq x_{k} \leq \bar{x}_{k} \quad \text{for} \,\, k = 1, \ldots, K\}.
\end{equation*}
To define the behavior of the Markov chain in the boundary states, we modify the transition rate matrix $Q^{\psi}$. For $K = 2$, we define the vertical boundary of the state space as $E_{1} = \{(\bar{x}_{1}, x_{2}): 0 \leq x_{2} < \bar{x}_{2}\}$. Similarly, we define the horizontal boundary of the state space as $E_{2} = \{(x_{1}, \bar{x}_{2}): 0 \leq x_{1} < \bar{x}_{1}\}$.
Then, we set 
$$ 
\lambda(x) = \begin{cases}
(0,\lambda_{2})^{\prime}, & \text{if $x \in E_{1}$},\\
(\lambda_{1},0)^{\prime}, & \text{if $x \in E_{2}$},\\
(0,0)^{\prime}, & \text{if $x = (\bar{x}_{1}, \bar{x}_{2})$},\\
(\lambda_{1}, \lambda_{2})^{\prime}, & \text{otherwise}.
\end{cases}
$$
\vspace{-5mm}
\section{Implementation details of our computational method}\label{appendix_computational_method}

We implement our method using a fully connected deep neural network. We utilize Leaky ReLU, ELU, and SiLU as choices for the activation function, adapting code from the work of \citet{han2018solving} to our setting. The implementation is carried out in Python using the PyTorch package.

\noindent\textbf{Optimizer.} We use the Adam optimizer across all test problems. At each milestone listed in Tables~\ref{hyperparameter_low} and~\ref{hyperparameter_main}, the learning rate is multiplied by the decay factor $\gamma$.

\noindent\textbf{Reference policy.} Each test problem uses a drift network trained separately to approximate either the $c\mu$ rule or FSF rule. We denote the corresponding reference policy as ``$c\mu$'' or ``$\mu$'' in the tables below.

\noindent\textbf{Enforcement of positive gradient approximations.} Where the theory implies the learned gradient should be nonnegative, we enforce this constraint in one of two ways, depending on the choice of activation function in the hidden layers. For test problems using Leaky ReLU or ELU activations, we add a negative gradient penalty term $\Lambda > 0$ to the loss function to prevent negative gradient approximations. For test problems using SiLU activations, we instead apply a softplus function at the output layer, which guarantees nonnegativity structurally; in this case the penalty $\Lambda$ is unnecessary. Both approaches produced comparable results in preliminary experiments, and we report the configuration we used for each problem.
\vspace{-5mm}
\subsection{Hyperparameters used for the test problems}\label{hyperparameters}

\begin{table}[H]
\centering
\setlength\tabcolsep{6pt}
\renewcommand{\arraystretch}{1.2}
{\small
\scalebox{0.75}{
\begin{tabular}{lcc}
\toprule
\textbf{Hyperparameters} & \textbf{2D} & \textbf{2D Variant} \\
\midrule
Number of hidden layers             & 2                            & 4 \\
Number of neurons per layer         & 50                           & 100 \\
Time discretization steps $N$       & 200                          & 200 \\
Rolling horizon $T$                 & 1                            & 0.1 \\
Time step $\Delta T = T/N$          & $1/200$                      & $1/2000$ \\
Batch size                          & 512                          & 256 \\
\midrule
Total iterations                    & 5{,}000                      & 6{,}000 \\
Initial learning rate               & $1\text{e-}2$                & $1\text{e-}3$ \\
Milestones                          & $[1000,\,3000]$              & $[2000,\,4000,\,5000]$ \\
Learning rate decay factor $\gamma$ & 0.1                          & 0.2 \\
\midrule
Reference policy                    & $\mu$                        & $c\mu$ \\
Activation function                 & Leaky ReLU ($\alpha = 0.1$)  & SiLU \\
Negative gradient penalty $\Lambda$ & 0.6                          & --- \\
Initialization                      & Kaiming                      & Kaiming \\
Optimizer                           & Adam                         & Adam \\
\bottomrule
\end{tabular}
}}
\caption{Summary of the hyperparameters used for low-dimensional test problems.}
\label{hyperparameter_low}
\end{table}

\begin{table}[H]
\centering
\setlength\tabcolsep{6pt}
\renewcommand{\arraystretch}{1.2}
{\small
\scalebox{0.75}{
\begin{tabular}{lccc}
\toprule
\textbf{Hyperparameters} & \textbf{Main} & \textbf{Variant} & \textbf{100D} \\
\midrule
Number of hidden layers             & 4                        & 4                        & 4 \\
Number of neurons per layer         & 100                      & 100                      & 100 \\
Time discretization steps $N$       & 200                      & 200                      & 200 \\
Rolling horizon $T$                 & 1                        & 0.1                      & 1 \\
Time step $\Delta T = T/N$          & $1/200$                  & $1/2000$                 & $1/200$ \\
Batch size                          & 256                      & 768                      & 1024 \\
\midrule
Total iterations                    & 15{,}000                 & 7{,}000                  & 6{,}000 \\
Initial learning rate               & $1\text{e-}2$            & $1\text{e-}3$            & $1\text{e-}3$ \\
Milestones                          & $[1000,\,5000]$          & $[2000,\,4000,\,5000]$   & $[2000,\,4000,\,5000]$ \\
Learning rate decay factor $\gamma$ & 0.1                      & 0.2                      & 0.2 \\
\midrule
Reference policy                    & $\mu$                    & $c\mu$                   & $\mu$ \\
Activation function                 & ELU                      & SiLU                     & SiLU \\
Negative gradient penalty $\Lambda$ & 0.5                      & ---                      & --- \\
Initialization                      & Kaiming                  & Kaiming                  & Kaiming \\
Optimizer                           & Adam                     & Adam                     & Adam \\
\bottomrule
\end{tabular}
}}
\caption{Summary of the hyperparameters used for the main test problem, its variant, and the 100-dimensional test problem.}
\label{hyperparameter_main}
\end{table}

\end{document}